\definecolor{mygray}{gray}{.8}
\newtheorem{theorem}{\textbf{Theorem}}
\newtheorem{definition}{\textbf{Definition}}
\newtheorem{example}{\textbf{Example}}
\newtheorem{assumption}{\textbf{Assumption}}
\newtheorem{lemma}{\textbf{Lemma}}
\newtheorem{remark}{\textbf{Remark}}
\newtheorem{procedure}{\textbf{Procedure}}
\def\BibTeX{{\rm B\kern-.05em{\sc i\kern-.025em b}\kern-.08em
    T\kern-.1667em\lower.7ex\hbox{E}\kern-.125emX}}
\begin{document}
\title{Learning event-triggered controllers for linear parameter-varying systems 
from data}
\author{Renjie Ma, \IEEEmembership{Member, IEEE},
Su Zhang, Wenjie Liu, Zhijian Hu, \IEEEmembership{Member, IEEE}, and 
Peng Shi, \IEEEmembership{Fellow, IEEE} 
\thanks{\hspace*{-0.2em}This work was supported in part by 
the National Key Research and Development Program of China 
under Grant 2023YFE0209900, the Au- stralian Research Council 
under Grant DP240101140, the National Nat- ural Science 
Foundation of China under Grant 62033005, the Natural Science 
Foundation of Heilongjiang Province under Grant LH2024F026, 
and the Natural Science Foundation of Chongqing Municapility 
under Grant CSTB2023NSCQ-MSX0625.  
}
\thanks{\hspace*{-0.2em}Renjie Ma 
and Su Zhang are with the State Key Laboratory of 
Robotics and Systems, Harbin Institute of Technology, 
Harbin, 150001, China (e-mail: renjiema@hit.edu.cn, 
23S136343@stu.hit.edu.cn).
}
\thanks{\hspace*{-0.2em}Wenjie Liu is with the School of 
Electrical and Electronic Engineering, Nanyang Technological 
University, 639798, Singapore 
(e-mail: wenjie.liu@ntu.edu.sg).}
\thanks{\hspace*{-0.2em}Zhijian Hu is with the LAAS-CNRS, 
University of Toulouse, Toulouse, 31400, France 
(e-mail: zhijian.hu@laas.fr). }
\thanks{\hspace*{-0.2em}Peng Shi is with the School of Electrical and Mechanical Engineering, The University of Adelaide, 
Adelaide SA 5005, Australia, and also with the Research and Innovation Centre, Obuda University, Budapest 1034, Hungary 
(e-mail: peng.shi@adelaide.edu.au).}
}

\maketitle

\begin{abstract}
Nonlinear dynamical behaviours in engineering applications 
can be approximated 
by linear-parameter varying (LPV) representations, 
but obtaining precise 
model knowledge to develop a control algorithm is difficult in 
practice. In this paper, 
we develop the data-driven control strategies 
for event-triggered LPV systems 
with stability verifications. 
First, we provide the theoretical analysis of 
$\theta$-persistence of excitation for LPV systems, 
which leads to the feasible 
data-based representations. 
Then, in terms of the available perturbed data, 
we derive the stability certificates for 
event-triggered LPV systems with 
the aid of Petersen's lemma in the sense of robust control, 
resulting in 
the computationally tractable semidefinite programmings, 
the feasible solutions 
of which yields the optimal gain schedulings. 
Besides, we generalize the 
data-driven event-triggered LPV control methods to the 
scenario of reference trajectory tracking, 
and discuss the robust 
tracking stability accordingly. 
Finally, we verify the effectiveness 
of our theoretical derivations by numerical simulations.  
\end{abstract}

\begin{IEEEkeywords}
Data-driven control, persistence of excitation, linear parameter-varying system, event-triggered control, convex optimization.  
\end{IEEEkeywords}


\section{Introduction}
\label{sec:introduction}

Complex dynamical systems in engineering applications, 
for example, hypersonic aircrafts, 
autonomous vehicles, as well as turbofan engines, 
are intrinsically with nonlinear 
attributes \cite{Mohammadpour2012Control}. 
There exist difficulties in analysis and 
control of such systems, by virtue of computationally 
tractable tools 
from linear time-invariant framework. 
Within this context, linear 
parameter-varying (LPV) representations, 
formulated by a linear system with 
time-varying scheduling signals, which embed nonlinear and/or 
uncertain impacts, can be utilized to address the issues of 
providing the theoretical 
guarantees on system performance \cite{Martin2023Guarantees}. 
However, obtaining model knowledge of LPV systems via  
first-principle analysis or system identification needs heavy 
computation burdens and lacks theoretical supports 
for control synthesis. 
Hence, the needs of developing control strategies of LPV 
systems directly from 
data, that can bypass a complex modeling process, 
are with far-reaching meanings.      

\subsection{Related Literature}

Allowing for the measurable and 
time-varying scheduling signals within 
bounded convex sets, LPV systems approximate the linear 
input-output relationships of nonlinear processes 
\cite{Ping2022Tube}. Note 
that such scheduling signal of a certain LPV system 
can be only achieved at the 
current instant but unknown afterwards, 
thus, LPV synthesis aims to 
develop a control strategy of the same structure, 
such that the closed-loop 
performances are guaranteed over the entire set of 
permissible scheduling 
signals \cite{Pandey2017A}. The use of quadratic Lyapunov 
functions, 
which are affine in the scheduling signal, 
leads to the robust stability 
certificate 
in terms of linear matrix inequalities (LMIs) \cite{Pandey2017A}. 
Normally, the scheduling signals are assumed to 
perform the bounded 
parameter 
variations \cite{Vargas2022Robust} and 
the bounded rates of variations 
\cite{Cox2018Affine}. 
There exists a compromise between computation 
tractability and representation power for LPV systems 
\cite{Gallegos2024Set}. 
The closed-loop performance guarantees can 
lead to an infinite number of LMIs, 
which are hard to find the feasible solutions 
for control synthesis. 
Hence, 
the relaxation techniques are preferred for 
computationally tractable 
control algorithms, for example, 
convex-hull relaxation \cite{Mulagaleti2024Parameter}, 
partial-convexity approach \cite{Cox2018Affine}, grid partition 
argument \cite{Cheng2021Modeling}, Finsler's lemma and 
affine annihilators relaxation \cite{Polcz2019Passivity}, 
and slack variable method \cite{Souza2006Robust}, to name a few. 
Apart from high computational demands induced by a large 
number of decision parameters, which are incorporated in these 
convex programmings with quadratic performance certificates, 
reducing conservatism of LPV performance analysis needs to be 
considered \cite{Meijer2024Certificates}, since 
the conservative methods 
can lead to ill-conditioning and fail to find a feasible 
solution for gain 
scheduling. Owing to the developments of aperiodic-sampling 
communication transmissions, developing event-triggered LPV 
controller to refine communication efficiency, on the 
premise of closed-loop performance guarantees, has 
attracted research attentions in recent 
years \cite{Cai2025Dynamic}. The existing results depend on 
the LPV model knowledge, how to develop an event-triggered 
LPV controller directly from data, when the model knowledge 
is unavailable, remains an open issue to be further 
investigated.  

Based on the Willems fundamental lemma of 
persistence of excitation 
\cite{Willems2005A,He2025From,Liu2024Learning}, 
the parameterization of linear feedback systems from  
data pushes forward the developments of data-dependent control 
algorithms, in which explicit system 
identifications are not required 
\cite{Persis2020Formulas}. The critical problem lies in 
involving the 
data-based performance verifications in the control 
algorithms, 
for example, linear quadratic regulator \cite{Dorfler2023On} and 
model predictive control \cite{Berberich2021Data}. 
It is noteworthy that the inevitable presence of 
external perturbations in data and/or physical plant, such as 
noise and disturbance \cite{Ren2024Event,Wang2024Periodic,
Ma2021Sparse,An2024Dynamic}, leads to the nonequivalent representation 
of actual systems. For the specific bound on perturbation, 
a set of  
systems cannot distinguish from each other in terms of the  
available data. The data batches are obtained within the 
context of 
open-loop experiments and then used for control synthesis 
\cite{Rotulo2022Online}, that targets all systems described 
by the 
same data-based representation. 
This issue motivates one of the main 
research lines of data-driven control in recent years. 
By virtue of linear 
fractional transformation \cite{Berberich2023Combining}, 
full-block matrix S-lemma \cite{Waarde2022From}, 
Petersen's lemma \cite{Bisoffi2022Data}, 
Young's inequality relaxation \cite{Persis2023Learning}, 
and 
Farkas's lemma \cite{Dai2021A}, data-dependent 
performance certificates of closed-loop systems can be 
obtained, leading to either semi-definite programmings or 
sum-of-squares programmings, and the feasibility of which 
implies an available data-driven control strategy. 
This methodology has been widely utilized to handle 
dissipativity analysis \cite{Koch2022Provably}, reachability 
assessment \cite{Bisoffi2022Learning}, online switching 
mechanism specification \cite{Rotulo2022Online}, 
transfer stabilization 
\cite{Li2024Data}, Koopman-based feedback design 
\cite{Strasser2024Koopman}, integral quadratic 
constraint-based argument \cite{Koch2021Determining}, 
performances assurance \cite{Bianchi2025Data}, and observer 
development \cite{Disaro2024On}. 
Moreover, aperiodic sampling-based 
communication transmissions can be incorporated into 
data-driven control 
algorithms, in which the time span between adjacent sampling 
instants is maximized by a user-defined triggering logic while 
preserving the closed-loop performance, leading to the 
innovative data-based event-triggered controller 
\cite{Persis2024Event} and self-triggered controller 
\cite{Wildhagen2022Data}, holding for linear time-invariant 
systems. Albeit data-driven approaches can successfully     
be applied to LPV systems, in terms of 
input-scheduling-state/output 
data, to cope with the dissipativity analysis based on 
Finsler's lemma \cite{Verhoek2023Data} and the LPV 
control gain scheduling \cite{Verhoek2024Direct}, fewer 
results have investigated the event-triggered control 
design for perturbed LPV systems from data and have extracted the 
$\theta$-persistence of excitation prerequisite for 
data-driven control of LPV systems, which leave the room for 
further study.   

\subsection{Technical Contributions}
In this paper, we establish the event-triggered control strategies 
for perturbed LPV systems from data, in order to fill the 
gap between 
aperiodic-sampling communication transmissions and 
data-driven LPV control. For the perturbed LPV 
systems, the existing persistence of excitation 
argument of linear time-invariant framework 
\cite{Clouson2022Robust}
is not applicable, such issue has not been thoroughly revealed 
in the innovative result \cite{Verhoek2023Data}. 
Although the recent work \cite{Verhoek2024The} 
has generalized the Willems fundamental lemma 
to the LPV framework, the kernel-based representation cannot 
be directly 
applied to data-driven control synthesis, and only the 
shifted-affine scheduling signal is considered therein.  
Besides, 
the data-based stability analysis and control synthesis 
herein are 
different from the existing work \cite{Verhoek2024Direct}, 
allowing for 
the deployed event-triggered communication mechanisms with 
the channel from controller to actuator and 
the perturbed LPV systems. 
The Petersen's lemma and the full-block S-procedure 
are utilized to 
derive the semidefinite programmings solvable 
at the vertices of polytopic 
set of scheduling signal as adopted in the model-based approaches  
\cite{Abbas2016A}. To some extent, we make an effort to  extend 
the methodology of learning event-triggered control of 
linear time-invariant systems \cite{Persis2024Event,Zhou2024Fuzzy,Liu2024Decentralized} to 
LPV systems. Moreover, the existing literature 
about data-driven LPV control cannot cope with the 
robust output tracking issue directly, and 
in terms of the auxiliary integral compensator technique 
\cite{Golabi2016Event}, we accordingly develop a 
learning event-triggered controller for LPV 
output tracking. The main contributions of this paper 
can be briefly summarized by the following keypoints. \newline 
\noindent $\bullet$ The 
$\theta$-persistence of 
excitation condition for LPV systems is 
explored for the first time, which reveals the smallest 
value of the length of collected data and acts as 
the prerequisite for data-driven analysis and control of 
perturbed LPV systems.  \newline
\noindent $\bullet$ The 
sufficient conditions for ensuring the closed-loop 
stability of perturbed LPV systems, 
in view of data-based representation, 
are derived by Petersen's lemma and full-block 
S-procedure, leading to the computationally tractable 
convex programming, feasible at the vertices of the polytopic set of
scheduling signal, such that  
data-based LPV control synthesis is promising.   \newline
\noindent $\bullet$ With the 
deployment of event-triggered communication 
transmissions 
in the controller-to-actuator channel, we establish the 
relationships between the stability certificate and the 
triggering parameters. Solving the data-based 
convex programming leads to the potentials of 
event-triggered 
LPV control synthesis from available data.  \newline 
\noindent $\bullet$ We extend the 
theoretical results to the scenario of reference 
tracking. By introducing the 
auxiliary integral compensator, 
we can construct an augmented LPV systems 
and then can establish 
the corresponding stability certificates, yielding the 
solvable data-dependent convex programmings, 
to develop the 
event-triggered LPV tracking controller.

\subsection{Outline of This Paper}
The problem formulation and preliminaries 
with respect to 
robust data-driven control 
of perturbed LPV systems are stated in 
Section \ref{section2}. 
The stability verification of LPV systems and 
the data-driven event-triggered LPV control synthesis 
for both 
the stabilization and the output tracking 
scenarios are discussed 
in Section \ref{section3}. 
We perform the numerical simulations 
to verify the effectiveness of theoretical 
derivations 
in Section \ref{section4}, and  
conclude this paper in Section \ref{section5}.

\subsection{Applicatory Notations}
Most of the applicatory notations  are standard, therefore, 
we only introduce the specific ones to be utilized herein. 
$\mathds{N}_{[\ell_{1},\ell_{2}]}$ represents the integer 
set $\{\ell_{1},\ell_{1}\hspace*{-0.2em}+\hspace*{-0.2em}1,
\cdots\hspace*{-0.2em},\ell_{2}\}$. $\mathds{R}_{+}$ and $\mathds{N}_{+}$ 
imply the positive real number and integer sets, respectively. 
Besides, $\mathbb{S}^{n} (\mathbb{S}_{++}^{n})$ is the set of (positive) 
symmetrical square matrices. 
The Kronecker product of $M\hspace*{-0.3em}\in\hspace*{-0.3em}\mathds{R}^{m\times n}$ 
and 
$N\hspace*{-0.3em}\in\hspace*{-0.3em}\mathds{R}^{p\times q}$ is denoted 
by $M\hspace*{-0.2em}\otimes\hspace*{-0.2em}N
\hspace*{-0.3em}\in\hspace*{-0.3em}\mathds{R}^{mp\times nq}$. 
The operator $\mathrm{Col}(X_{1},\cdots\hspace*{-0.2em},X_{\ell})$ with 
$\ell\hspace*{-0.2em}\in\mathds{N}_{+}$ denotes an augmented column vector/matrix, 
in which $X_{\imath}$, 
$\imath\hspace*{-0.2em}\in\hspace*{-0.2em}\mathds{N}_{[1,\ell]}$ is with suitable 
dimension. 
$\mathrm{Blkdiag}(x)$ represents a diagonal matrix, 
each diagonal element of which is the vector $x$, whereas  
$\hspace*{-0.1em}\mathrm{Blkdiag}(X,\hspace*{-0.1em}Y)$ represents the 
block-diagonal matrix, the diagonal elements of which are 
$X$ and $Y$ in sequence. For a matrix $G$, 
we represent its Moore-Penrose inverse by $G^{\dagger}$ if 
available. 
Given a vector $x_{k}$ and time instants 
$t_{1},t_{2}\hspace*{-0.3em}\in\hspace*{-0.3em}\mathds{N}_{+}$, we 
define 
$\hspace*{-0.1em}\|x_{k}\|_{[t_{1},t_{2}]}$ $=\hspace*{-0.2em}\sup_{k\in[t_{1},t_{2}]}\hspace*{-0.3em}\|x_{k}\|$.  
\hspace*{-0.1em}For a sequence $\hspace*{-0.1em}\{x_{k}\}_{k=0}^{\mathcal{T}-1}$, \hspace*{-0.1em}the Hankel matrix 
$\boldsymbol{\mathcal{H}}_{L}(x_{[0,\mathcal{T}-1]})$ is denoted by 
\begin{eqnarray*}
\boldsymbol{\mathcal{H}}_{L}(x_{[0,\mathcal{T}-1]})\hspace*{-0.2em}=\hspace*{-0.2em}
\left[\hspace*{-0.2em}
\begin{array}{cccc}
  x_{0}  \hspace*{-0.2em}&\hspace*{-0.2em}x_{1} \hspace*{-0.2em}&\hspace*{-0.2em}\cdots \hspace*{-0.2em}&\hspace*{-0.2em}x_{\mathcal{T}-L}\\
  x_{1} \hspace*{-0.2em}&\hspace*{-0.2em} x_{2}  \hspace*{-0.2em}&\hspace*{-0.2em}\cdots \hspace*{-0.2em}&\hspace*{-0.2em}
  x_{\mathcal{T}-L+1}\\
 \vdots \hspace*{-0.2em}&\hspace*{-0.2em} \vdots \hspace*{-0.2em}&\hspace*{-0.2em}\ddots \hspace*{-0.2em}&\hspace*{-0.2em}\vdots\\
  x_{L-1} \hspace*{-0.2em}&\hspace*{-0.2em} x_{L} \hspace*{-0.2em}&\hspace*{-0.2em}\cdots \hspace*{-0.2em}&\hspace*{-0.2em}x_{\mathcal{T}-1}
\end{array}
\hspace*{-0.2em}\right].
\end{eqnarray*}

\section{Problem Formulation}
\label{section2}

\subsection{System Description}

For the sampling instant $k\hspace*{-0.2em}\in\hspace*{-0.2em}\mathds{N}$, 
a discrete-time LPV system can be formulated by the state-space representation
\begin{eqnarray}
\hspace*{-0.6em} \boldsymbol{x}_{k+1}\hspace*{-0.3em}& \hspace*{-0.6em}=\hspace*{-0.6em} &\hspace*{-0.3em}
A_{d}(\boldsymbol{p}_{k})\boldsymbol{x}_{k}\hspace*{-0.2em}+\hspace*{-0.2em}B_{d}(\boldsymbol{p}_{k})\boldsymbol{u}_{k}
\hspace*{-0.2em}+\hspace*{-0.2em}\boldsymbol{\omega_{k}}, \notag \\
\hspace*{-0.6em} \boldsymbol{y}_{k}\hspace*{-0.3em}& \hspace*{-0.6em}=\hspace*{-0.6em} &\hspace*{-0.3em}
C_{d}(\boldsymbol{p}_{k})\boldsymbol{x}_{k}
\hspace*{-0.2em}+\hspace*{-0.2em}D_{d}
(\boldsymbol{p}_{k})\boldsymbol{u}_{k}, \label{e1}
\end{eqnarray}
where $\hspace*{-0.1em}\boldsymbol{x}_{k}\hspace*{-0.2em}\in\hspace*{-0.2em}\mathds{R}^{n}$,
$\hspace*{-0.2em}\boldsymbol{u}_{k}\hspace*{-0.2em}\in\hspace*{-0.2em}\mathds{R}^{m}$,
$\hspace*{-0.1em}\boldsymbol{y}_{k}\hspace*{-0.2em}\in\hspace*{-0.2em}\mathds{R}^{r}$, $\hspace*{-0.1em}\boldsymbol{p}_{k}\hspace*{-0.2em}\in\hspace*{-0.2em}\mathds{P}$, and
$\hspace*{-0.1em}\boldsymbol{\omega}_{k}\hspace*{-0.2em}\in\hspace*{-0.2em}\mathds{R}^{n}\hspace*{-0.1em}$ capture the system state, controlled input, measured output, scheduling signal, and external perturbation, respectively. Accordingly, the mappings
$\hspace*{-0.2em}A_{d}\hspace*{-0.2em}:\hspace*{-0.2em}\mathds{P}\hspace*{-0.3em}\mapsto\hspace*{-0.3em}\mathds{R}^{n\times n}$,
$\hspace*{-0.2em}B_{d}\hspace*{-0.2em}:\hspace*{-0.2em}\mathds{P}\hspace*{-0.3em}\mapsto\hspace*{-0.3em}\mathds{R}^{n\times m}$,
$\hspace*{-0.2em}C_{d}\hspace*{-0.2em}:\hspace*{-0.2em}\mathds{P}\hspace*{-0.3em}\mapsto\hspace*{-0.3em}\mathds{R}^{r\times n}$, and
$D_{d}\hspace*{-0.2em}:\hspace*{-0.2em}\mathds{P}\hspace*{-0.3em}\mapsto\hspace*{-0.3em}\mathds{R}^{r\times m}$ have the affine dependency on $p_{k}$, namely,
\begin{eqnarray}
\hspace*{-1.0em} A_{d}(\boldsymbol{p}_{k})\hspace*{-0.3em}& \hspace*{-0.6em}=\hspace*{-0.6em} &\hspace*{-0.4em}
A_{d0}\hspace*{-0.2em}+\hspace*{-0.3em}\sum\limits_{\imath=1}^{\ell}\boldsymbol{p}_{k}^{[\imath]}A_{d\imath},
B_{d}(\boldsymbol{p}_{k})\hspace*{-0.3em}=\hspace*{-0.3em}
B_{d0}\hspace*{-0.2em}+\hspace*{-0.3em}\sum\limits_{\imath=1}^{\ell}\boldsymbol{p}_{k}^{[\imath]}B_{d\imath}, \notag\\
\hspace*{-1.0em} C_{d}(\boldsymbol{p}_{k})\hspace*{-0.3em}& \hspace*{-0.6em}=\hspace*{-0.6em} &\hspace*{-0.4em}
C_{d0}\hspace*{-0.2em}+\hspace*{-0.3em}\sum\limits_{\imath=1}^{\ell}\boldsymbol{p}_{k}^{[\imath]}C_{d\imath},
D_{d}(\boldsymbol{p}_{k})\hspace*{-0.2em}=\hspace*{-0.2em}
D_{d0}\hspace*{-0.2em}+\hspace*{-0.3em}\sum\limits_{\imath=1}^{\ell}
\boldsymbol{p}_{k}^{[\imath]}D_{d\imath}, \label{e4}
\end{eqnarray}
where the matrices $\hspace*{-0.1em}A_{d\imath}$, $B_{d\imath}$, $C_{d\imath}$, and $\hspace*{-0.1em}D_{d\imath}$ for $\hspace*{-0.1em}\imath\hspace*{-0.2em}\in\hspace*{-0.2em}\mathds{N}_{[0,\ell]}$
are with the appropriate dimensions, and the superscript $[\imath]$ implies the $\imath$-th element of the scheduling signal $\boldsymbol{p}_{k}$. 
Allowing for the separation of coefficient matrices in \eqref{e4}, 
we further rewrite the LPV system \eqref{e1} with the following form
\begin{eqnarray}
\hspace*{-0.6em} \boldsymbol{x}_{k+1}\hspace*{-0.3em}& \hspace*{-0.6em}=\hspace*{-0.6em} &\hspace*{-0.3em}
\mathcal{A}_{d}\boldsymbol{z}_{k}\hspace*{-0.2em}+\hspace*{-0.2em}\mathcal{B}_{d}\boldsymbol{\bar{u}}_{k}
\hspace*{-0.2em}+\hspace*{-0.2em}\boldsymbol{\omega}_{k},
\notag \\
\hspace*{-0.6em} \boldsymbol{y}_{k}\hspace*{-0.3em}& \hspace*{-0.6em}=\hspace*{-0.6em} &\hspace*{-0.3em}
\mathcal{C}_{d}\boldsymbol{z}_{k}\hspace*{-0.2em}+\hspace*{-0.2em}\mathcal{D}_{d}\boldsymbol{\bar{u}}_{k}, \label{e6}
\end{eqnarray}
where $\boldsymbol{z}_{k}\hspace*{-0.2em}=\hspace*{-0.2em}\mathrm{Col}(\boldsymbol{x}_{k},\boldsymbol{p}_{k}\hspace*{-0.2em}\otimes\hspace*{-0.2em} \boldsymbol{x}_{k})$,
$\boldsymbol{\bar{u}}_{k}\hspace*{-0.2em}=\hspace*{-0.2em}\mathrm{Col}(\boldsymbol{u_{k}},\boldsymbol{p}_{k}\hspace*{-0.2em}\otimes\hspace*{-0.2em} \boldsymbol{u}_{k})$, and
\begin{eqnarray*}
\hspace*{-0.6em}\mathcal{A}_{d}\hspace*{-0.3em}& \hspace*{-0.6em}=\hspace*{-0.6em} &\hspace*{-0.3em}
[A_{d0},A_{d1},\cdots,A_{d\ell}], \mathcal{B}_{d}\hspace*{-0.2em}=\hspace*{-0.3em}
[B_{d0},B_{d1},\cdots,B_{d\ell}], \\
\hspace*{-0.6em}\mathcal{C}_{d}\hspace*{-0.3em}& \hspace*{-0.6em}=\hspace*{-0.6em} &\hspace*{-0.3em}
[C_{d0},C_{d1},\cdots,C_{d\ell}],\mathcal{D}_{d}\hspace*{-0.2em}=\hspace*{-0.3em}
[D_{d0},D_{d1},\cdots,D_{d\ell}]. 
\end{eqnarray*}
We collect the data for the LPV system  \eqref{e6}, leading to the data set 
$\mathbb{D}_{s}\hspace*{-0.3em}=\hspace*{-0.3em}\{\boldsymbol{u}_{k},\boldsymbol{p}_{k},\boldsymbol{x}_{k},
\boldsymbol{\omega}_{k}\}_{k=0}^{\mathcal{T}}$. Therefore, we can 
construct the following data matrices for 
a specific time window 
$\mathcal{T}\hspace*{-0.2em}\in\hspace*{-0.2em}\mathds{N}_{+}$, 
\begin{subequations}
\label{ee4}
\begin{align}
\hspace*{-0.6em}\boldsymbol{U}& \hspace*{-0.3em}=\hspace*{-0.3em} 
[\boldsymbol{u}_{0},\cdots,\boldsymbol{u}_{\mathcal{T}-1}]\hspace*{-0.2em}\in\hspace*{-0.2em}\mathds{R}^{m\times\mathcal{T}}\hspace*{-0.2em}, \\
\hspace*{-0.6em}\boldsymbol{U}_{\hspace*{-0.2em}P}& \hspace*{-0.3em}=\hspace*{-0.3em} 
[\boldsymbol{p}_{0}\hspace*{-0.2em}\otimes\hspace*{-0.2em}\boldsymbol{u}_{0},\cdots,
\boldsymbol{p}_{\mathcal{T}-1}\hspace*{-0.2em}\otimes\hspace*{-0.2em}\boldsymbol{u}_{\mathcal{T}-1}]\hspace*{-0.2em}\in\hspace*{-0.2em}\mathds{R}^{\ell m\times\mathcal{T}}\hspace*{-0.2em}, \\
\hspace*{-0.6em}\boldsymbol{X}& \hspace*{-0.3em}=\hspace*{-0.3em} 
[\boldsymbol{x}_{0},\cdots,\boldsymbol{x}_{\mathcal{T}-1}]\hspace*{-0.2em}\in\hspace*{-0.2em}\mathds{R}^{n\times\mathcal{T}}\hspace*{-0.2em}, \\
\hspace*{-0.6em}\boldsymbol{X}_{\hspace*{-0.2em}P}& \hspace*{-0.3em}=\hspace*{-0.3em} 
[\boldsymbol{p}_{0}\hspace*{-0.2em}\otimes\hspace*{-0.2em}\boldsymbol{x}_{0},\cdots,
\boldsymbol{p}_{\mathcal{T}-1}\hspace*{-0.2em}\otimes\hspace*{-0.2em}\boldsymbol{x}_{\mathcal{T}-1}]\hspace*{-0.2em}\in\hspace*{-0.2em}\mathds{R}^{\ell n\times\mathcal{T}}\hspace*{-0.2em}, \\
\hspace*{-0.6em}\boldsymbol{X}_{+}& \hspace*{-0.3em}=\hspace*{-0.3em} 
[\boldsymbol{x}_{1},\cdots,\boldsymbol{x}_{\mathcal{T}}]\hspace*{-0.2em}\in\hspace*{-0.2em}\mathds{R}^{n\times\mathcal{T}}\hspace*{-0.2em},\\
\hspace*{-0.6em}\boldsymbol{W}& \hspace*{-0.3em}=\hspace*{-0.3em} 
[\boldsymbol{\omega}_{0},\cdots,\boldsymbol{\omega}_{\mathcal{T}-1}]\hspace*{-0.2em}\in\hspace*{-0.2em}\mathds{R}^{n\times\mathcal{T}}\hspace*{-0.2em}, \\
\hspace*{-0.6em}\boldsymbol{W}_{\hspace*{-0.2em}P}& \hspace*{-0.3em}=\hspace*{-0.3em} 
[\boldsymbol{p}_{0}\hspace*{-0.2em}\otimes\hspace*{-0.2em}\boldsymbol{\omega}_{0},\cdots,
\boldsymbol{p}_{\mathcal{T}-1}\hspace*{-0.2em}\otimes\hspace*{-0.2em}\boldsymbol{\omega}_{\mathcal{T}-1}]\hspace*{-0.2em}\in\hspace*{-0.2em}\mathds{R}^{\ell n\times\mathcal{T}}\hspace*{-0.2em}, \\
\hspace*{-0.6em}\boldsymbol{Y}& \hspace*{-0.3em}=\hspace*{-0.3em} 
[\boldsymbol{y}_{0},\cdots,\boldsymbol{y}_{\mathcal{T}-1}]\hspace*{-0.2em}\in\hspace*{-0.2em}\mathds{R}^{r\times\mathcal{T}}\hspace*{-0.2em},
\end{align}
\end{subequations}
based on which, the LPV system \eqref{e6} can then be described by the data-based representation 
\begin{eqnarray}
\hspace*{-0.6em} \boldsymbol{X}_{+}\hspace*{-0.3em}& \hspace*{-0.6em}=\hspace*{-0.6em} &\hspace*{-0.3em}
\mathcal{A}_{d}\boldsymbol{Z}\hspace*{-0.1em}+\hspace*{-0.1em}\mathcal{B}_{d}\boldsymbol{\overline{U}}\hspace*{-0.1em}+\hspace*{-0.1em}
\boldsymbol{W}, \notag \\
\hspace*{-0.6em} \boldsymbol{Y}\hspace*{-0.3em}& \hspace*{-0.6em}=\hspace*{-0.6em} &\hspace*{-0.3em}
\mathcal{C}_{d}\boldsymbol{Z}\hspace*{-0.1em}+\hspace*{-0.1em}\mathcal{D}_{d}\boldsymbol{\overline{U}}, \label{e8}
\end{eqnarray}
where $\hspace*{-0.1em}\boldsymbol{Z}\hspace*{-0.3em}=\hspace*{-0.3em}\mathrm{Col}(\boldsymbol{X}\hspace*{-0.1em},\hspace*{-0.1em}\boldsymbol{X}_{\hspace*{-0.2em}P})\hspace*{-0.1em}$ and
$\hspace*{-0.1em}\boldsymbol{\overline{U}}\hspace*{-0.3em}=\hspace*{-0.3em}\mathrm{Col}(\boldsymbol{U},\boldsymbol{U}_{\hspace*{-0.2em}P})$. The perturbation $\boldsymbol{W}$ is inevitably embedded in the data set $\mathbb{D}_{s}$, and it normally suffers from the 
bounded energy restriction in practice, which leads to the 
following commonly-seen assumption.

\begin{assumption}
\label{ass1}
For the external perturbation $\boldsymbol{\omega}_{k}$ with $k\hspace*{-0.2em}\in\hspace*{-0.2em}\mathds{N}$, it follows that
$\boldsymbol{\omega}_{k}\hspace*{-0.3em}\in\hspace*{-0.3em}\mathds{B}_{\delta}\hspace*{-0.3em}=\hspace*{-0.3em}\{\boldsymbol{\omega}|\|\boldsymbol{\omega}\|
\hspace*{-0.3em}\leq\hspace*{-0.3em}\delta\}$ holding for $\delta\hspace*{-0.2em}\in\hspace*{-0.2em}\mathds{R}_{+}$, which
also implies that $\boldsymbol{W}\hspace*{-0.1em}\boldsymbol{W}^{\top}\hspace*{-0.4em}\preceq\hspace*{-0.2em}\Delta\Delta^{\hspace*{-0.1em}\top}\hspace*{-0.1em}$ is satisfied  with
$\Delta\hspace*{-0.3em}=\hspace*{-0.3em}\sqrt{\mathcal{T}}\delta I_{n}$.
\end{assumption}

\subsection{Heuristic Inspiration}

\begin{definition}
\hspace*{-0.2em}We suppose that the LPV system \eqref{e8} is 
exponentially input-to-state practical stable, if the 
condition 
\begin{eqnarray}
\hspace*{-0.6em} \|\boldsymbol{x}_{k}\|\hspace*{-0.3em}& \hspace*{-0.6em}\leq\hspace*{-0.6em} &\hspace*{-0.3em}
\alpha_{1}e^{-\alpha_{2}}\|\boldsymbol{x}_{0}\|+\gamma(\|\boldsymbol{\omega}_{s}\|_{[0,k-1]})+\alpha_{3}\nu \label{m1}
\end{eqnarray}
holds with the scalars $\nu, \alpha_{\jmath}\hspace*{-0.2em}\in\hspace*{-0.2em}\mathds{R}_{+},\jmath\hspace*{-0.2em}\in\hspace*{-0.2em}\mathds{N}_{[1,3]}$ and the $\mathcal{K}$-function $\gamma$. 
Besides, the system \eqref{e8} is exponentially input-to-state stable (ISS) if the inequality \eqref{m1} holds with $\nu\hspace*{-0.2em}=\hspace*{-0.2em}0$. 
\end{definition}

\begin{definition}
\cite{Clouson2022Robust} The sequence 
$\hspace*{-0.1em}\{\boldsymbol{x}_{k}\}_{k=0}^{\mathcal{T}-1}
\hspace*{-0.1em}$ is $\theta$-persistence of excitation of 
order $L\hspace*{-0.2em}\in\hspace*{-0.2em}\mathbb{N}_{+}$, if the minimum singular value of Hankel matrix 
$\boldsymbol{\mathcal{H}}_{L}
(\boldsymbol{x}_{[0,\mathcal{T}-1]})$ is no less 
than $\theta\hspace*{-0.2em}\in\hspace*{-0.2em}\mathbb{R}_{+}$. 
\end{definition}

\begin{lemma}
\label{lemm1}
Supposing that the LPV system \eqref{e8} is controllable, and the perturbation $\hspace*{-0.1em}\boldsymbol{W}\hspace*{-0.1em}$ satisfies Assumption\hspace*{-0.1em} \ref{ass1}, the control input sequence $\{\mathcal{U}_{k}\}_{k=0}^{\mathcal{T}-1}$ is 
$\theta$-persistently exciting of order $(1\hspace*{-0.2em}+\hspace*{-0.2em}\ell)n\hspace*{-0.2em}+\hspace*{-0.2em}1$, that is, 
$\lambda_{\min}(\boldsymbol{\mathcal{H}}_{(1+\ell)n+1}(\mathcal{U}_{[0,\mathcal{T}-1]}))\hspace*{-0.2em}\geq\hspace*{-0.2em}\theta$ is satisfied with 
$\mathcal{T}\hspace*{-0.2em}\geq\hspace*{-0.2em}n(1\hspace*{-0.2em}+\hspace*{-0.2em}\ell)(1\hspace*{-0.2em}+\hspace*{-0.2em}
m(1\hspace*{-0.2em}+\hspace*{-0.2em}\ell))\hspace*{-0.2em}-\hspace*{-0.2em}1$, which leads to 
\begin{eqnarray}
\hspace*{-0.6em} \mathrm{Rank}(\Theta)\hspace*{-0.3em}& \hspace*{-0.6em}=\hspace*{-0.6em} &\hspace*{-0.3em}
(1\hspace*{-0.2em}+\hspace*{-0.2em}\ell)(n\hspace*{-0.2em}+\hspace*{-0.2em}m), \label{e9}
\end{eqnarray}
where $\Theta\hspace*{-0.2em}=\hspace*{-0.2em}\mathrm{Col}(\boldsymbol{U},\boldsymbol{U}_{\hspace*{-0.2em}P},\boldsymbol{X},\boldsymbol{X}_{\hspace*{-0.2em}P})$ 
and $\mathcal{U}_{k}=\mathrm{Col}(\boldsymbol{u}_{k},\boldsymbol{p}_{k}\hspace*{-0.2em}\otimes\hspace*{-0.2em}\boldsymbol{u}_{k})$.
\end{lemma}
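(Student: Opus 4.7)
The result is an LPV generalisation of the Willems fundamental lemma: the lifted form \eqref{e6} can be read as a structured linear system whose effective state is $z_k\in\mathds{R}^{(1+\ell)n}$ and whose effective input is $\bar u_k\in\mathds{R}^{(1+\ell)m}$. Accordingly, the natural persistence-of-excitation order for the lifted system is $(1+\ell)n+1$, which is exactly the hypothesis on $\mathcal{U}_k=\bar u_k$, and the target rank $(1+\ell)(n+m)$ is precisely the height of $\Theta$. The plan is to argue by contradiction: any rank deficiency of $\Theta$ yields a nontrivial polynomial annihilator for the input Hankel, contradicting $\lambda_{\min}(\boldsymbol{\mathcal{H}}_{(1+\ell)n+1}(\mathcal{U}_{[0,\mathcal{T}-1]}))\geq\theta>0$.

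First I would fix a left-null vector $[\eta^{\top},\mu^{\top}]\neq 0$ with $\eta\in\mathds{R}^{(1+\ell)m}$ and $\mu\in\mathds{R}^{(1+\ell)n}$ such that $\eta^{\top}\bar u_k+\mu^{\top} z_k=0$ for every $k\in\mathds{N}_{[0,\mathcal{T}-1]}$. Then, exploiting the data-level dynamics $X_{+}=\mathcal{A}_d Z+\mathcal{B}_d\bar U+W$ together with the Kronecker decomposition $z_{k+1}=\mathrm{Col}(x_{k+1},p_{k+1}\otimes x_{k+1})$, I would inductively express $\mu^{\top} z_{k+j}$ as a linear combination of $\bar u_k,\bar u_{k+1},\ldots,\bar u_{k+j-1}$ modulo noise terms $\omega_k,\ldots,\omega_{k+j-1}$ that are bounded by Assumption \ref{ass1}. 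Controllability of \eqref{e8} ensures that after at most $(1+\ell)n$ such propagation steps the coefficient matrices span $\mathds{R}^{(1+\ell)n}$, in the spirit of a Cayley--Hamilton closure argument. Substituting the resulting expression into the shifted annihilator at time $k+(1+\ell)n$ produces a nonzero linear combination of $\bar u_k,\ldots,\bar u_{k+(1+\ell)n}$ whose magnitude is $O(\delta)$; tuning $\theta$ above this residual then contradicts $\theta$-PE of order $(1+\ell)n+1$.

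The data-length bound $\mathcal{T}\geq n(1+\ell)(1+m(1+\ell))-1$ is verified by a direct dimension count: the Hankel $\boldsymbol{\mathcal{H}}_{(1+\ell)n+1}(\mathcal{U}_{[0,\mathcal{T}-1]})$ has $(1+\ell)n\cdot(1+\ell)m$ rows and $\mathcal{T}-(1+\ell)n$ columns (up to the off-by-one adjustment matching the stated inequality), so $\mathcal{T}$ must at least be of the claimed order for the PE rank hypothesis to be attainable.

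The principal obstacle I anticipate is the iterative propagation in the presence of a time-varying scheduling signal: expressing $z_{k+1}$ in terms of $(z_k,\bar u_k)$ requires inserting $p_{k+1}$ into a Kronecker factor that is not natively present in $\bar u_k$, so the clean recursion $x\mapsto Ax+Bu$ available in the LTI Willems argument is no longer closed. The annihilator $\mu$ must therefore be split block-wise along the $(1+\ell)$ Kronecker factors and propagated carefully so that cross-terms involving $p_k\otimes p_{k+1}$ are absorbed without creating spurious rank loss. This is also the point at which the precise LPV notion of controllability (typically phrased as a uniform reachability condition over the vertices of $\mathds{P}$) must be invoked to guarantee that the induction terminates after exactly $(1+\ell)n$ steps, matching the declared PE order.
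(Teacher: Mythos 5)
Your plan is a contradiction/annihilator argument in the spirit of the original Willems fundamental lemma, and its load-bearing step is precisely the one that fails in the LPV setting: the lifted dynamics are not closed in the lifted state. You need to propagate $\mu^{\top}z_{k+j}$ forward using constant matrices so that a Cayley--Hamilton-type argument terminates after $(1+\ell)n$ steps, but $z_{k+1}=\mathrm{Col}(\boldsymbol{x}_{k+1},\boldsymbol{p}_{k+1}\otimes\boldsymbol{x}_{k+1})$ involves the \emph{future} scheduling value multiplying the state, so $z_{k+1}$ is not a fixed linear function of $(z_k,\bar u_k)$; the recursion coefficients depend on the whole scheduling trajectory, there is no single characteristic polynomial to invoke, and ``controllability of \eqref{e8}'' (a condition on the constant pair $(\mathcal{A}_d,\mathcal{B}_d)$) does not by itself give the uniform, trajectory-independent reachability you would need for the induction to close in exactly $(1+\ell)n$ steps. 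You flag this yourself as the principal obstacle, but you do not resolve it, and resolving it is the entire content of the lemma; as written the proof does not go through. A secondary, quantitative gap: since $\Theta$ contains the \emph{noisy} states, your exact left-annihilator of $\Theta$ must be pushed through the perturbed dynamics, and turning the resulting ``$O(\delta)$ residual'' into a violation of $\lambda_{\min}(\boldsymbol{\mathcal{H}}_{(1+\ell)n+1})\geq\theta$ requires an explicit bound on the coefficients and an explicit threshold relating $\theta$, $\delta$ and the data length --- you gesture at ``tuning $\theta$'' but never produce this bound. (Your row count for the Hankel matrix is also off: it has $((1+\ell)n+1)(1+\ell)m$ rows, not $(1+\ell)n\cdot(1+\ell)m$.)

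For contrast, the paper avoids the propagation-of-annihilators route entirely. It splits $\Theta=\Theta_{\bar x}+\Theta_{\bar\omega}$, where $\Theta_{\bar x}$ is built from the perturbation-free trajectory $\bar{\boldsymbol{x}}_{k+1}=\mathcal{A}_d\bar{\boldsymbol{z}}_k+\mathcal{B}_d\bar{\boldsymbol{u}}_k$ started at $\bar{\boldsymbol{x}}_0=\boldsymbol{x}_0$, and $\Theta_{\bar\omega}$ collects the cumulative effect of the noise, which is bounded explicitly as $\|\widetilde{\boldsymbol{W}}\|\leq c_{\omega}\delta$ by unrolling \eqref{c3} under Assumption \ref{ass1}. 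The quantitative PE-to-rank step is then outsourced to the robust fundamental lemma \cite[Theorem 3.1]{Clouson2022Robust}, which gives $\lambda_{\min}(\Theta_{\bar x})\geq\theta\rho/\sqrt{(1+\ell)n+1}$, and the conclusion follows from the singular-value perturbation inequality $\lambda_{\min}(\boldsymbol{M}+\boldsymbol{N})\geq\lambda_{\min}(\boldsymbol{M})-\|\boldsymbol{N}\|$ once $\theta>\sqrt{(1+\ell)n+1}\,c_{\omega}\delta/\rho$. If you want to salvage your approach, you would have to prove an LPV analogue of the noise-free fundamental lemma from scratch (handling the scheduling-dependent, non-shift-invariant lifting), which is a substantially harder task than the perturbation argument the paper actually uses.
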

\begin{proof}
\hspace*{-0.2em}We partition the matrix $\Theta\hspace*{-0.3em}=\hspace*{-0.2em}\Theta_{\bar{x}}
\hspace*{-0.2em}+\hspace*{-0.2em}\Theta_{\bar{\omega}}$, where $\Theta_{\bar{x}}\hspace*{-0.3em}=$
$\mathrm{Col}(\boldsymbol{U}\hspace*{-0.1em},\boldsymbol{U}_{\hspace*{-0.2em}P}\hspace*{-0.1em},\overline{\boldsymbol{X}}\hspace*{-0.1em},\overline{\boldsymbol{X}}_{\hspace*{-0.2em}P})$ and
$\Theta_{\bar{\omega}}\hspace*{-0.3em}=\hspace*{-0.2em}\mathrm{Col}(\boldsymbol{0},\boldsymbol{0},\overline{\boldsymbol{W}}\hspace*{-0.1em},\overline{\boldsymbol{W}}_{\hspace*{-0.2em}P})$ holding with
$\overline{\boldsymbol{X}}\hspace*{-0.2em}=\hspace*{-0.2em}[
\bar{\boldsymbol{x}}_{0},\cdots,\bar{\boldsymbol{x}}_{\mathcal{T}-1}]$ and $\bar{\boldsymbol{x}}_{0}\hspace*{-0.3em}=\hspace*{-0.2em}\boldsymbol{x}_{0}$. 
Hence, the perturbation-free LPV system can be represented with the form
\begin{eqnarray}
\hspace*{-0.6em} \bar{\boldsymbol{x}}_{k+1}\hspace*{-0.3em}& \hspace*{-0.6em}=\hspace*{-0.6em} &\hspace*{-0.3em}
\mathcal{A}_{d}\bar{\boldsymbol{z}}_{k}\hspace*{-0.2em}+\hspace*{-0.2em}\mathcal{B}_{d}\bar{\boldsymbol{u}}_{k},
\label{c1}
\end{eqnarray}
where $\bar{\boldsymbol{z}}_{k}\hspace*{-0.3em}=\hspace*{-0.3em}\mathrm{Col}(\bar{\boldsymbol{x}}_{k},\hspace*{-0.1em}\boldsymbol{p}_{k}\hspace*{-0.1em}\otimes\hspace*{-0.1em} \bar{\boldsymbol{x}}_{k}\hspace*{-0.1em})\hspace*{-0.1em}$. We describe 
$\overline{\boldsymbol{W}}\hspace*{-0.3em}=\hspace*{-0.3em}[\bar{\boldsymbol{\omega}}_{0},\cdots\hspace*{-0.1em},\bar{\boldsymbol{\omega}}_{\mathcal{T}-1}]$
with $\bar{\boldsymbol{\omega}}_{0}\hspace*{-0.2em}=\hspace*{-0.2em}0$ and 
$\bar{\boldsymbol{\omega}}_{k+1}\hspace*{-0.2em}=\hspace*{-0.2em}
\boldsymbol{x}_{k+1}\hspace*{-0.2em}-\hspace*{-0.2em}\bar{\boldsymbol{x}}_{k+1}$, then 
$\widetilde{\boldsymbol{W}}\hspace*{-0.2em}=\hspace*{-0.2em}\mathrm{Col}(\overline{\boldsymbol{W}}\hspace*{-0.1em},\overline{\boldsymbol{W}}_{\hspace*{-0.2em}P})$ denotes the cumulative effect induced by external perturbation. 
The data matrices $\hspace*{-0.1em}\overline{\boldsymbol{X}}_{\hspace*{-0.2em}P}\hspace*{-0.1em}$ and $\hspace*{-0.1em}\overline{\boldsymbol{W}}_{\hspace*{-0.2em}P}\hspace*{-0.1em}$ perform the same structures as 
$\boldsymbol{X}_{\hspace*{-0.2em}P}$ and $\hspace*{-0.1em}\boldsymbol{W}_{\hspace*{-0.2em}P}\hspace*{-0.1em}$, by replacing the element  
$(\boldsymbol{x}_{k},\boldsymbol{\omega}_{k})$ with $(\bar{\boldsymbol{x}}_{k},\bar{\boldsymbol{\omega}}_{k})$, respectively. 
Therefore, the relationship between the sampled practical states and the nominal ones can be depicted as below, 
\begin{eqnarray*}
\hspace*{-0.6em} \boldsymbol{x}_{0}\hspace*{-0.3em}& \hspace*{-0.6em}=\hspace*{-0.6em} &\hspace*{-0.3em}\bar{\boldsymbol{x}}_{0}, \notag \\
\hspace*{-0.6em} \boldsymbol{x}_{1}\hspace*{-0.3em}& \hspace*{-0.6em}=\hspace*{-0.6em} &\hspace*{-0.3em}\bar{\boldsymbol{x}}_{1}\hspace*{-0.2em}+\hspace*{-0.2em}\boldsymbol{\omega}_{0}, \notag \\
\hspace*{-0.6em} \boldsymbol{x}_{2}\hspace*{-0.3em}& \hspace*{-0.6em}=\hspace*{-0.6em} &\hspace*{-0.3em}\bar{\boldsymbol{x}}_{2}
\hspace*{-0.2em}+\hspace*{-0.2em}\overline{\mathcal{A}}_{1}\boldsymbol{\omega}_{0}\hspace*{-0.2em}+\hspace*{-0.2em}\boldsymbol{\omega}_{1}, \notag \\
\hspace*{-0.6em} \boldsymbol{x}_{3}\hspace*{-0.3em}& \hspace*{-0.6em}=\hspace*{-0.6em} &\hspace*{-0.3em}\bar{\boldsymbol{x}}_{3}
\hspace*{-0.2em}+\hspace*{-0.2em}\overline{\mathcal{A}}_{2}\overline{\mathcal{A}}_{1}\boldsymbol{\omega}_{0}\hspace*{-0.2em}+\hspace*{-0.2em}
\overline{\mathcal{A}}_{2}\boldsymbol{\omega}_{1}\hspace*{-0.2em}+\hspace*{-0.2em}\boldsymbol{\omega}_{2}, \notag\\
\hspace*{-0.6em} \cdots \hspace*{-0.3em}& \hspace*{-0.6em}\hspace*{-0.6em} &\hspace*{-0.3em}\cdots \notag \\
\hspace*{-0.6em} \boldsymbol{x}_{\mathcal{T}-1}\hspace*{-0.3em}& \hspace*{-0.6em}=\hspace*{-0.6em} &\hspace*{-0.3em}\bar{\boldsymbol{x}}_{\mathcal{T}-1}
\hspace*{-0.2em}+\hspace*{-0.2em}\sum\nolimits_{\imath=1}^{\mathcal{T}-2}\left(\prod\nolimits_{\jmath=i}^{\mathcal{T}-2}\overline{\mathcal{A}}_{\jmath}\right)
\boldsymbol{\omega}_{\imath-1}
\hspace*{-0.2em}+\hspace*{-0.2em}\boldsymbol{\omega}_{\mathcal{T}-2},
\label{c2}
\end{eqnarray*}
where $\overline{\mathcal{A}}_{\jmath}\hspace*{-0.2em}=\hspace*{-0.2em}
A_{d0}\hspace*{-0.1em}+\hspace*{-0.3em}\sum_{\imath=1}^{\ell}p_{\jmath}^{[\imath]}A_{d\imath}$. Accordingly, we further obtain 
\begin{eqnarray}
\hspace*{-0.6em} \bar{\boldsymbol{\omega}}_{k+1}\hspace*{-0.3em}& \hspace*{-0.6em}=\hspace*{-0.6em} &\hspace*{-0.3em}\sum
\nolimits_{\imath=1}^{k}\left(\prod\nolimits_{\jmath=i}^{k}\overline{\mathcal{A}}_{\jmath}\right)
\boldsymbol{\omega}_{\imath-1}
\hspace*{-0.2em}+\hspace*{-0.2em}\boldsymbol{\omega}_{k},
\label{c3}
\end{eqnarray}
holding for $\hspace*{-0.1em}k\hspace*{-0.3em}\in\hspace*{-0.3em}\mathds{N}_{[0,\mathcal{T}-2]}$. \hspace*{-0.1em}Based on \hspace*{-0.1em}Assumption\hspace*{-0.1em} \ref{ass1}, we devote to confine the norm of $\widetilde{\boldsymbol{W}}$ to a boundary related to the constant $\delta\hspace*{-0.2em}\in\hspace*{-0.2em}\mathds{R}_{+}$. Allowing for the norm relation of Kronecker product of two matrices, we can obtain   
\begin{eqnarray}
\hspace*{-0.6em} \|\widetilde{\boldsymbol{W}}\|\hspace*{-0.3em}& \hspace*{-0.6em}\leq\hspace*{-0.6em} &\hspace*{-0.3em} \sum\nolimits_{\imath=0}^{\mathcal{T}-1}\|\bar{\boldsymbol{\omega}}_{\imath}\|\hspace*{-0.2em}+\hspace*{-0.2em}
\sum\nolimits_{\imath=0}^{\mathcal{T}-1}\|\boldsymbol{p}_{\imath}\otimes\bar{\boldsymbol{\omega}}_{\imath}\|\notag \\
\hspace*{-0.6em} \hspace*{-0.3em}& \hspace*{-0.6em}=\hspace*{-0.6em} &\hspace*{-0.3em} 
\sum\nolimits_{\imath=0}^{\mathcal{T}-1}\|\bar{\boldsymbol{\omega}}_{\imath}\|\hspace*{-0.2em}+\hspace*{-0.2em}
\sum\nolimits_{\imath=0}^{\mathcal{T}-1}\|\boldsymbol{p}_{\imath}\|\|\bar{\boldsymbol{\omega}}_{\imath}\| \notag\\
\hspace*{-0.6em} \hspace*{-0.3em}& \hspace*{-0.6em}=\hspace*{-0.6em} &\hspace*{-0.3em} 
\sum\nolimits_{\imath=0}^{\mathcal{T}-1}(1+\|\boldsymbol{p}_{\imath}\|)\|\bar{\boldsymbol{\omega}}_{\imath}\|
\label{c4}
\end{eqnarray}
Then, combining \eqref{c3} with \eqref{c4} further leads to 
\begin{eqnarray*}
\hspace*{-0.6em} \|\widetilde{\boldsymbol{W}}\|\hspace*{-0.3em}& \hspace*{-0.6em}=\hspace*{-0.6em} &\hspace*{-0.3em}  
\sum\nolimits_{\imath=0}^{\mathcal{T}-1}(1\hspace*{-0.2em}+\hspace*{-0.2em}\|\boldsymbol{p}_{\imath}\|)
\left\|\sum_{\jmath=1}^{\imath-1}\left(\prod_{h=\jmath}^{\imath-1}\overline{\mathcal{A}}_{h}\right)\boldsymbol{\omega}_{\jmath-1}
\hspace*{-0.2em}+\hspace*{-0.2em}\boldsymbol{\omega}_{\imath-1}\right\| \notag \\
\hspace*{-0.6em} \hspace*{-0.3em}& \hspace*{-0.6em}\leq\hspace*{-0.6em} &\hspace*{-0.3em} 
\sum\nolimits_{\imath=0}^{\mathcal{T}-1}(1\hspace*{-0.2em}+\hspace*{-0.2em}\|\boldsymbol{p}_{\imath}\|)
\left[\left(\sum_{\jmath=1}^{\imath-1}\left(\prod_{h=\jmath}^{\imath-1}\|\overline{\mathcal{A}}_{h}\|\right)\hspace*{-0.2em} \right)
\hspace*{-0.2em}+\hspace*{-0.2em}1 \right]\hspace*{-0.2em}\delta, 
\label{c5}
\end{eqnarray*}
and the right-hand side of which can then be simply denoted by $ c_{\omega}\delta$. In view of \cite[Theorem 3.1]{Clouson2022Robust}, we can obtain that the minimum singular value of $\Theta_{\bar{x}}$ satisfies 
$\lambda_{\mathrm{min}}(\Theta_{\bar{x}})\hspace*{-0.2em}\geq\hspace*{-0.2em}\frac{\theta\rho}
{\sqrt{(1\hspace*{-0.1em}+\hspace*{-0.1em}\ell)n\hspace*{-0.1em}+\hspace*{-0.1em}1}}$ with 
$\rho\hspace*{-0.2em}\in\hspace*{-0.2em}\mathbb{R}_{+}$ implying an internal parameter of control system 
$(\mathcal{A}_{d},\mathcal{B}_{d})$. 
If $\theta\hspace*{-0.2em}>\hspace*{-0.3em}\sqrt{(1\hspace*{-0.2em}+\hspace*{-0.2em}\ell)n\hspace*{-0.2em}+\hspace*{-0.2em}1}
c_{\omega}\delta/\rho$, then it follows that $\lambda_{\mathrm{min}}$ $\Theta_{\bar{x}}\hspace*{-0.2em}>\hspace*{-0.2em}
c_{\omega}\delta\hspace*{-0.2em}\geq\hspace*{-0.2em}\|\widetilde{\boldsymbol{W}}\|\hspace*{-0.3em}=\hspace*{-0.3em}\|\Theta_{\omega}\|$. 
Note that $\lambda_{\mathrm{min}}(\boldsymbol{M}\hspace*{-0.3em}+\hspace*{-0.3em}\boldsymbol{N})\hspace*{-0.2em}\geq\hspace*{-0.2em}
\lambda_{\mathrm{min}}(\boldsymbol{M})$ $-\|\boldsymbol{N}\|$ holds for matrices $\boldsymbol{M}$ and $\boldsymbol{N}$ with the same dimension. Hence, we obtain $\lambda_{\min}(\Theta)\hspace*{-0.2em}\geq\hspace*{-0.2em}
\lambda_{\mathrm{min}}(\Theta_{\bar{x}})\hspace*{-0.2em}-\hspace*{-0.2em}\|\Theta_{\omega}\|$, which implies that 
the matrix $\Theta$ is full-row rank, such that \eqref{e9} holds. 
\end{proof}

Normally, achieving model information $\mathcal{A}_{d}$ and $\mathcal{B}_{d}$ through system identification is with high computation 
complexity. We aim to develop the robust controller 
directly from the available data $\mathbb{D}_{s}$. 
Within this context, LPV system \eqref{e1} 
is expressed by 
\begin{eqnarray}
\hspace*{-0.6em} \boldsymbol{x}_{k+1}\hspace*{-0.3em}& \hspace*{-0.6em}=\hspace*{-0.6em} &\hspace*{-0.3em}
(\boldsymbol{X}_{\hspace*{-0.1em}+}\hspace*{-0.2em}-\hspace*{-0.2em}\boldsymbol{W})\mathcal{\boldsymbol{G}}^{\dagger}\boldsymbol{f}(\boldsymbol{x}_{k},\boldsymbol{u}_{k},\boldsymbol{p}_{k})\hspace*{-0.2em}+\hspace*{-0.2em}\boldsymbol{\omega}_{k},
\label{d1}
\end{eqnarray} 
where $\boldsymbol{f}(\boldsymbol{x},\boldsymbol{u},\boldsymbol{p})\hspace*{-0.2em}=\hspace*{-0.2em}\mathrm{Col}(\boldsymbol{z}_{k},\bar{\boldsymbol{u}}_{k})$ and $\boldsymbol{G}\hspace*{-0.2em}=\hspace*{-0.2em}\mathrm{Col}(\boldsymbol{X}\hspace*{-0.1em},\boldsymbol{X}_{\hspace*{-0.2em}P},\boldsymbol{U}\hspace*{-0.1em},\boldsymbol{U}_{\hspace*{-0.2em}P})$. 
The system parameters $\mathcal{A}_{d}$ and $\mathcal{B}_{d}$ can be effectively identified by minimizing the loss 
$\|(\boldsymbol{X}_{\hspace*{-0.1em}+}\hspace*{-0.2em}-\hspace*{-0.2em}\boldsymbol{W})
\hspace*{-0.2em}-\hspace*{-0.2em}[\mathcal{A}_{d},\mathcal{B}_{d}]\boldsymbol{G}\|$. 
To stabilize the LPV system \eqref{e6}, one can design the feedback control strategy 
$\boldsymbol{u}_{k}\hspace*{-0.3em}=\hspace*{-0.3em}K_{\hspace*{-0.1em}d}(\boldsymbol{p}_{k})\boldsymbol{x}_{k}$, in which 
$K_{\hspace*{-0.1em}d}\hspace*{-0.2em}:\hspace*{-0.2em}\mathds{P}\hspace*{-0.2em}\mapsto\hspace*{-0.2em}\mathds{R}^{m\times n}$
has an affine-dependent form
\begin{eqnarray}
\label{f1}
\hspace*{-1.0em} K_{d}(\boldsymbol{p}_{k})\hspace*{-0.3em}& \hspace*{-0.6em}=\hspace*{-0.6em} &\hspace*{-0.4em}
K_{d0}\hspace*{-0.2em}+\hspace*{-0.3em}\sum\limits_{\imath=1}^{\ell}\boldsymbol{p}_{k}^{[\imath]}K_{d\imath}. 
\end{eqnarray}

\begin{figure}[htbp]
\centerline{
\includegraphics[width=8.5cm]{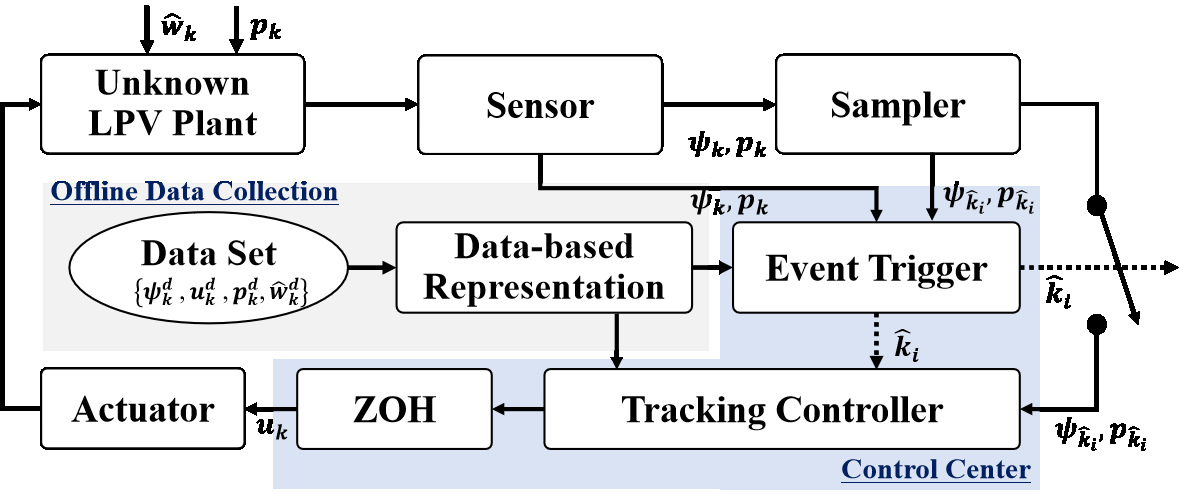} \vspace{-1ex}}
\caption{The design schematic of event-triggered LPV control 
from data.}
\label{Fig01}
\end{figure}

In what follows, we aim to perform the control synthesis of perturbed 
LPV systems with closed-loop stability verifications, leading to 
the data-driven control strategies applicable to both robust 
stabilization and reference tracking. On the premise of proposed 
$\theta$-persistence of excitation prerequisite of perturbed LPV systems 
in Lemma \ref{lemm1}, the problems to be addressed in this paper can 
be briefly summarized by the following keypoints. \newline
\noindent \textbf{Problem 1}. 
For the data-based LPV closed-loop representation with 
the bounded specification of external perturbations, how to 
derive the data-based stability certificates and implement the 
convex programmings feasible at the vertices of the polytopic 
set of scheduling signal, in purpose of synthesizing data-driven 
robust stabilization LPV controller. \newline   
\noindent \textbf{Problem 2}. By deploying the event-triggered 
communication mechanism, how to incorporate the triggering parameters 
into data-based stability certificates and develop data-driven 
event-triggered LPV controller, in terms of the feasibility of 
convex programming with the finite number of matrix inequalities. \newline
\noindent \textbf{Problem 3}. For extending the data-driven 
event-triggered LPV stabilization control to the scenario of 
robust tracking, how to construct the integral 
compensator, 
leading to augmented LPV systems, and derive the corresponding 
stability certificates   
for deploying data-driven event-triggered LPV tracking controller. 

\section{Learning Event-Triggered LPV Controllers}
\label{section3}

\subsection{Closed-Loop Stability Verification}
\label{sub303}
Recall the specific feedback control strategy $\boldsymbol{u}_{k}\hspace*{-0.3em}=\hspace*{-0.3em}K_{\hspace*{-0.1em}d}(\boldsymbol{p}_{k})\boldsymbol{x}_{k}$ in \eqref{f1}, we further have $\boldsymbol{u}_{k}
\hspace*{-0.3em}=\hspace*{-0.3em}\mathcal{K}\boldsymbol{z}_{k}$ with $\mathcal{K}\hspace*{-0.3em}=\hspace*{-0.3em}[K_{d0},\bar{K}_{d}]$, where $\bar{K}_{d}\hspace*{-0.3em}=\hspace*{-0.3em}[K_{d1},\cdots\hspace*{-0.1em},K_{d\ell}]$. 
Therefore, the closed-loop dynamics of 
LPV system \eqref{e6} can be formulated with the form
\begin{eqnarray}
\hspace*{-0.6em} \boldsymbol{x}_{k+1}\hspace*{-0.3em}& \hspace*{-0.6em}=\hspace*{-0.6em} &\hspace*{-0.3em}
\mathcal{N}\mathrm{Col}(\boldsymbol{z}_{k},\boldsymbol{p}_{k}\hspace*{-0.2em}\otimes\hspace*{-0.2em}\boldsymbol{p}_{k}
\hspace*{-0.2em}\otimes\hspace*{-0.2em}\boldsymbol{x}_{k})\hspace*{-0.2em}+\hspace*{-0.2em}\boldsymbol{\omega}_{k},
\label{d2}
\end{eqnarray} 
where $\mathcal{N}\hspace*{-0.2em}=\hspace*{-0.2em}[A_{d0}\hspace*{-0.2em}+\hspace*{-0.2em}B_{d0}K_{d0},\bar{A}_{d0}\hspace*{-0.2em}+\hspace*{-0.2em}B_{d0}\bar{K}_{d}\hspace*{-0.2em}+\hspace*{-0.2em}\bar{B}_{d}(I_{\ell}\hspace*{-0.1em}\otimes\hspace*{-0.1em} K_{d0}),\bar{B}_{d}$ $(I_{\ell}\hspace*{-0.1em}\otimes\hspace*{-0.2em}\bar{K}_{d})]$ with $\hspace*{-0.1em}\bar{A}_{d}$ and $\hspace*{-0.1em}\bar{B}_{d}$ having the similar structures as $\bar{K}_{d}$. According to the fundamental lemma \cite[Lemma 2]{Persis2020Formulas}, 
we have the data-based closed-loop representation 
\begin{eqnarray}
\hspace*{-0.6em} \boldsymbol{x}_{k+1}\hspace*{-0.3em}& \hspace*{-0.6em}=\hspace*{-0.6em} &\hspace*{-0.3em}
(\boldsymbol{X}_{\hspace*{-0.1em}+}\hspace*{-0.2em}-\hspace*{-0.2em}\boldsymbol{W})\mathcal{V}\mathrm{Col}(\boldsymbol{z}_{k},\boldsymbol{p}_{k}\hspace*{-0.2em}\otimes\hspace*{-0.2em}\boldsymbol{p}_{k}
\hspace*{-0.2em}\otimes\hspace*{-0.2em}\boldsymbol{x}_{k})\hspace*{-0.2em}+\hspace*{-0.2em}\boldsymbol{\omega}_{k},
\label{d3}
\end{eqnarray}
where any $\mathcal{V}\hspace*{-0.2em}\in\hspace*{-0.2em}\mathds{R}^{\mathcal{T}\times n(1+\ell+\ell^{2})}$ satisfies 
\begin{eqnarray}
\mathcal{N}_{cl}\hspace*{-0.2em}=\hspace*{-0.2em}\left[\hspace*{-0.2em}
\begin{array}{ccc}
  I_{n}  \hspace*{-0.2em} & \hspace*{-0.2em}\boldsymbol{0} \hspace*{-0.2em}&\hspace*{-0.2em}\boldsymbol{0} \\
    \boldsymbol{0} \hspace*{-0.2em}&\hspace*{-0.2em} I_{\ell}\otimes I_{n}  \hspace*{-0.2em}&\hspace*{-0.2em} \boldsymbol{0} \\
    K_{d0}\hspace*{-0.2em}&\hspace*{-0.2em} \bar{K}_{d}\hspace*{-0.2em}&\hspace*{-0.2em} \boldsymbol{0} \\
    \boldsymbol{0} \hspace*{-0.2em}&\hspace*{-0.2em} I_{\ell}\otimes K_{d0} \hspace*{-0.2em}&\hspace*{-0.2em} I_{\ell}\otimes \bar{K}_{d}
\end{array}
\hspace*{-0.2em}\right]\hspace*{-0.4em}=\hspace*{-0.2em}\boldsymbol{G}\mathcal{V} \label{r1}
\end{eqnarray}
and accordingly $\mathcal{N}\hspace*{-0.2em}=\hspace*{-0.2em}[\mathcal{A}_{d},\mathcal{B}_{d}]\mathcal{N}_{cl}$. This result leaves the room for designing a data-driven controller that stabilizes 
all pairs of LPV system \eqref{d3} for 
any $\boldsymbol{\omega}_{k}\hspace*{-0.3em}\in\hspace*{-0.3em}\mathds{B}_{\delta}\hspace*{-0.3em}=\hspace*{-0.3em}\{\boldsymbol{\omega}|\|\boldsymbol{\omega}\|
\hspace*{-0.3em}\leq\hspace*{-0.3em}\delta\}$. 
\begin{theorem}
\label{theo1}
For the data matrices generated from $\mathbb{D}_{s}$ in \eqref{ee4}, we suppose that there exist the matrix 
$P\hspace*{-0.2em}\in\hspace*{-0.2em}\mathbb{S}^{n}_{++}$ and the scalars 
$\varepsilon_{1}\hspace*{-0.2em}\in\hspace*{-0.2em}\mathds{R}_{>0}$, $\sigma\hspace*{-0.2em}\in\hspace*{-0.2em}\mathds{R}_{>1}$ and 
$\beta_{1}\hspace*{-0.2em}\in\hspace*{-0.2em}\mathds{R}_{(0,1)}$, such that the condition 
\begin{eqnarray}
\left[\hspace*{-0.2em}
\begin{array}{ccccc}
\hspace*{-0.2em}P  \hspace*{-0.2em} & \hspace*{-0.2em}\boldsymbol{0} \hspace*{-0.2em}&\hspace*{-0.2em}F^{\top}(\boldsymbol{p}_{k})\boldsymbol{X}^{\top}_{+} \hspace*{-0.2em}&\hspace*{-0.2em} P \hspace*{-0.2em}&\hspace*{-0.2em}F^{\top}(\boldsymbol{p}_{k})\hspace*{-0.2em}\\
\hspace*{-0.2em}\boldsymbol{0} \hspace*{-0.2em}&\hspace*{-0.2em} \sigma P  \hspace*{-0.2em}&\hspace*{-0.2em} P \hspace*{-0.2em}&\hspace*{-0.2em} \boldsymbol{0} \hspace*{-0.2em}&\hspace*{-0.2em}\boldsymbol{0}\hspace*{-0.2em}\\
\hspace*{-0.2em}\boldsymbol{X}_{+}F(\boldsymbol{p}_{k})\hspace*{-0.2em}&\hspace*{-0.2em} P\hspace*{-0.2em}&\hspace*{-0.2em} P-\varepsilon_{1}\Delta \Delta^{\top} \hspace*{-0.2em}&\hspace*{-0.2em} \boldsymbol{0}\hspace*{-0.2em}&\hspace*{-0.2em}\boldsymbol{0}\hspace*{-0.2em}\\
\hspace*{-0.2em}P \hspace*{-0.2em}&\hspace*{-0.2em} \boldsymbol{0} \hspace*{-0.2em}&\hspace*{-0.2em} \boldsymbol{0} \hspace*{-0.2em}&\hspace*{-0.2em} \beta_{1}^{-1}P\hspace*{-0.2em}&\hspace*{-0.2em}\boldsymbol{0} \hspace*{-0.2em}\\
 \hspace*{-0.2em}F(\boldsymbol{p}_{k}) \hspace*{-0.2em}&\hspace*{-0.2em} \boldsymbol{0} \hspace*{-0.2em}&\hspace*{-0.2em} \boldsymbol{0} \hspace*{-0.2em}&\hspace*{-0.2em} \boldsymbol{0}\hspace*{-0.2em}&\hspace*{-0.2em} \varepsilon_{1}I_{\mathcal{T}}\hspace*{-0.2em}
\end{array}
\hspace*{-0.2em}\right]\hspace*{-0.4em}\succ\hspace*{-0.2em}\boldsymbol{0}, \label{t1}
\end{eqnarray}
where $F(\boldsymbol{p}_{k})\hspace*{-0.2em}=\hspace*{-0.2em}\mathcal{V}\mathrm{Col}(I_{n},
\boldsymbol{p}_{k}\hspace*{-0.2em}\otimes\hspace*{-0.2em} I_{n},
\boldsymbol{p}_{k}\hspace*{-0.2em}\otimes\hspace*{-0.2em} \boldsymbol{p}_{k}\hspace*{-0.2em}\otimes\hspace*{-0.2em} I_{n})P$, 
holds for the known external perturbation bound $\Delta$. 
Then, the closed-loop LPV system \eqref{d2}
is exponentially ISS. 
\end{theorem}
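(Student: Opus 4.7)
My plan is to introduce the quadratic Lyapunov candidate $V(\boldsymbol{x}_k)=\boldsymbol{x}_k^{\top}P^{-1}\boldsymbol{x}_k$ and to show that \eqref{t1} is equivalent, after one application of Petersen's lemma and a sequence of Schur complements, to the exponential decay inequality
\begin{eqnarray*}
V_{k+1} & \leq & \beta_{1} V_{k} + c\,\|\boldsymbol{\omega}_{k}\|^{2}
\end{eqnarray*}
holding uniformly over every admissible $\boldsymbol{p}_{k}\in\mathds{P}$ and every data-perturbation matrix $\boldsymbol{W}$ satisfying Assumption \ref{ass1}. Summing this inequality along trajectories yields $V_{k}\leq\beta_{1}^{k}V_{0}+\tfrac{c}{1-\beta_{1}}\|\boldsymbol{\omega}\|_{[0,k-1]}^{2}$, from which the exponential ISS bound in the definition follows by taking square roots and sandwiching $V_k$ between $\lambda_{\min}(P^{-1})\|\boldsymbol{x}_{k}\|^{2}$ and $\lambda_{\max}(P^{-1})\|\boldsymbol{x}_{k}\|^{2}$.

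First I would exploit the parametrisation $F(\boldsymbol{p}_{k})=\mathcal{V}\Pi(\boldsymbol{p}_{k})P$, where $\Pi(\boldsymbol{p}_{k})=\mathrm{Col}(I_{n},\boldsymbol{p}_{k}\otimes I_{n},\boldsymbol{p}_{k}\otimes\boldsymbol{p}_{k}\otimes I_{n})$, to rewrite \eqref{d3} as $\boldsymbol{x}_{k+1}=(\boldsymbol{X}_{+}-\boldsymbol{W})F(\boldsymbol{p}_{k})P^{-1}\boldsymbol{x}_{k}+\boldsymbol{\omega}_{k}$; this cleanly separates the data-uncertainty factor $\boldsymbol{W}$, the scheduling-parameter factor $F(\boldsymbol{p}_{k})$, and the external input $\boldsymbol{\omega}_{k}$. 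Expanding $V_{k+1}-\beta_{1}V_{k}$ produces a quadratic form in $(\boldsymbol{x}_{k},\boldsymbol{\omega}_{k})$ parameterised by $\boldsymbol{W}$; the mixed $\boldsymbol{x}_{k}$-$\boldsymbol{\omega}_{k}$ cross term is handled by a weighted Young inequality whose weight I expect to coincide with the auxiliary scalar $\sigma$ seen in the $(2,2)$ block of \eqref{t1}. Next, Petersen's lemma is invoked on the $\boldsymbol{W}$-linear terms $H\boldsymbol{W}E+E^{\top}\boldsymbol{W}^{\top}H^{\top}$: under $\boldsymbol{W}\boldsymbol{W}^{\top}\preceq\Delta\Delta^{\top}$ they are dominated by $\varepsilon_{1}^{-1}HH^{\top}+\varepsilon_{1}E^{\top}\Delta\Delta^{\top}E$, which is exactly what supplies the $P-\varepsilon_{1}\Delta\Delta^{\top}$ and $\varepsilon_{1}I_{\mathcal{T}}$ blocks of \eqref{t1}. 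Finally, a congruence transform by $\mathrm{Blkdiag}(P,P,I_{n},P,I_{\mathcal{T}})$ followed by iterated Schur complements against the pivots $\beta_{1}^{-1}P$, $\varepsilon_{1}I_{\mathcal{T}}$ and $P-\varepsilon_{1}\Delta\Delta^{\top}$ converts the remaining $P^{-1}$-dependent quadratic inequality into the explicit five-block layout of \eqref{t1}, whose left-hand side depends only on the data matrices, on $P,\varepsilon_{1},\sigma,\beta_{1}$, and on $\boldsymbol{p}_{k}$ through $F(\boldsymbol{p}_{k})$.

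The main technical obstacle is the order-sensitivity of the Schur-complement bookkeeping: because $P$ serves simultaneously as the Lyapunov metric and as the change-of-variable, the specific placement of the off-diagonal $P$'s in row and column $2$ of \eqref{t1}, paired with $\sigma P$ on the diagonal, is recovered only if the Young splitting of the $\boldsymbol{\omega}_{k}$ cross term and Petersen's relaxation of the $\boldsymbol{W}$-linear term are introduced with compatible multipliers and in the correct sequence. A secondary point of care is to perform the completion of squares in $\boldsymbol{\omega}_{k}$ \emph{before} invoking Petersen's lemma, so that the two relaxations remain decoupled and the feasibility window $\sigma>1$, $\beta_{1}\in(0,1)$, $\varepsilon_{1}>0$ is preserved at minimal conservatism, thereby keeping the resulting semidefinite programming tractable at the polytope vertices of $\mathds{P}$ as claimed.
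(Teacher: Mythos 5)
Your proposal follows essentially the same route as the paper's proof: the same Lyapunov function $V(\boldsymbol{x}_k)=\boldsymbol{x}_k^{\top}P^{-1}\boldsymbol{x}_k$, Petersen's lemma to absorb the unknown $\boldsymbol{W}$ under $\boldsymbol{W}\boldsymbol{W}^{\top}\preceq\Delta\Delta^{\top}$, Schur complements plus a congruence by $\mathrm{Blkdiag}(P^{-1},P^{-1})$ connecting \eqref{t1} to the dissipation inequality, and a geometric-sum recursion for the ISS bound; the paper simply runs this chain forward (from \eqref{t1} to the decrease condition) rather than reconstructing \eqref{t1} from the decrease condition. The only bookkeeping slip is that the LMI, with $\beta_{1}^{-1}P$ in its $(4,4)$ block, yields $V(\boldsymbol{x}_{k+1})\leq(1-\beta_{1})V(\boldsymbol{x}_{k})+\sigma\boldsymbol{\omega}_{k}^{\top}P^{-1}\boldsymbol{\omega}_{k}$ rather than your stated $V_{k+1}\leq\beta_{1}V_{k}+c\|\boldsymbol{\omega}_{k}\|^{2}$, which is immaterial since $\beta_{1}\in(0,1)$.
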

\begin{proof}
By applying Schur complement to \eqref{t1}, we have 
\begin{eqnarray}
\hspace*{-1.0em}&&\hspace*{-1.0em}\underbrace{\left[\hspace*{-0.2em}
\begin{array}{cccc}
\hspace*{-0.2em}P  \hspace*{-0.2em} & \hspace*{-0.2em}\boldsymbol{0} \hspace*{-0.2em}&\hspace*{-0.2em}F^{\top}(\boldsymbol{p}_{k})\boldsymbol{X}^{\top}_{+} \hspace*{-0.2em}&\hspace*{-0.2em} P \hspace*{-0.2em}\\
\hspace*{-0.2em}\boldsymbol{0}  \hspace*{-0.2em} & \hspace*{-0.2em}\sigma P \hspace*{-0.2em}&\hspace*{-0.2em}P \hspace*{-0.2em}&\hspace*{-0.2em} \boldsymbol{0} \hspace*{-0.2em}\\
\hspace*{-0.2em}\boldsymbol{X}_{+}F(\boldsymbol{p}_{k})  \hspace*{-0.2em} & \hspace*{-0.2em}P \hspace*{-0.2em}&\hspace*{-0.2em}P \hspace*{-0.2em}&\hspace*{-0.2em} \boldsymbol{0} \hspace*{-0.2em}\\
\hspace*{-0.2em}P  \hspace*{-0.2em} & \hspace*{-0.2em}\boldsymbol{0} \hspace*{-0.2em}&\hspace*{-0.2em}\boldsymbol{0} \hspace*{-0.2em}&\hspace*{-0.2em} \beta_{1}^{-1}P\hspace*{-0.2em}
\end{array} 
\hspace*{-0.2em}\right]}_{\amalg_{1}}\hspace*{-0.2em}-\varepsilon_{1}^{-1}\hspace*{-0.3em}
\underbrace{\left[\hspace*{-0.2em}
\begin{array}{c}
\hspace*{-0.2em}F^{\top}\hspace*{-0.1em}(\boldsymbol{p}_{k})\hspace*{-0.2em}  \\
\hspace*{-0.2em}\boldsymbol{0} \hspace*{-0.2em} \\
\hspace*{-0.2em}\boldsymbol{0}\hspace*{-0.2em}  \\
\hspace*{-0.2em}\boldsymbol{0} \hspace*{-0.2em} 
\end{array} 
\hspace*{-0.2em}\right]}_{\bar{F}^{\top}(\boldsymbol{p}_{k})}\hspace*{-0.2em}\times \notag \\
\hspace*{-1.0em}&&\hspace*{-1.0em}
\bar{F}(\boldsymbol{p}_{k})\hspace*{-0.2em}-\hspace*{-0.2em}\varepsilon_{1}\mathds{I}_{1}^{\top}\Delta\Delta^{\top}
\underbrace{[\boldsymbol{0},\boldsymbol{0},I_{n},\boldsymbol{0}]}_{\mathds{I}_{1}}\hspace*{-0.2em}\succ\hspace*{-0.2em}\boldsymbol{0}. 
\end{eqnarray}
Recall the Petersen's lemma \cite{Bisoffi2022Data}, we further obtain  
\begin{eqnarray*}
\label{mm1}
\hspace*{-1.0em} \amalg_{1}\hspace*{-0.2em}-\hspace*{-0.2em}\bar{F}^{\hspace*{-0.1em}\top}\hspace*{-0.2em}
(\boldsymbol{p}_{k})\boldsymbol{W}^{\hspace*{-0.1em}\top}\hspace*{-0.2em}\mathds{I}_{1}\hspace*{-0.2em}-\hspace*{-0.2em}\mathds{I}_{1}^{\top}
\hspace*{-0.2em}\boldsymbol{W}\bar{F}(\boldsymbol{p}_{k})\hspace*{-0.2em}\succ\hspace*{-0.2em}\boldsymbol{0},
\end{eqnarray*}
which is equivalent to 
\begin{eqnarray}
\hspace*{-1.4em}\left[\hspace*{-0.2em}
\begin{array}{cccc}
\hspace*{-0.3em}P  \hspace*{-0.2em} & \hspace*{-0.2em}\boldsymbol{0} \hspace*{-0.3em}&\hspace*{-0.3em}F^{\hspace*{-0.1em}\top}\hspace*{-0.2em}(\boldsymbol{p}_{k})(\boldsymbol{X}^{\hspace*{-0.1em}\top}_{+}\hspace*{-0.2em}-\hspace*{-0.2em}\boldsymbol{W}^{\top}\hspace*{-0.2em}) \hspace*{-0.3em}&\hspace*{-0.3em} P \hspace*{-0.3em}\\
\hspace*{-0.3em}\boldsymbol{0}  \hspace*{-0.2em} & \hspace*{-0.2em}\sigma P \hspace*{-0.3em}&\hspace*{-0.3em}P \hspace*{-0.3em}&\hspace*{-0.3em} \boldsymbol{0} \hspace*{-0.3em}\\
\hspace*{-0.3em}(\boldsymbol{X}_{\hspace*{-0.1em}+}\hspace*{-0.2em}-\hspace*{-0.2em}\boldsymbol{W})F(\boldsymbol{p}_{k})  \hspace*{-0.2em} & \hspace*{-0.2em}P \hspace*{-0.3em}&\hspace*{-0.3em}P \hspace*{-0.3em}&\hspace*{-0.3em} \boldsymbol{0} \hspace*{-0.3em}\\
\hspace*{-0.3em}P  \hspace*{-0.2em} & \hspace*{-0.2em}\boldsymbol{0} \hspace*{-0.3em}&\hspace*{-0.3em}\boldsymbol{0} \hspace*{-0.3em}&\hspace*{-0.3em} \beta_{1}^{-1}P\hspace*{-0.3em}
\end{array} 
\hspace*{-0.2em}\right]\hspace*{-0.3em}&\hspace*{-0.8em}\succ\hspace*{-0.8em}&\boldsymbol{0}. \label{m2}
\end{eqnarray}
We then apply Schur complement to \eqref{m2} and can derive  
\begin{eqnarray}
\hspace*{-1.0em}&&\hspace*{-1.0em}
\left[\hspace*{-0.2em}
\begin{array}{cc}
P  \hspace*{-0.2em} & \hspace*{-0.2em}\boldsymbol{0} \\
\boldsymbol{0}\hspace*{-0.2em} & \hspace*{-0.2em}\sigma P
\end{array} 
\hspace*{-0.2em}\right]\hspace*{-0.2em}-\hspace*{-0.2em} 
\left[\hspace*{-0.2em}
\begin{array}{cc}
F^{\top}(\boldsymbol{p}_{k})(\boldsymbol{X}_{+}\hspace*{-0.2em}-\hspace*{-0.2em}\boldsymbol{W})^{\top}  \hspace*{-0.2em} & \hspace*{-0.2em}P \\
P\hspace*{-0.2em} & \hspace*{-0.2em}\boldsymbol{0}
\end{array} 
\hspace*{-0.2em}\right]\hspace*{-0.3em}\left[\hspace*{-0.2em}
\begin{array}{cc}
P^{-1}  \hspace*{-0.2em} & \hspace*{-0.2em}\boldsymbol{0} \\
\boldsymbol{0}\hspace*{-0.2em} & \hspace*{-0.2em}\beta_{1} P^{-1}
\end{array} 
\hspace*{-0.2em}\right]\hspace*{-0.2em}\times \notag \\
\hspace*{-1.0em}&&\hspace*{-1.0em}
\left[\hspace*{-0.2em}
\begin{array}{cc}
(\boldsymbol{X}_{+}\hspace*{-0.2em}-\hspace*{-0.2em}\boldsymbol{W})F(\boldsymbol{p}_{k})  \hspace*{-0.2em} & \hspace*{-0.2em}P \\
P\hspace*{-0.2em} & \hspace*{-0.2em}\boldsymbol{0}
\end{array} 
\hspace*{-0.2em}\right]\hspace*{-0.2em}\succ\hspace*{-0.2em}\boldsymbol{0}. \label{m3}
\end{eqnarray}
For performing the theoretical guarantee of 
exponential ISS of the LPV
system \eqref{d2}, 
we select $V(\boldsymbol{x}_{k})\hspace*{-0.2em}=\hspace*{-0.2em}\boldsymbol{x}_{k}^{\top}\hspace*{-0.1em}P^{-1}\hspace*{-0.1em}\boldsymbol{x}_{k}$ as the Lyapunov function, whose forward difference can hence be calculated with the form 
\begin{eqnarray}
\hspace*{-1.0em} V(\boldsymbol{x}_{k+1})\hspace*{-0.2em}-\hspace*{-0.2em}V(\boldsymbol{x}_{k})
&\hspace*{-0.8em}\leq\hspace*{-0.8em}&-\hspace*{-0.2em}\beta_{1}V(\boldsymbol{x}_{k})
\hspace*{-0.2em}+\hspace*{-0.2em}\sigma\boldsymbol{\omega}_{k}^{\top}P^{-1}\boldsymbol{\omega_{k}}. \label{m4}
\end{eqnarray}
In view of the congruence transformation of \eqref{m3} by pre- and post-multiplying its both sides with  
$\mathrm{Blkdiag}(P^{-1}\hspace*{-0.1em},\hspace*{-0.1em}P^{-1}\hspace*{-0.1em})$, the equivalence between  
\eqref{m3} and \eqref{m4} can be well established. In addition, \eqref{m4} further yields 
\begin{eqnarray}
\hspace*{-1.0em} V(\boldsymbol{x}_{k+1})\hspace*{-0.2em}-\hspace*{-0.2em}V(\boldsymbol{x}_{k})
&\hspace*{-0.8em}\leq\hspace*{-0.8em}&-\hspace*{-0.2em}\beta_{1}V(\boldsymbol{x}_{k})
\hspace*{-0.2em}+\hspace*{-0.2em}\sigma\mu_{\max}(P^{-1})\|\omega_{k}\|^{2} \notag \\
&\hspace*{-0.8em}=\hspace*{-0.8em}&-\hspace*{-0.2em}\beta_{1}V(\boldsymbol{x}_{k})+\hbar(\|\omega_{k}\|). 
\label{m5}
\end{eqnarray}
Obviously, $\hbar(\|\omega_{k}\|)$ implies the $\mathcal{K}$-function with respect to the variable $\|\omega_{k}\|$. By performing 
the recursive calculations about \eqref{m5}, we can derive 
\begin{eqnarray*}
\hspace*{-1.0em} V(\boldsymbol{x}_{k})
&\hspace*{-0.8em}\leq\hspace*{-0.8em}&(1\hspace*{-0.2em}-\hspace*{-0.2em}\beta_{1})^{k}V(\boldsymbol{x}_{0})
\hspace*{-0.2em}+\hspace*{-0.2em}\sum\nolimits_{i=0}^{k-1}\hspace*{-0.2em}(1\hspace*{-0.2em}-\hspace*{-0.2em}\beta_{1})^{k-1-i}\hbar(\|\omega_{i}\|),
\label{m6}
\end{eqnarray*}
which leads to 
\begin{eqnarray}
\hspace*{-1.0em} \|\boldsymbol{x}_{k}\|
&\hspace*{-0.8em}\leq\hspace*{-0.8em}&e^{-\beta_{1}k/2}\mathcal{R}\|\boldsymbol{x}_{0}\|+\gamma_{1}(\|\boldsymbol{\omega_{s}}\|_{[0,k-1]}),
\label{m7}
\end{eqnarray}
with $\mathcal{R}\hspace*{-0.2em}=\hspace*{-0.2em}\lambda_{\max}^{1/2}(\hspace*{-0.1em}P^{-1}\hspace*{-0.1em})/\lambda_{\min}^{1/2}(\hspace*{-0.1em}P^{-1}\hspace*{-0.1em})$ implying the overshoot and 
\begin{eqnarray*}
\hspace*{-1.0em} \gamma_{1}(\|\boldsymbol{\omega_{s}}\|_{[0,k-1]})
&\hspace*{-0.9em}=\hspace*{-0.9em}&c_{1}\hspace*{-0.2em}\sum\nolimits_{i=1}^{k-1}\hspace*{-0.2em}e^{-\beta_{1}(k-1-i)/2}\hbar(\|\boldsymbol{\omega_{s}}\|_{[0,k-1]}),
\label{mm7}
\end{eqnarray*}
holding with $c_{1}\hspace*{-0.2em}=\hspace*{-0.2em}\lambda_{\min}^{-1/2}(\hspace*{-0.1em}P^{-1}\hspace*{-0.1em})$. Therefore, the exponential ISS of 
the closed-loop LPV system \eqref{d2} is ensured, which completes the proof.
\end{proof}

\subsection{Data-Based Control Synthesis}
\label{3b}

Note that the term $F(\boldsymbol{p}_{k})$ in \eqref{t1} is quadratically dependent on $\boldsymbol{p}_{k}$, which implies the intrinsical difficulty on reducing \eqref{t1} to a finite number of constraints solved by convex optimization toolkits. In this subsection, we consider $\mathds{P}$ as a polytope, such that the convex relaxations can be defined by its vertices with the aid of full-block $\mathrm{S}$-procedure \cite{Scherer2001LPV}. 
Within this context, we define two matrices 
$\hspace*{-0.1em}\mathcal{F}\hspace*{-0.3em}\in\hspace*{-0.3em}\mathds{R}^{\mathcal{T}\hspace*{-0.1em}\times\hspace*{-0.1em} n(1\hspace*{-0.1em}+\hspace*{-0.1em}\ell\hspace*{-0.1em}+\hspace*{-0.1em}\ell^{2})}\hspace*{-0.2em}$ and 
$\hspace*{-0.1em}F_{Q}\hspace*{-0.3em}\in\hspace*{-0.3em}\mathds{R}^{\mathcal{T}(1\hspace*{-0.1em}+\hspace*{-0.1em}\ell)\hspace*{-0.1em}\times\hspace*{-0.1em} n(1+\ell)}\hspace*{-0.1em}$, such that $F(\boldsymbol{p}_{k})$ can be
rewritten with the form  
\begin{eqnarray}
\label{m8}
\hspace*{-0.6em}F(\boldsymbol{p}_{k})\hspace*{-0.2em}=\hspace*{-0.2em}\mathcal{F}\hspace*{-0.2em} \left[\hspace*{-0.2em}
\begin{array}{c}
\hspace*{-0.2em}I_{n}\hspace*{-0.2em}  \\
\hspace*{-0.2em}\boldsymbol{p}_{k}\hspace*{-0.2em}\otimes\hspace*{-0.2em} I_{n} \hspace*{-0.2em} \\
\hspace*{-0.2em}\boldsymbol{p}_{k}\hspace*{-0.2em}\otimes\hspace*{-0.2em} \boldsymbol{p}_{k} \hspace*{-0.2em}\otimes\hspace*{-0.2em} I_{n}\hspace*{-0.2em} 
\end{array} 
\hspace*{-0.2em}\right]\hspace*{-0.3em}=\hspace*{-0.3em}
\left[\hspace*{-0.2em}
\begin{array}{c}
\hspace*{-0.2em}I_{\mathcal{T}}\hspace*{-0.2em}  \\
\hspace*{-0.2em}\boldsymbol{p}_{k}\hspace*{-0.2em}\otimes\hspace*{-0.2em} I_{\mathcal{T}} \hspace*{-0.2em}  
\end{array} 
\hspace*{-0.2em}\right]^{\hspace*{-0.1em}\top}\hspace*{-0.3em}F_{Q}\hspace*{-0.2em}
\left[\hspace*{-0.2em}
\begin{array}{c}
\hspace*{-0.2em}I_{n}\hspace*{-0.2em}  \\
\hspace*{-0.2em}\boldsymbol{p}_{k}\hspace*{-0.2em}\otimes\hspace*{-0.2em} I_{n} \hspace*{-0.2em}  
\end{array} 
\hspace*{-0.2em}\right],
\end{eqnarray}
based on which, we can further lead to the controller synthesis with strict stability guarantee by the following theorem.

\begin{theorem}
\label{theo2}
For the data matrices generated from $\mathbb{D}_{s}$ in \eqref{ee4}, we suppose that there exist the matrices 
$P\hspace*{-0.2em}\in\hspace*{-0.2em}\mathbb{S}^{n}_{++}$, 
$Z_{0}\hspace*{-0.2em}\in\hspace*{-0.2em}\mathds{R}^{m\times n}$, 
$\bar{Z}\hspace*{-0.2em}\in\hspace*{-0.2em}\mathds{R}^{m\times \ell n}$, and 
$\Xi\hspace*{-0.2em}\in\hspace*{-0.2em}\mathbb{S}^{2\ell(4n+\mathcal{T})}$, such that
\begin{eqnarray}
\label{m9}
\hspace*{-0.6em}\mathcal{N}_{cl}\left[\hspace*{-0.2em}
\begin{array}{ccc}
\hspace*{-0.2em}P\hspace*{-0.2em}  &
\hspace*{-0.2em}\boldsymbol{0}\hspace*{-0.2em} &
\hspace*{-0.2em}\boldsymbol{0}\hspace*{-0.2em} \\
\hspace*{-0.2em}\boldsymbol{0}\hspace*{-0.2em}  &
\hspace*{-0.2em}I_{\ell}\hspace*{-0.2em}\otimes\hspace*{-0.2em} P\hspace*{-0.2em} &
\hspace*{-0.2em}\boldsymbol{0}\hspace*{-0.2em} \\
\hspace*{-0.2em}\boldsymbol{0}\hspace*{-0.2em}  &
\hspace*{-0.2em}\boldsymbol{0}\hspace*{-0.2em} &
\hspace*{-0.2em}I_{\ell}\hspace*{-0.2em}\otimes\hspace*{-0.2em} P\hspace*{-0.2em}\otimes\hspace*{-0.2em} P\hspace*{-0.2em}
\end{array} 
\hspace*{-0.2em}\right]\hspace*{-0.2em}=\hspace*{-0.2em}\boldsymbol{G}\mathcal{F},
\end{eqnarray}
and the following conditions 
\begin{subequations}
\label{m10}
\begin{align}
\hspace*{-0.6em}\left[\hspace*{-0.2em}
\begin{array}{cc}
\hspace*{-0.2em}M_{11}\hspace*{-0.2em}  &
\hspace*{-0.2em}M_{12}\hspace*{-0.2em} \\
\hspace*{-0.2em}I\hspace*{-0.2em}  &
\hspace*{-0.2em}\boldsymbol{0}\hspace*{-0.2em} \\
\hspace*{-0.2em}M_{21}\hspace*{-0.2em}  &
\hspace*{-0.2em}M_{22}\hspace*{-0.2em} 
\end{array} 
\hspace*{-0.2em}\right]^{\hspace*{-0.2em}\top}\hspace*{-0.3em}\left[\hspace*{-0.2em}
\begin{array}{cc}
\hspace*{-0.2em}\Xi\hspace*{-0.2em}  &
\hspace*{-0.2em}\boldsymbol{0}\hspace*{-0.2em} \\
\hspace*{-0.2em}\boldsymbol{0}\hspace*{-0.2em}  &
\hspace*{-0.2em}-\Phi\hspace*{-0.2em} 
\end{array} 
\hspace*{-0.2em}\right]\hspace*{-0.2em}
\left[\hspace*{-0.2em}
\begin{array}{cc}
\hspace*{-0.2em}M_{11}\hspace*{-0.2em}  &
\hspace*{-0.2em}M_{12}\hspace*{-0.2em} \\
\hspace*{-0.2em}I\hspace*{-0.2em}  &
\hspace*{-0.2em}\boldsymbol{0}\hspace*{-0.2em} \\
\hspace*{-0.2em}M_{21}\hspace*{-0.2em}  &
\hspace*{-0.2em}M_{22}\hspace*{-0.2em} 
\end{array} 
\hspace*{-0.2em}\right]
& \hspace*{-0.3em}\prec\hspace*{-0.2em} 
\boldsymbol{0}, \label{m10:1} \\
\hspace*{-0.6em}\left[\hspace*{-0.2em}
\begin{array}{c}
\hspace*{-0.2em}I\hspace*{-0.2em}  \\
\hspace*{-0.2em}\Upsilon(\boldsymbol{p}_{k})\hspace*{-0.2em}  
\end{array} 
\hspace*{-0.2em}\right]^{\hspace*{-0.2em}\top} \hspace*{-0.2em}
\left[\hspace*{-0.2em}
\begin{array}{cc}
\hspace*{-0.2em}\Xi_{11}\hspace*{-0.2em}  &
\hspace*{-0.2em}\Xi_{12}\hspace*{-0.2em}   \\
\hspace*{-0.2em}\Xi_{12}^{\top}\hspace*{-0.2em}  &
\hspace*{-0.2em}\Xi_{22}\hspace*{-0.2em} 
\end{array} 
\hspace*{-0.2em}\right]\hspace*{-0.2em}
\left[\hspace*{-0.2em}
\begin{array}{c}
\hspace*{-0.2em}I\hspace*{-0.2em}  \\
\hspace*{-0.2em}\Upsilon(\boldsymbol{p}_{k})\hspace*{-0.2em}  
\end{array} 
\hspace*{-0.2em}\right]
& \hspace*{-0.3em}\succeq\hspace*{-0.2em} 
\boldsymbol{0}, \label{m10:2} \\
\hspace*{-0.6em}\Xi_{22}
& \hspace*{-0.3em}\prec\hspace*{-0.2em} 
\boldsymbol{0}, \label{m10:3} \\
\hspace*{-0.6em}
\left[\hspace*{-0.2em}
\begin{array}{ccc}
\hspace*{-0.2em}P\hspace*{-0.2em}  &
\hspace*{-0.2em}\boldsymbol{0}\hspace*{-0.2em} &
\hspace*{-0.2em}\boldsymbol{0}\hspace*{-0.2em} \\
\hspace*{-0.2em}\boldsymbol{0}\hspace*{-0.2em}  &
\hspace*{-0.2em}I_{\ell}\hspace*{-0.2em}\otimes\hspace*{-0.2em} P\hspace*{-0.2em} &
\hspace*{-0.2em}\boldsymbol{0}\hspace*{-0.2em} \\
\hspace*{-0.2em}Z_{0}\hspace*{-0.2em}  &
\hspace*{-0.2em}\bar{Z}\hspace*{-0.2em} &
\hspace*{-0.2em}\boldsymbol{0}\hspace*{-0.2em}\\
\hspace*{-0.2em}\boldsymbol{0}\hspace*{-0.2em}  &
\hspace*{-0.2em}I_{\ell}\hspace*{-0.2em}\otimes\hspace*{-0.2em} Z_{0}\hspace*{-0.2em} &
\hspace*{-0.2em}I_{\ell}\hspace*{-0.2em}\otimes\hspace*{-0.2em} \bar{Z}\hspace*{-0.2em}
\end{array} 
\hspace*{-0.2em}\right]
& \hspace*{-0.3em}=\hspace*{-0.2em} 
\boldsymbol{G}\mathcal{F}, \label{m10:4}
\end{align}
\end{subequations}
holding for all $\boldsymbol{p}_{k}\hspace*{-0.2em}\in\hspace*{-0.2em}\mathbb{P}$, where 
\begin{subequations}
\label{m11}
\begin{align}
\hspace*{-0.6em}\Upsilon(\boldsymbol{p}_{k})\hspace*{-0.2em}=
&  
\mathrm{Blkdiag}(\mathrm{Diag}(\boldsymbol{p}_{k}\hspace*{-0.2em} \otimes\hspace*{-0.2em} I_{n}),
\mathrm{Diag}(\boldsymbol{p}_{k}\hspace*{-0.2em} \otimes\hspace*{-0.2em} I_{n}), \notag\\
\hspace*{-0.6em}
&  
\mathrm{Diag}(\boldsymbol{p}_{k}\hspace*{-0.2em} \otimes\hspace*{-0.2em} I_{n}),
\mathrm{Diag}(\boldsymbol{p}_{k}\hspace*{-0.2em} \otimes\hspace*{-0.2em} I_{n}),
\mathrm{Diag}(\boldsymbol{p}_{k}\hspace*{-0.2em} \otimes\hspace*{-0.2em} I_{\mathcal{T}})), \\
\hspace*{-0.6em}M_{11}\hspace*{-0.2em}=&  
\boldsymbol{0}_{\ell(4n+\mathcal{T})\times \ell(4n+\mathcal{T})}, \label{m11:2}\\
\hspace*{-0.6em}M_{12}\hspace*{-0.2em}=&\mathrm{Blkdiag}(\boldsymbol{1}_{\ell}\hspace*{-0.3em}\otimes\hspace*{-0.2em}I_{n},
\hspace*{-0.1em}\boldsymbol{1}_{\ell}\hspace*{-0.3em}\otimes\hspace*{-0.2em}I_{n},
\hspace*{-0.1em}\boldsymbol{1}_{\ell}\hspace*{-0.3em}\otimes\hspace*{-0.2em}I_{n},
\hspace*{-0.1em}\boldsymbol{1}_{\ell}\hspace*{-0.3em}\otimes\hspace*{-0.2em}I_{n},
\hspace*{-0.1em}\boldsymbol{1}_{\ell}\hspace*{-0.3em}\otimes\hspace*{-0.2em}I_{\mathcal{T}}\hspace*{-0.1em}), \label{m11:3}\\
\hspace*{-0.6em}M_{21}\hspace*{-0.2em}=&
\mathrm{Blkdiag}\left( \left[\hspace*{-0.2em}
\begin{array}{c}
\hspace*{-0.2em}\boldsymbol{0}_{n\times\ell n}\hspace*{-0.2em}  \\
\hspace*{-0.2em}I_{\ell n}\hspace*{-0.2em}  
\end{array} 
\hspace*{-0.2em}\right],\left[\hspace*{-0.2em}
\begin{array}{c}
\hspace*{-0.2em}\boldsymbol{0}_{n\times\ell n}\hspace*{-0.2em}  \\
\hspace*{-0.2em}I_{\ell n}\hspace*{-0.2em}  
\end{array} 
\hspace*{-0.2em}\right],\left[\hspace*{-0.2em}
\begin{array}{c}
\hspace*{-0.2em}\boldsymbol{0}_{n\times\ell n}\hspace*{-0.2em}  \\
\hspace*{-0.2em}I_{\ell n}\hspace*{-0.2em}  
\end{array} 
\hspace*{-0.2em}\right],\right. \notag\\
\hspace*{-0.6em}\hspace*{-0.2em}&
\left.\left[\hspace*{-0.2em}
\begin{array}{c}
\hspace*{-0.2em}\boldsymbol{0}_{n\times\ell n}\hspace*{-0.2em}  \\
\hspace*{-0.2em}I_{\ell n}\hspace*{-0.2em}  
\end{array} 
\hspace*{-0.2em}\right],\left[\hspace*{-0.2em}
\begin{array}{c}
\hspace*{-0.2em}\boldsymbol{0}_{\mathcal{T}\times\ell \mathcal{T}}\hspace*{-0.2em}  \\
\hspace*{-0.2em}I_{\ell \mathcal{T}}\hspace*{-0.2em}  
\end{array} 
\hspace*{-0.2em}\right] \right), \label{m11:4}\\
\hspace*{-0.6em}M_{22}\hspace*{-0.2em}=&
\mathrm{Blkdiag}\left( \left[\hspace*{-0.2em}
\begin{array}{c}
\hspace*{-0.2em}I_{n}\hspace*{-0.2em}  \\
\hspace*{-0.2em}\boldsymbol{0}_{\ell n\times n}\hspace*{-0.2em}  
\end{array} 
\hspace*{-0.2em}\right],\left[\hspace*{-0.2em}
\begin{array}{c}
\hspace*{-0.2em}I_{n}\hspace*{-0.2em}  \\
\hspace*{-0.2em}\boldsymbol{0}_{\ell n\times n}\hspace*{-0.2em}  
\end{array} 
\hspace*{-0.2em}\right],\left[\hspace*{-0.2em}
\begin{array}{c}
\hspace*{-0.2em}I_{n}\hspace*{-0.2em}  \\
\hspace*{-0.2em}\boldsymbol{0}_{\ell n\times n}\hspace*{-0.2em}  
\end{array} 
\hspace*{-0.2em}\right],\right. \notag\\
\hspace*{-0.6em}\hspace*{-0.2em}&
\left.\left[\hspace*{-0.2em}
\begin{array}{c}
\hspace*{-0.2em}I_{n}\hspace*{-0.2em}  \\
\hspace*{-0.2em}\boldsymbol{0}_{\ell n\times n}\hspace*{-0.2em}  
\end{array} 
\hspace*{-0.2em}\right],\left[\hspace*{-0.2em}
\begin{array}{c}
\hspace*{-0.2em}I_{\mathcal{T}}\hspace*{-0.2em}  \\
\hspace*{-0.2em}\boldsymbol{0}_{\ell \mathcal{T}\times \mathcal{T}}\hspace*{-0.2em}  
\end{array} 
\hspace*{-0.2em}\right] \right), \label{m11:5} \\
\hspace*{-0.6em}\Phi\hspace*{-0.2em}=&  
\left[\hspace*{-0.2em}
\begin{array}{ccccc}
\hspace*{-0.2em}\mathcal{Y}  \hspace*{-0.2em} & \hspace*{-0.2em}\boldsymbol{0} \hspace*{-0.2em}&\hspace*{-0.2em}(\mathcal{X}_{+}F_{Q})^{\top} \hspace*{-0.2em}&\hspace*{-0.2em} \mathcal{Y} \hspace*{-0.2em}&\hspace*{-0.2em}F^{\top}_{Q}\hspace*{-0.2em} \\
\hspace*{-0.2em}\boldsymbol{0}  \hspace*{-0.2em} & \hspace*{-0.2em}\sigma\mathcal{Y} \hspace*{-0.2em}&\hspace*{-0.2em}\mathcal{Y} \hspace*{-0.2em}&\hspace*{-0.2em} \boldsymbol{0} \hspace*{-0.2em}&\hspace*{-0.2em}\boldsymbol{0}\hspace*{-0.2em}\\
\hspace*{-0.2em}\mathcal{X}_{+}F_{Q}  \hspace*{-0.2em} & \hspace*{-0.2em}\mathcal{Y} \hspace*{-0.2em}&\hspace*{-0.2em}\bar{\mathcal{Y}} \hspace*{-0.2em}&\hspace*{-0.2em} \boldsymbol{0} \hspace*{-0.2em}&\hspace*{-0.2em}\boldsymbol{0}\hspace*{-0.2em}\\
\hspace*{-0.2em}\mathcal{Y}  \hspace*{-0.2em} & \hspace*{-0.2em}\boldsymbol{0} \hspace*{-0.2em}&\hspace*{-0.2em}\boldsymbol{0} \hspace*{-0.2em}&\hspace*{-0.2em} \beta_{1}^{-1}\mathcal{Y} \hspace*{-0.2em}&\hspace*{-0.2em}\boldsymbol{0}\hspace*{-0.2em}\\
\hspace*{-0.2em}F_{Q}  \hspace*{-0.2em} & \hspace*{-0.2em}\boldsymbol{0} \hspace*{-0.2em}&\hspace*{-0.2em}\boldsymbol{0} \hspace*{-0.2em}&\hspace*{-0.2em} \boldsymbol{0} \hspace*{-0.2em}&\hspace*{-0.2em}\mathcal{E}\hspace*{-0.2em}
\end{array}
\hspace*{-0.2em}\right], \label{m11:f}
\end{align}
\end{subequations}
with $\boldsymbol{1}_{n}\hspace*{-0.2em}=\hspace*{-0.2em}\mathrm{Col}(1,\cdots,1)\hspace*{-0.2em}\in\hspace*{-0.2em}\mathds{R}^{n}$ and 
\begin{subequations} 
\begin{align}
\hspace*{-0.6em}\mathcal{Y}\hspace*{-0.2em}=&  
\mathrm{Blkdiag}(P,\boldsymbol{0}_{\ell n \times \ell n}),  \notag \\ 
\hspace*{-0.6em}\bar{\mathcal{Y}}\hspace*{-0.2em}=&  
\mathrm{Blkdiag}(P\hspace*{-0.2em}-\hspace*{-0.2em}\varepsilon_{1}\Delta\Delta^{\top},\boldsymbol{0}_{\ell n \times \ell n}), \notag \\
\hspace*{-0.6em}\mathcal{E}\hspace*{-0.2em}=&  
\mathrm{Blkdiag}(\varepsilon_{1}I_{\mathcal{T}},\boldsymbol{0}_{\ell\mathcal{T}\times \ell\mathcal{T}}), \notag \\
\hspace*{-0.6em}\mathcal{X}_{+}\hspace*{-0.2em}=&  
\mathrm{Blkdiag}(\boldsymbol{X}_{+},I_{\ell}\otimes\boldsymbol{X}_{+}). \notag
\end{align}
\end{subequations}
Then, the state-feedback controller 
$\boldsymbol{u}_{k}\hspace*{-0.3em}=\hspace*{-0.3em}K_{\hspace*{-0.1em}d}(\boldsymbol{p}_{k})\boldsymbol{x}_{k}$ as \eqref{f1} 
can be constructed by 
\begin{eqnarray}
\label{e10}
    K_{d0}\hspace*{-0.2em}=\hspace*{-0.2em}Z_{0}P^{-1}, \bar{K}_{d}\hspace*{-0.2em}=\hspace*{-0.2em}\bar{Z}(I_{\ell}\otimes P)^{-1}, 
\end{eqnarray}
which ensures the exponential ISS of closed-loop system \eqref{d2}.
\end{theorem}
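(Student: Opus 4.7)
The plan is to start from the stability certificate \eqref{t1} of Theorem \ref{theo1} and to eliminate the quadratic $\boldsymbol{p}_k$-dependence by invoking the full-block S-procedure along the factorization \eqref{m8}. First, I would substitute $F(\boldsymbol{p}_k) = \mathcal{F}\,\mathrm{Col}(I_n, \boldsymbol{p}_k\otimes I_n, \boldsymbol{p}_k\otimes\boldsymbol{p}_k\otimes I_n)$ into \eqref{t1}, using the equivalent $F_Q$-form when it is convenient to separate a left $(\boldsymbol{p}_k\otimes I)$-factor from a right $(\boldsymbol{p}_k\otimes I)$-factor. This rewrites \eqref{t1} as a sandwich inequality whose outer factors depend on $\boldsymbol{p}_k$ only through the block-diagonal scheduling matrix $\Upsilon(\boldsymbol{p}_k)$ introduced in \eqref{m11}, while the inner middle block, consisting of the blocks of $\Phi$ in \eqref{m11:f}, is a constant matrix built from the data $\boldsymbol{X}_+, F_Q$ together with the multipliers $P$, $\varepsilon_1$, $\sigma$, $\beta_1$.

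Second, I would introduce the full-block multiplier $\Xi$ of dimension matching $\Upsilon(\boldsymbol{p}_k)$ and invoke the S-procedure in the form of Scherer: the parameter-dependent inequality \eqref{t1} is guaranteed uniformly over $\boldsymbol{p}_k \in \mathbb{P}$ as soon as the constant outer inequality \eqref{m10:1} holds, the quadratic constraint \eqref{m10:2} certifies that every admissible $\Upsilon(\boldsymbol{p}_k)$ lies in the set described by $\Xi$, and the dualizing curvature condition $\Xi_{22}\prec 0$ in \eqref{m10:3} ensures the S-procedure loss is nonpositive. This collectively replaces the infinite family of inequalities indexed by $\boldsymbol{p}_k$ by the finite convex program \eqref{m10:1}--\eqref{m10:3}.

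Third, I would handle the controller parameterization. Because $\mathcal{N}_{cl}$ in \eqref{r1} is affine in $K_{d0},\bar{K}_d$, the linearizing change of variables $Z_0 = K_{d0}P$ and $\bar{Z} = \bar{K}_d(I_\ell\otimes P)$ turns the identity \eqref{m9} (viewed as a parameterization of $\mathcal{F}$ compatible with $\mathcal{N}_{cl}$ scaled by $\mathrm{Blkdiag}(P, I_\ell\otimes P, I_\ell\otimes P\otimes P)$) into the affine equality constraint \eqref{m10:4} in the decision variables $(P, Z_0, \bar{Z}, \mathcal{F})$. Once \eqref{m10} is feasible, the gains \eqref{e10} are recovered by inverting $P$ and $I_\ell\otimes P$, and plugging back yields a closed-loop representation that satisfies \eqref{t1}, so exponential ISS of \eqref{d2} follows from Theorem \ref{theo1}.

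The hard part will be the bookkeeping needed to realize the S-procedure pattern: one has to verify that the specific block structure of $M_{11},M_{12},M_{21},M_{22}$ in \eqref{m11:2}--\eqref{m11:5} exactly reproduces the sandwich factorization of \eqref{t1} after substituting \eqref{m8} and absorbing the Petersen-type term $\varepsilon_1 \Delta\Delta^\top$ into $\bar{\mathcal{Y}}$, and that the five $\mathrm{Blkdiag}$ slots in $\Upsilon(\boldsymbol{p}_k)$ (four of size $n$ and one of size $\mathcal{T}$, each with multiplicity $\ell$) align with the Kronecker expansions coming from $\boldsymbol{p}_k\otimes\boldsymbol{p}_k$. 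The remaining steps, namely the congruence transformation by $\mathrm{Blkdiag}(P^{-1}, P^{-1})$ that connects \eqref{m3} to the Lyapunov decrease \eqref{m4} and the subsequent ISS bound \eqref{m7}, are inherited verbatim from the proof of Theorem \ref{theo1}, so no additional stability analysis is required beyond matching dimensions and confirming the S-procedure hypotheses.
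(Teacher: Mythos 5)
Your proposal follows essentially the same route as the paper's proof: factorizing $F(\boldsymbol{p}_{k})$ through $F_{Q}$ as in \eqref{m8}, rewriting \eqref{t1} as a sandwiched inequality $M^{\top}(\boldsymbol{p}_{k})\Phi M(\boldsymbol{p}_{k})\succ\boldsymbol{0}$ whose parameter dependence enters only through $\Upsilon(\boldsymbol{p}_{k})$, applying Scherer's full-block S-procedure with the multiplier $\Xi$ (with $\Xi_{22}\prec\boldsymbol{0}$ enabling vertex-wise checking of \eqref{m10:2}), and using the linearizing substitution $Z_{0}=K_{d0}P$, $\bar{Z}=\bar{K}_{d}(I_{\ell}\otimes P)$ to pass between \eqref{m9}, \eqref{m10:4} and the gains \eqref{e10}, after which exponential ISS is inherited from Theorem \ref{theo1}. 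This matches the paper's argument, so the proposal is correct.
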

\begin{proof}
We recall the definition of $F_{Q}$ in \eqref{m8} and utilize the Kronecker property, refer to \cite[Eq.(1)]{Verhoek2024Direct}, twice times with regard to the term $\boldsymbol{X}_{+}F(\boldsymbol{p}_{k})$, which leads to 
\begin{eqnarray}
\label{m12}
\hspace*{-0.6em}\boldsymbol{X}_{+}F(\boldsymbol{p}_{k})\hspace*{-0.2em}=\hspace*{-0.2em}\hspace*{-0.2em} \left[\hspace*{-0.2em}
\begin{array}{c}
\hspace*{-0.2em}I_{n}\hspace*{-0.2em}  \\
\hspace*{-0.2em}\boldsymbol{p}_{k}\hspace*{-0.2em}\otimes\hspace*{-0.2em} I_{n} \hspace*{-0.2em} 
\end{array} 
\hspace*{-0.2em}\right]^{\hspace*{-0.1em}\top}\hspace*{-0.2em}\mathcal{X}_{+}F_{Q}\left[\hspace*{-0.2em}
\begin{array}{c}
\hspace*{-0.2em}I_{n}\hspace*{-0.2em}  \\
\hspace*{-0.2em}\boldsymbol{p}_{k}\hspace*{-0.2em}\otimes\hspace*{-0.2em} I_{n} \hspace*{-0.2em} 
\end{array} 
\hspace*{-0.2em}\right]. 
\end{eqnarray}
By combining \eqref{t1} with \eqref{m12}, we then obtain that 
\begin{eqnarray}
\label{m13}
\hspace*{-0.6em}M^{\top}(\boldsymbol{p}_{k})\Phi M(\boldsymbol{p}_{k})\hspace*{-0.2em}\succ\hspace*{-0.2em} \boldsymbol{0},
\end{eqnarray}
is satisfied with $\Psi$ defined in \eqref{m11:f} and $M(\boldsymbol{p}_{k})$ formulated by 
\begin{eqnarray}
\label{m14}
\hspace*{-0.6em}M(\boldsymbol{p}_{k})\hspace*{-0.2em}=\hspace*{-0.2em} M_{22}\hspace*{-0.2em}+\hspace*{-0.2em}
M_{21}\hspace*{-0.2em}\Upsilon(\boldsymbol{p}_{k})(I\hspace*{-0.2em}-\hspace*{-0.2em}M_{11}
\hspace*{-0.2em}\Upsilon(\boldsymbol{p}_{k}))^{-1}\hspace*{-0.2em}M_{12},
\end{eqnarray}
where $M_{ij}$ with $i,j\hspace*{-0.2em}\in\hspace*{-0.2em}\mathds{N}_{[1,2]}$ can be represented by \eqref{m11:2}-\eqref{m11:5}, 
respectively. By virtue of utilizing the full-block S-procedure 
\cite[Theorem 8]{Scherer2001LPV} and supposing $\Xi_{22}\hspace*{-0.2em}\prec\hspace*{-0.2em}\mathbf{0}$, the inequality \eqref{m10:2} 
is no longer multi-convexity about $\boldsymbol{p}_{k}$, 
such that a finite set of matrix inequalities can be specifically 
satisfied at 
the vertices $\hat{\boldsymbol{p}}_{k}^{i}$ of polytopic scheduling signal 
set $\mathbb{P}$, that is 
\begin{eqnarray}
\hspace*{-0.6em}\left[\hspace*{-0.2em}
\begin{array}{c}
\hspace*{-0.2em}I\hspace*{-0.2em}  \\
\hspace*{-0.2em}\Upsilon(\hat{\boldsymbol{p}}_{k}^{i})\hspace*{-0.2em}  
\end{array} 
\hspace*{-0.2em}\right]^{\hspace*{-0.2em}\top} \hspace*{-0.2em}
\left[\hspace*{-0.2em}
\begin{array}{cc}
\hspace*{-0.2em}\Xi_{11}\hspace*{-0.2em}  &
\hspace*{-0.2em}\Xi_{12}\hspace*{-0.2em}   \\
\hspace*{-0.2em}\Xi_{12}^{\top}\hspace*{-0.2em}  &
\hspace*{-0.2em}\Xi_{22}\hspace*{-0.2em} 
\end{array} 
\hspace*{-0.2em}\right]\hspace*{-0.2em}
\left[\hspace*{-0.2em}
\begin{array}{c}
\hspace*{-0.2em}I\hspace*{-0.2em}  \\
\hspace*{-0.2em}\Upsilon(\hat{\boldsymbol{p}}_{k}^{i})\hspace*{-0.2em}  
\end{array} 
\hspace*{-0.2em}\right]
& \hspace*{-0.3em}\succeq\hspace*{-0.2em} 
\boldsymbol{0}, i\in\mathds{N}_{[1,n]}.
\end{eqnarray}
Besides, substituting \eqref{r1} into \eqref{m9} yields 
\begin{eqnarray}
\label{m15}
\hspace*{-0.6em}\left[\hspace*{-0.2em}
\begin{array}{ccc}
\hspace*{-0.2em}P\hspace*{-0.2em}  &
\hspace*{-0.2em}\boldsymbol{0}\hspace*{-0.2em} &
\hspace*{-0.2em}\boldsymbol{0}\hspace*{-0.2em} \\
\hspace*{-0.2em}\boldsymbol{0}\hspace*{-0.2em}  &
\hspace*{-0.2em}I_{\ell}\hspace*{-0.2em}\otimes\hspace*{-0.2em} P\hspace*{-0.2em} &
\hspace*{-0.2em}\boldsymbol{0}\hspace*{-0.2em} \\
\hspace*{-0.2em}K_{d0}\hspace*{-0.2em}  &
\hspace*{-0.2em}\bar{K}_{d}(I_{\ell}\otimes P)\hspace*{-0.2em} &
\hspace*{-0.2em}\boldsymbol{0} \hspace*{-0.2em} \\
\hspace*{-0.2em}\boldsymbol{0}\hspace*{-0.2em}  &
\hspace*{-0.2em}I_{\ell}\otimes K_{d0}\hspace*{-0.2em} &
\hspace*{-0.2em}I_{\ell}\otimes\bar{K}_{d}(I_{\ell}\otimes P) \hspace*{-0.2em}
\end{array} 
\hspace*{-0.2em}\right]\hspace*{-0.2em}=\hspace*{-0.2em}\boldsymbol{G}\mathcal{F}.
\end{eqnarray}
We let $Z_{0}\hspace*{-0.2em}=\hspace*{-0.2em}K_{0}P$ and $\bar{Z}\hspace*{-0.2em}=\hspace*{-0.2em}\bar{K}_{d}(I_{\ell}\hspace*{-0.2em}\otimes \hspace*{-0.2em}P)$, hence \eqref{m10:4} holds. Note that if there exist $Z_{0}$ and $\bar{Z}$ satisfying 
\eqref{m10:4}, then the control synthesis with the form \eqref{e10} provides the $\mathcal{N}_{cl}$ satisfying \eqref{m9}.
Up till now, we complete the proof of this theorem. 
\end{proof}
\begin{remark}  
In order to cope with the dependency of 
$F(\boldsymbol{p}_{k})$ on $\boldsymbol{p}_{k}$, we 
express the matrix inequality \eqref{t1} with an 
equivalent data-dependent form \eqref{m10:1}-\eqref{m10:3},   
in the light of full-block S-procedure \cite[Section IV]{Abbas2016A}, 
such that the decision variables $P$ and $F_{Q}$ can be 
calculated via a finite number of inequality constraints at the 
vertices of polytopic scheduling signal set $\mathds{P}$. 
Therefore, the controller gains can be scheduled, once the 
constraint \eqref{m10:4} is satisfied. Our theoretical results 
correspond to perturbed LPV systems, where the perturbations stem from 
 unmodeled dynamics or external 
disturbance. Compared to the 
existing work \cite{Verhoek2024Direct}, although we do not view the 
perturbations as the process noises, the technical content  
can be potentially extended to the scenario that considers the measurement 
noises during data collection, leading to  
$\hat{\boldsymbol{x}}\hspace*{-0.2em}=\hspace*{-0.2em}
\boldsymbol{x}\hspace*{-0.2em}+\hspace*{-0.2em}\hat{\boldsymbol{d}}$ 
with  $\hat{\boldsymbol{d}}$ representing 
the noise signal, such that the above theorems can be combined with 
\cite[Remark 3]{Bisoffi2022Data}
to address this issue. 
\end{remark}

\subsection{Event-Triggered Communication Scheme}
\label{subsec303}
Apart from ensuring the closed-loop stability of 
LPV system, we aim to refine the communication efficiency while developing robust controller from available data directly in the sense of networked control. The sequence of transmission instants is denoted by 
$\{\hat{k}_{i}\},i\hspace*{-0.2em}\in\hspace*{-0.2em}\mathds{N}$. Allowing for the zero-order-hold (ZOH), the control input can be depicted by 
\begin{eqnarray}
\label{et111}
\boldsymbol{u}_{k}\hspace*{-0.2em}=\hspace*{-0.2em}K_{d}(\boldsymbol{p}_{\hat{k}_{i}})\boldsymbol{x}_{\hat{k}_{i}}, 
k\hspace*{-0.2em}\in\hspace*{-0.2em}\mathds{N}_{[\hat{k}_{i},\hat{k}_{i+1})},
\end{eqnarray}
based on which, the closed-loop 
LPV system can then be described by 
\begin{eqnarray}
\hspace*{-0.6em} \boldsymbol{x}_{k+1}\hspace*{-0.3em}& \hspace*{-0.6em}=\hspace*{-0.6em} &\hspace*{-0.3em}
A_{d}(\boldsymbol{p}_{k})\boldsymbol{x}_{k}\hspace*{-0.2em}+\hspace*{-0.2em}B_{d}(\boldsymbol{p}_{k})\boldsymbol{u}_{\hat{k}_{i}}
\hspace*{-0.2em}+\hspace*{-0.2em}\boldsymbol{\omega_{k}}, \notag \\
\hspace*{-0.6em} \hspace*{-0.3em}& \hspace*{-0.6em}=\hspace*{-0.6em} &\hspace*{-0.3em}
(A_{d}(\boldsymbol{p}_{k})+B_{d}K_{d}(\boldsymbol{p}_{k}))\boldsymbol{x}_{k}\hspace*{-0.2em}+\hspace*{-0.2em}\boldsymbol{\nu}_{k}
\hspace*{-0.2em}+\hspace*{-0.2em}\boldsymbol{\omega}_{k}, \label{e51}
\end{eqnarray}
where $\boldsymbol{\nu}_{k}$ is with the form 
\begin{eqnarray}
\hspace*{-0.6em} \boldsymbol{\nu}_{k}\hspace*{-0.3em}& \hspace*{-0.6em}=\hspace*{-0.6em} &\hspace*{-0.3em}
B_{d}(\boldsymbol{p}_{k})K_{d}(\boldsymbol{p}_{k})\boldsymbol{e}(k)+ \notag \\
\hspace*{-0.6em} \hspace*{-0.3em}& \hspace*{-0.6em}\hspace*{-0.6em} &\hspace*{-0.3em}
B_{d}(\boldsymbol{p}_{k})(K_{d}(\boldsymbol{p}_{\hat{k}_{i}})-K_{d}(\boldsymbol{p}_{k}))\boldsymbol{x}_{\hat{k}_{i}}, \label{e52}
\end{eqnarray}
in which $\boldsymbol{e}_{k}\hspace*{-0.2em}=\hspace*{-0.2em}\boldsymbol{x}_{\hat{k}_{i}}\hspace*{-0.2em}-\hspace*{-0.2em}\boldsymbol{x}_{k}$
represents the sampling-induced error capturing the discrepancy between the recently transmitted and the current sampled system states. 
Accordingly, we can rewrite the 
LPV system \eqref{e51} with the data-based form   
\begin{eqnarray}
\hspace*{-0.6em} \boldsymbol{x}_{k+1}\hspace*{-0.3em}& \hspace*{-0.6em}=\hspace*{-0.6em} &\hspace*{-0.3em}
(\boldsymbol{X}_{\hspace*{-0.1em}+}\hspace*{-0.2em}-\hspace*{-0.2em}\boldsymbol{W})\mathcal{V}\mathrm{Col}(\boldsymbol{z}_{k},\boldsymbol{p}_{k}\hspace*{-0.2em}\otimes\hspace*{-0.2em}\boldsymbol{p}_{k}
\hspace*{-0.2em}\otimes\hspace*{-0.2em}\boldsymbol{x}_{k})\hspace*{-0.2em}+\hspace*{-0.2em}\boldsymbol{\nu}_{k}
\hspace*{-0.2em}+\hspace*{-0.2em}\boldsymbol{\omega}_{k}. 
\label{dd1}
\end{eqnarray}
\begin{remark}
\label{remark3}
Note that $\boldsymbol{\nu}_{k}$ embedded in the data-based representation 
of LPV system \eqref{dd1} indirectly
depends on the data matrices. Compared with the event-triggered linear 
time-invariant systems \cite{Persis2024Event,Digge2022Data}, 
$\boldsymbol{\nu}_{k}$ consists of not only the sampling-induced error $\boldsymbol{e}_{k}$ but also the scheduling variable  
$\boldsymbol{p}_{k}$. For the sake of convenience, we treat $\boldsymbol{\nu}_{k}$ as a whole and use it to develop the event-triggered logic in this paper. Albeit the matrix $B_{d}(\boldsymbol{p}_{k})$ in \eqref{e52} is unknown for control design, we can leverage 
the data-based representation $[\mathcal{A}_{d},\mathcal{B}_{d}]\hspace*{-0.2em}=\hspace*{-0.2em}(\hspace*{-0.1em}\boldsymbol{X}_{\hspace*{-0.2em}+}\hspace*{-0.2em}-\hspace*{-0.2em}\boldsymbol{W})\boldsymbol{G}^{+}\hspace*{-0.1em}$ and obtain that $\mathcal{B}_{d}$ 
can be extracted from the $n(1\hspace*{-0.2em}+\hspace*{-0.2em}\ell)\hspace*{-0.2em}+\hspace*{-0.2em}1$-th to the 
$(m\hspace*{-0.2em}+\hspace*{-0.2em}n)(1\hspace*{-0.2em}+\hspace*{-0.2em}\ell)$-th column of  
$(\boldsymbol{X}_{\hspace*{-0.1em}+}\hspace*{-0.2em}-\hspace*{-0.2em}\boldsymbol{W})\boldsymbol{G}^{+}\hspace*{-0.1em}$.  Motivated by 
\cite{Persis2024Event}, the resulting control gains \eqref{e10} can be potentially utilized to determine the triggering parameters in what follows. Hence, the signal $\hspace*{-0.1em}\boldsymbol{\nu}_{k}$ is available for the event detector to confirm the networked 
communication transmissions.   
\end{remark}

We recall the sequence of triggering instants $\hspace*{-0.1em}\{\hat{k}_{i}\},i\hspace*{-0.2em}\in\hspace*{-0.2em}\mathds{N}\hspace*{-0.1em}$ with $\hat{k}_{0}\hspace*{-0.2em}=\hspace*{-0.2em}0$, then the triggering logic can be specified with the form 
\begin{eqnarray}
\hspace*{-0.6em} \hat{k}_{i+1}\hspace*{-0.3em}& \hspace*{-0.6em}=\hspace*{-0.6em} &\hspace*{-0.3em}
\inf\left\{ k>\hat{k}_{i}| \boldsymbol{\nu}^{\top}_{k}\Psi_{1}\boldsymbol{\nu}_{k}\geq 
\boldsymbol{x}_{k}^{\top}\Psi_{2}\boldsymbol{x}_{k}+v \right\},  \label{e53}
\end{eqnarray}
where $\hspace*{-0.1em}v\hspace*{-0.2em}\in\hspace*{-0.2em}\mathds{R}_{>0}\hspace*{-0.1em}$ 
implies an arbitrary constant, and $\hspace*{-0.1em}\Psi_{1}\hspace*{-0.1em}$ and $\hspace*{-0.1em}\Psi_{2}$ are positively 
definite matrices to be determined later. 
\begin{theorem}
\label{theo3}
For the event-triggered LPV system \eqref{dd1} together with the data set $\mathbb{D}_{s}$, 
we suppose that Lemma \ref{lemm1} and \eqref{m10} hold, leading to the feasible solution \eqref{e10}, thus  
\eqref{dd1} is practically exponentially ISS, if there exist $\mu,\varepsilon_{2}\in\mathds{R}_{+}$ and 
$\Psi_{1}, \Psi_{2}\hspace*{-0.2em}\in\hspace*{-0.2em}\mathbb{S}_{++}^{n}$, such that 
\begin{eqnarray}
\hspace*{-1.4em}\left[\hspace*{-0.2em}
\begin{array}{ccc}
\hspace*{-0.3em}\Omega_{11}  
\hspace*{-0.2em} & \hspace*{-0.2em} 
\Omega_{12} \hspace*{-0.3em}&\hspace*{-0.3em}\Omega_{13} \\
\hspace*{-0.3em}\Omega_{12}^{\top}  
\hspace*{-0.2em} & \hspace*{-0.2em} 
\Omega_{22} \hspace*{-0.3em}&\hspace*{-0.3em}\Omega_{23} \\
\hspace*{-0.3em}\Omega_{13}^{\top}  
\hspace*{-0.2em} & \hspace*{-0.2em} \Omega_{23}^{\top} 
\hspace*{-0.3em}&\hspace*{-0.3em}\Omega_{33}
\end{array} 
\hspace*{-0.2em}\right]\hspace*{-0.3em}&\hspace*{-0.8em}\prec\hspace*{-0.8em}&\boldsymbol{0} \label{m21}
\end{eqnarray}
holding for 
\begin{eqnarray}
\label{m22}
\hspace*{-0.6em}\Omega_{11}&\hspace*{-0.8em}=\hspace*{-0.8em}& 
-\mu\beta_{2}P\hspace*{-0.2em}+\hspace*{-0.2em}P\Psi_{2}P, 
\Omega_{12}\hspace*{-0.2em}=\hspace*{-0.2em}\mu 
F^{\hspace*{-0.1em}\top}\hspace*{-0.2em}
(\boldsymbol{p}_{k})\boldsymbol{X}_{+}, 
\Omega_{13}\hspace*{-0.2em}=\hspace*{-0.2em}
F^{\hspace*{-0.1em}\top}\hspace*{-0.2em}(\boldsymbol{p}_{k}), \notag \\
\hspace*{-0.6em}
\Omega_{22}&\hspace*{-0.8em}=\hspace*{-0.8em}& 
-P\Psi_{1}P+\varepsilon_{2}\mu^{2}\Delta\Delta^{\top}, 
\Omega_{23}\hspace*{-0.2em}=\hspace*{-0.2em}
\boldsymbol{0}_{n\times \mathcal{T}}, 
\Omega_{33}\hspace*{-0.2em}=\hspace*{-0.2em}
-\varepsilon_{2}I_{\mathcal{T}}, \notag
\end{eqnarray}
where $\beta_{2}=\beta_{1}/2$ and $P$ are determined through Theorem \ref{theo2}.
\end{theorem}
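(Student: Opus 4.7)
The plan is to mimic the Lyapunov/Petersen argument of Theorem~\ref{theo1} but with the sampling-induced signal $\boldsymbol{\nu}_{k}$ treated as an additional ``disturbance'' that is bounded, between triggering instants, by the event-triggered rule~\eqref{e53}. First I would take the same quadratic Lyapunov candidate $V(\boldsymbol{x}_{k})=\boldsymbol{x}_{k}^{\top}P^{-1}\boldsymbol{x}_{k}$, with $P$ produced by Theorem~\ref{theo2}, and compute the forward difference along the closed-loop representation~\eqref{dd1}. Substituting the data-based closed-loop update $\boldsymbol{x}_{k+1}=(\boldsymbol{X}_{+}-\boldsymbol{W})\mathcal{V}\,\mathrm{Col}(\boldsymbol{z}_{k},\boldsymbol{p}_{k}\otimes\boldsymbol{p}_{k}\otimes\boldsymbol{x}_{k})+\boldsymbol{\nu}_{k}+\boldsymbol{\omega}_{k}$ into $V(\boldsymbol{x}_{k+1})$ expands into a quadratic form in the augmented vector $\mathrm{Col}(\boldsymbol{x}_{k},\boldsymbol{\nu}_{k},\boldsymbol{\omega}_{k})$, weighted by a matrix whose $(1,1)$ block contains $F^{\top}(\boldsymbol{p}_{k})\boldsymbol{X}_{+}^{\top}P^{-1}\boldsymbol{X}_{+}F(\boldsymbol{p}_{k})$, and whose $F^{\top}(\boldsymbol{p}_{k})$ blocks couple to the perturbation $\boldsymbol{W}$ hidden inside $\boldsymbol{X}_{+}-\boldsymbol{W}$.

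Next I would absorb the unknown perturbation block. Exactly as in~\eqref{mm1}, I would peel off $\boldsymbol{W}$ via Petersen's lemma, paying the price of the Lagrange multiplier $\varepsilon_{2}\in\mathds{R}_{+}$ together with the bound $\boldsymbol{W}\boldsymbol{W}^{\top}\preceq\Delta\Delta^{\top}$ from Assumption~\ref{ass1}; this is the origin of the terms $\varepsilon_{2}\mu^{2}\Delta\Delta^{\top}$ and $-\varepsilon_{2}I_{\mathcal{T}}$ in the diagonal blocks $\Omega_{22}$ and $\Omega_{33}$ of~\eqref{m21}. The scalar $\mu$ plays the role of a scaling/congruence parameter that lets the $F^{\top}(\boldsymbol{p}_{k})\boldsymbol{X}_{+}$ cross term appear with the prefactor $\mu$ in $\Omega_{12}$, while $F^{\top}(\boldsymbol{p}_{k})$ in $\Omega_{13}$ encodes the direct $\boldsymbol{W}$-coupling row. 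A Schur-complement step, combined with pre- and post-multiplication by $\mathrm{Blkdiag}(P^{-1},P^{-1},I_{\mathcal{T}})$ (mirroring the congruence used between \eqref{m3} and \eqref{m4}), will convert~\eqref{m21} into the decay inequality
\begin{equation*}
V(\boldsymbol{x}_{k+1})-V(\boldsymbol{x}_{k})\le-\beta_{2}V(\boldsymbol{x}_{k})+\boldsymbol{\nu}_{k}^{\top}\Psi_{1}\boldsymbol{\nu}_{k}-\boldsymbol{x}_{k}^{\top}\Psi_{2}\boldsymbol{x}_{k}+\sigma_{\omega}\|\boldsymbol{\omega}_{k}\|^{2},
\end{equation*}
for a suitable scalar $\sigma_{\omega}>0$ derived from $\varepsilon_{2}$ and $P$.

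The third step is to invoke the triggering condition~\eqref{e53}. Between any two successive triggering instants $\hat{k}_{i}$ and $\hat{k}_{i+1}$, by construction $\boldsymbol{\nu}_{k}^{\top}\Psi_{1}\boldsymbol{\nu}_{k}\le\boldsymbol{x}_{k}^{\top}\Psi_{2}\boldsymbol{x}_{k}+v$, so the mixed indefinite terms cancel and we are left with $V(\boldsymbol{x}_{k+1})-V(\boldsymbol{x}_{k})\le-\beta_{2}V(\boldsymbol{x}_{k})+\sigma_{\omega}\|\boldsymbol{\omega}_{k}\|^{2}+v$. The constant $v$ shifts the decay bound, which is precisely what ``practical'' in practical exponential ISS refers to. Iterating as in~\eqref{m5}--\eqref{m7} yields
\begin{equation*}
\|\boldsymbol{x}_{k}\|\le\alpha_{1}e^{-\beta_{2}k/2}\|\boldsymbol{x}_{0}\|+\gamma(\|\boldsymbol{\omega}_{s}\|_{[0,k-1]})+\alpha_{3}\sqrt{v/\beta_{2}},
\end{equation*}
matching the definition~\eqref{m1} with residual $\nu=\sqrt{v/\beta_{2}}$.

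I expect the main obstacle to be the algebraic bookkeeping that shows~\eqref{m21} is \emph{equivalent} (after Schur complement and congruence by $\mathrm{Blkdiag}(P^{-1},P^{-1},I)$) to the dissipation inequality above, in a way that correctly handles the scheduling dependence through $F(\boldsymbol{p}_{k})$. In particular, the role of $\mu$ and the sign pattern $-P\Psi_{1}P$ versus $+P\Psi_{2}P$ must be checked carefully so that the triggering-rule inequality~\eqref{e53} is used with exactly the right weighting after the Lyapunov congruence. As in Section~\ref{3b}, the quadratic dependence of $F(\boldsymbol{p}_{k})$ on $\boldsymbol{p}_{k}$ would be handled, in an implementation, by the full-block S-procedure of Theorem~\ref{theo2} so that~\eqref{m21} reduces to finitely many LMIs evaluated at the vertices of $\mathds{P}$; this is secondary to the stability argument itself and would be stated as a remark rather than recomputed.
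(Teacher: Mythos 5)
There is a genuine gap at the central step. You claim that \eqref{m21}, after one Schur complement and a congruence with $\mathrm{Blkdiag}(P^{-1},P^{-1},I_{\mathcal{T}})$, is by itself equivalent to the dissipation inequality $V(\boldsymbol{x}_{k+1})-V(\boldsymbol{x}_{k})\le-\beta_{2}V(\boldsymbol{x}_{k})+\boldsymbol{\nu}_{k}^{\top}\Psi_{1}\boldsymbol{\nu}_{k}-\boldsymbol{x}_{k}^{\top}\Psi_{2}\boldsymbol{x}_{k}+\sigma_{\omega}\|\boldsymbol{\omega}_{k}\|^{2}$. It is not. The expansion of $V(\boldsymbol{x}_{k+1})$ that you describe in your first paragraph contains the quadratic closed-loop term $P^{-1}F^{\top}(\boldsymbol{p}_{k})(\boldsymbol{X}_{+}-\boldsymbol{W})^{\top}P^{-1}(\boldsymbol{X}_{+}-\boldsymbol{W})F(\boldsymbol{p}_{k})P^{-1}$, the pure term $\boldsymbol{\nu}_{k}^{\top}P^{-1}\boldsymbol{\nu}_{k}$, and all the $\boldsymbol{\omega}_{k}$-dependent terms; but \eqref{m21} has no blocks that dominate any of these: its $(2,2)$ block is $-P\Psi_{1}P+\varepsilon_{2}\mu^{2}\Delta\Delta^{\top}$ with no $P$-Schur block of the kind appearing in \eqref{t1}, there is no row or column associated with $\boldsymbol{\omega}_{k}$ at all, and the only coupling between the $\boldsymbol{x}$- and $\boldsymbol{\nu}$-channels is the single cross block $\mu F^{\top}(\boldsymbol{p}_{k})\boldsymbol{X}_{+}^{\top}$. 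What \eqref{m21} actually delivers, after the Schur complement in $\Omega_{33}$, Petersen's lemma, and the congruence with $\mathrm{Blkdiag}(P^{-1},P^{-1})$, is only \eqref{m25}, i.e. $\mu N(\boldsymbol{W})\prec\mathrm{Blkdiag}(-\Psi_{2},\Psi_{1})$, which is a bound on the indefinite cross term: $2\boldsymbol{x}_{k}^{\top}A_{cl}^{\top}P^{-1}\boldsymbol{\nu}_{k}\le\beta_{2}V(\boldsymbol{x}_{k})+\frac{1}{\mu}\bigl(\boldsymbol{\nu}_{k}^{\top}\Psi_{1}\boldsymbol{\nu}_{k}-\boldsymbol{x}_{k}^{\top}\Psi_{2}\boldsymbol{x}_{k}\bigr)$ with $A_{cl}=(\boldsymbol{X}_{+}-\boldsymbol{W})F(\boldsymbol{p}_{k})P^{-1}$ (note also that the $\Psi$-weights enter with the factor $1/\mu$, so the residual is $v/\mu$, not $v$).

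The missing ingredient is the robust decay certificate that the theorem's hypothesis \eqref{m10} supplies: Theorems \ref{theo1}--\ref{theo2} give, robustly in $\boldsymbol{W}$, the bound $-\beta_{1}V(\boldsymbol{x}_{k})+\sigma\boldsymbol{\omega}_{k}^{\top}P^{-1}\boldsymbol{\omega}_{k}$ for the $\boldsymbol{\nu}$-free part of the decrement (this is where the $A_{cl}^{\top}P^{-1}A_{cl}$ term and the $\boldsymbol{\omega}$-terms are absorbed), and only by adding to it the cross-term bound from \eqref{m21} (which consumes half the decay, whence $\beta_{2}=\beta_{1}/2$) and by folding the leftover $\boldsymbol{\nu}_{k}^{\top}P^{-1}\boldsymbol{\nu}_{k}+2\boldsymbol{\nu}_{k}^{\top}P^{-1}\boldsymbol{\omega}_{k}$ into a $\mathcal{K}$-function of $\|\boldsymbol{\nu}_{k}+\boldsymbol{\omega}_{k}\|$ does one arrive at \eqref{m27}; the triggering rule \eqref{e53} then converts the $\Psi$-terms into the constant offset $v/\mu$. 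This split also matters for the conclusion: the resulting ISS gain is with respect to $\boldsymbol{\zeta}_{k}=\boldsymbol{\nu}_{k}+\boldsymbol{\omega}_{k}$ as in \eqref{m28}, whereas your final display promises a gain in $\boldsymbol{\omega}$ alone, which would require the $\boldsymbol{\nu}$-quadratic to be dominated by the LMI — and, as explained, \eqref{m21} does not do that. Your Petersen/$\varepsilon_{2}$/$\mu$ bookkeeping and the use of the trigger to cancel the $\Psi$-terms are fine, but the ``one LMI equals one full dissipation inequality'' architecture cannot be carried out with \eqref{m21} as stated; you need the paper's two-certificate combination.
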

\begin{proof}
By applying Schur complement to the inequality \eqref{m21}, we can obtain 
\begin{eqnarray}
&&\hspace*{-1.2em}\mu\left[\hspace*{-0.2em}
\begin{array}{cc}
\hspace*{-0.3em}-\beta_{2}P  
\hspace*{-0.2em} & \hspace*{-0.2em} F^{\top}(\boldsymbol{p}_{k})\boldsymbol{X}^{\top}_{+}  \\
\hspace*{-0.3em}\boldsymbol{X}_{+}F(\boldsymbol{p}_{k})
\hspace*{-0.2em} & \hspace*{-0.2em} \boldsymbol{0}  
\end{array} 
\hspace*{-0.2em}\right]\hspace*{-0.2em}-\hspace*{-0.2em}
\left[\hspace*{-0.2em}
\begin{array}{cc}
\hspace*{-0.3em}-P\Psi_{2}P  
\hspace*{-0.2em} & \boldsymbol{0}  \\
\hspace*{-0.3em}\boldsymbol{0}
\hspace*{-0.2em} & \hspace*{-0.2em} P\Psi_{1}P
\end{array} 
\hspace*{-0.2em}\right]\hspace*{-0.2em}+\hspace*{-0.2em} \varepsilon_{2}^{-1}\hspace*{-0.2em}\times \notag \\
&&\hspace*{-1.2em} 
\left[\hspace*{-0.2em}
\begin{array}{c}
\hspace*{-0.3em}F^{\top}(\boldsymbol{p}_{k})   \\
\hspace*{-0.3em}\boldsymbol{0}
\end{array} 
\hspace*{-0.2em}\right]\hspace*{-0.4em}\left[\hspace*{-0.2em}
\begin{array}{cc}
\hspace*{-0.3em}F(\boldsymbol{p}_{k}) &
\hspace*{-0.3em}\boldsymbol{0}
\end{array} 
\hspace*{-0.2em}\right]\hspace*{-0.2em}+\hspace*{-0.2em}\varepsilon_{2}
\left[\hspace*{-0.2em}
\begin{array}{c}
\hspace*{-0.3em}\boldsymbol{0}   \\
\hspace*{-0.3em}-\mu I
\end{array} 
\hspace*{-0.2em}\right]\hspace*{-0.2em}\Delta\Delta^{\hspace*{-0.1em}\top}\hspace*{-0.2em}
\left[\hspace*{-0.2em}
\begin{array}{cc}
\hspace*{-0.3em}\boldsymbol{0} &
\hspace*{-0.3em}-\mu I
\end{array} 
\hspace*{-0.2em}\right]\hspace*{-0.2em}\prec\hspace*{-0.2em} \boldsymbol{0}. \label{m23}
\end{eqnarray}
With the aid of Petersen's lemma \cite[Lemma 1]{Persis2023Learning}, \eqref{m23} further leads to 
\begin{eqnarray}
&&\hspace*{-2.0em}\mu\left[\hspace*{-0.2em}
\begin{array}{cc}
\hspace*{-0.3em}-\beta_{2}P  
\hspace*{-0.2em} & \hspace*{-0.2em} F^{\top}(\boldsymbol{p}_{k})\boldsymbol{X}^{\top}_{+}  \\
\hspace*{-0.3em}\boldsymbol{X}_{+}F(\boldsymbol{p}_{k})
\hspace*{-0.2em} & \hspace*{-0.2em} \boldsymbol{0}  
\end{array} 
\hspace*{-0.2em}\right]\hspace*{-0.2em}-\hspace*{-0.2em}
\left[\hspace*{-0.2em}
\begin{array}{cc}
\hspace*{-0.3em}P  
\hspace*{-0.2em} & \boldsymbol{0}  \\
\hspace*{-0.3em}\boldsymbol{0}
\hspace*{-0.2em} & \hspace*{-0.2em} P
\end{array} 
\hspace*{-0.2em}\right]\hspace*{-0.2em}\Psi\hspace*{-0.2em} \left[\hspace*{-0.2em}
\begin{array}{cc}
\hspace*{-0.3em}P  
\hspace*{-0.2em} & \boldsymbol{0}  \\
\hspace*{-0.3em}\boldsymbol{0}
\hspace*{-0.2em} & \hspace*{-0.2em} P
\end{array} 
\hspace*{-0.2em}\right]+\notag \\
&&\hspace*{-2.0em}
\left[\hspace*{-0.2em}
\begin{array}{c}
\hspace*{-0.3em}F^{\top}(\boldsymbol{p}_{k})   \\
\hspace*{-0.3em}\boldsymbol{0}
\end{array} 
\hspace*{-0.2em}\right]\hspace*{-0.2em}\boldsymbol{W}^{\hspace*{-0.1em}\top}
\hspace*{-0.3em}\left[\hspace*{-0.2em}
\begin{array}{cc}
\hspace*{-0.3em}\boldsymbol{0} \hspace*{-0.3em}&
\hspace*{-0.3em}-\mu I
\end{array} 
\hspace*{-0.2em}\right]\hspace*{-0.2em}+\hspace*{-0.2em}\left[\hspace*{-0.2em}
\begin{array}{c}
\hspace*{-0.3em}\boldsymbol{0}   \\
\hspace*{-0.3em}-\mu I
\end{array} 
\hspace*{-0.2em}\right]\hspace*{-0.2em}\boldsymbol{W}\hspace*{-0.2em}
\left[\hspace*{-0.2em}
\begin{array}{cc}
\hspace*{-0.3em}F(\boldsymbol{p}_{k})\hspace*{-0.3em} &
\hspace*{-0.3em}\boldsymbol{0}
\end{array} 
\hspace*{-0.2em}\right]\hspace*{-0.2em}\prec\hspace*{-0.2em} \boldsymbol{0} \label{m24}
\end{eqnarray}
where $\Psi\hspace*{-0.2em}=\hspace*{-0.2em}
\mathrm{Blkdiag}(-\Psi_{2},\Psi_{1})$. We then pre- and post-multiply 
both sides of \eqref{m24} with $\mathrm{Blkdiag}(P^{-1},P^{-1})$, and obtain 
\begin{eqnarray}
\hspace*{-0.6em} \mu N(\boldsymbol{W})-\Psi\hspace*{-0.3em}& \hspace*{-0.6em}\prec\hspace*{-0.6em} &\hspace*{-0.3em}\boldsymbol{0},
\label{m25}
\end{eqnarray}
where 
\begin{eqnarray*}
N(\boldsymbol{W})\hspace*{-0.2em}=\hspace*{-0.2em}
\left[\hspace*{-0.2em}
\begin{array}{cc}
\hspace*{-0.3em}-\beta_{2}P^{-1}\hspace*{-0.3em} & \amalg_{12}
\hspace*{-0.3em} \\
\hspace*{-0.3em}\amalg_{12}^{\top}\hspace*{-0.3em} &
\hspace*{-0.3em}\boldsymbol{0} 
\end{array} 
\hspace*{-0.2em}\right]
\end{eqnarray*}
with $\amalg_{12}\hspace*{-0.2em}=\hspace*{-0.2em}P^{-1}F^{\top}\hspace*{-0.2em}(\boldsymbol{p}_{k})(\boldsymbol{X}_{+}
\hspace*{-0.2em}-\hspace*{-0.2em}\boldsymbol{W})^{\hspace*{-0.1em}\top}\hspace*{-0.2em}P^{-1}$. Let
$\boldsymbol{\xi}_{k}\hspace*{-0.2em}=\hspace*{-0.2em}\mathrm{Col}(\boldsymbol{x}_{k},\boldsymbol{\nu}_{k})$, and 
the inequality \eqref{m25} yields
\begin{eqnarray}
\hspace*{-0.6em} \boldsymbol{\xi}^{\top}_{k}N(\boldsymbol{W})\boldsymbol{\xi}_{k}\hspace*{-0.3em}& \hspace*{-0.6em}\leq\hspace*{-0.6em} &\hspace*{-0.3em}v/\mu,
\label{m26}
\end{eqnarray}
holding for the triggering logic $\boldsymbol{\nu}^{\top}_{k}\Psi_{1}\boldsymbol{\nu}_{k}\hspace*{-0.2em}<\hspace*{-0.2em}\boldsymbol{x}^{\top}_{k}\Psi_{2}\boldsymbol{x}_{k}\hspace*{-0.2em}+\hspace*{-0.2em}v$ within the interval 
$\mathds{N}_{[\hat{k}_{i},\hat{k}_{i+1})}$. We recall the Lyapunov function $V(\boldsymbol{x}_{k})$ 
$=\hspace*{-0.2em}\boldsymbol{x}_{k}^{\top}P^{-1}\boldsymbol{x}_{k}$, then on the premise of \eqref{m10}, the inequality 
\begin{eqnarray}
& \hspace*{-0.6em}\hspace*{-0.6em} & V(\boldsymbol{x}_{k+1})\hspace*{-0.2em}-\hspace*{-0.2em}V(\boldsymbol{x}_{k}) \notag \\
& \hspace*{-0.6em}\leq\hspace*{-0.6em} &-\beta_{2}V(\boldsymbol{x}_{k})\hspace*{-0.2em}+\hspace*{-0.2em}\sigma\boldsymbol{\omega}_{k}^{\top}P^{-1}\boldsymbol{\omega}_{k}
\hspace*{-0.2em}+\hspace*{-0.2em}\boldsymbol{\xi}_{k}^{\top}N(\boldsymbol{W})\boldsymbol{\xi}_{k} \notag \\
& \hspace*{-0.6em}\hspace*{-0.6em} & +\boldsymbol{\nu}_{k}^{\top}P^{-1}\boldsymbol{\omega}_{k}\hspace*{-0.2em}+\hspace*{-0.2em}
\boldsymbol{\omega}_{k}^{\top}P^{-1}\boldsymbol{\nu}_{k}\hspace*{-0.2em}+\hspace*{-0.2em}
\boldsymbol{\nu}_{k}^{\top}P^{-1}\boldsymbol{\nu}_{k} \notag \\
& \hspace*{-0.6em}\leq\hspace*{-0.6em} & -\beta_{2}V(\boldsymbol{x}_{k})\hspace*{-0.2em}+\hspace*{-0.2em}\sigma\boldsymbol{\zeta}_{k}^{\top}P^{-1}
\boldsymbol{\zeta}_{k}\hspace*{-0.2em}+\hspace*{-0.2em}\boldsymbol{\xi}_{k}^{\top}N(W)\boldsymbol{\xi}_{k} \notag \\
& \hspace*{-0.6em}\leq\hspace*{-0.6em} & -\beta_{2}V(\boldsymbol{x}_{k})\hspace*{-0.2em}+\hspace*{-0.2em}\sigma \lambda_{\max}(P^{-1})\|\boldsymbol{\zeta}_{k}\|^{2}
\hspace*{-0.2em}+\hspace*{-0.2em}v/\mu \notag \\
& \hspace*{-0.6em}=\hspace*{-0.6em} & -\beta_{2}V(\boldsymbol{x}_{k})\hspace*{-0.2em}+\hspace*{-0.2em}h(\|\boldsymbol{\zeta}_{k}\|)
\hspace*{-0.2em}+\hspace*{-0.2em}v/\mu,
\label{m27}
\end{eqnarray}
is satisfied with $\boldsymbol{\zeta}_{k}\hspace*{-0.2em}=\hspace*{-0.2em}\boldsymbol{\nu}_{k}
\hspace*{-0.2em}+\hspace*{-0.2em}\boldsymbol{\omega}_{k}$. Similar to the proof of Theorem \ref{theo1}, it follows that  
\begin{eqnarray}
\hspace*{-1.0em} \|\boldsymbol{x}_{k}\|
&\hspace*{-0.8em}\leq\hspace*{-0.8em}&e^{-\beta_{2}k/2}\mathcal{R}\|\boldsymbol{x}_{0}\|\hspace*{-0.2em}+\hspace*{-0.2em}\gamma_{2}(\|\boldsymbol{\zeta_{s}}\|_{[0,k-1]})
\hspace*{-0.2em}+\hspace*{-0.2em}c_{0}v
\label{m28}
\end{eqnarray}
with $\mathcal{R}\hspace*{-0.2em}=\hspace*{-0.2em}\lambda_{\max}^{1/2}(\hspace*{-0.1em}P^{-1}\hspace*{-0.1em})/\lambda_{\min}^{1/2}(\hspace*{-0.1em}P^{-1}\hspace*{-0.1em})$ implying the overshoot and $\gamma_{2}$ implies a $\mathcal{K}$-function with the same structure as $\gamma_{1}$ in \eqref{m7}. Besides, the parameter 
\begin{eqnarray}
\hspace*{-1.0em} c_{0}
&\hspace*{-0.8em}=\hspace*{-0.8em}&c_{1}/(\mu(1-e^{-\beta_{2}/2}))
\label{m228}
\end{eqnarray}
is determined with $c_{1}\hspace*{-0.2em}=\hspace*{-0.2em}1/\hspace*{-0.2em}\sqrt{\lambda_{\min}(P^{-1})}$. Hence, \eqref{m21} guarantees 
the practically exponential convergence of 
event-triggered LPV system \eqref{dd1}, which completes the proof.
\end{proof}

Due to the existence of parameter-varying term $F(\boldsymbol{p}_{k})$, that implies the difficulty in solving \eqref{m21} 
via convex optimization toolkits directly,  we need to reduce \eqref{m21} to a finite number of constraints as in Subsection \ref{3b}, which leads to the following theorem. 
\begin{theorem}
\label{theo4}
For the data matrices generated from $\mathbb{D}_{s}$ in \eqref{ee4}, we suppose that there are the matrices 
$\hspace*{-0.1em}\Psi_{1},\Psi_{2}\hspace*{-0.3em}\in\hspace*{-0.3em}\mathbb{S}^{n}_{++}$, 
$\hspace*{-0.1em}\boldsymbol{\Xi}\hspace*{-0.2em}\in\hspace*{-0.2em}\mathbb{S}^{2\ell(2n+\mathcal{T})}$, and the scalars 
$\mu,\varepsilon_{2}\hspace*{-0.2em}\in\hspace*{-0.2em}\mathds{R}_{+}$, such that 
\begin{subequations}
\label{m29}
\begin{align}
\hspace*{-0.6em}\left[\hspace*{-0.2em}
\begin{array}{cc}
\hspace*{-0.2em}\boldsymbol{M}_{11}\hspace*{-0.2em}  &
\hspace*{-0.2em}\boldsymbol{M}_{12}\hspace*{-0.2em} \\
\hspace*{-0.2em}I\hspace*{-0.2em}  &
\hspace*{-0.2em}\boldsymbol{0}\hspace*{-0.2em} \\
\hspace*{-0.2em}\boldsymbol{M}_{21}\hspace*{-0.2em}  &
\hspace*{-0.2em}\boldsymbol{M}_{22}\hspace*{-0.2em} 
\end{array} 
\hspace*{-0.2em}\right]^{\hspace*{-0.2em}\top}\hspace*{-0.3em}\left[\hspace*{-0.2em}
\begin{array}{cc}
\hspace*{-0.2em}\boldsymbol{\Xi}\hspace*{-0.2em}  &
\hspace*{-0.2em}\boldsymbol{0}\hspace*{-0.2em} \\
\hspace*{-0.2em}\boldsymbol{0}\hspace*{-0.2em}  &
\hspace*{-0.2em}-\boldsymbol{\Phi}\hspace*{-0.2em} 
\end{array} 
\hspace*{-0.2em}\right]\hspace*{-0.2em}
\left[\hspace*{-0.2em}
\begin{array}{cc}
\hspace*{-0.2em}\boldsymbol{M}_{11}\hspace*{-0.2em}  &
\hspace*{-0.2em}\boldsymbol{M}_{12}\hspace*{-0.2em} \\
\hspace*{-0.2em}I\hspace*{-0.2em}  &
\hspace*{-0.2em}\boldsymbol{0}\hspace*{-0.2em} \\
\hspace*{-0.2em}\boldsymbol{M}_{21}\hspace*{-0.2em}  &
\hspace*{-0.2em}\boldsymbol{M}_{22}\hspace*{-0.2em} 
\end{array} 
\hspace*{-0.2em}\right]
& \hspace*{-0.3em}\prec\hspace*{-0.2em} 
\boldsymbol{0}, \label{m29:1} \\
\hspace*{-0.6em}\left[\hspace*{-0.2em}
\begin{array}{c}
\hspace*{-0.2em}I\hspace*{-0.2em}  \\
\hspace*{-0.2em}\boldsymbol{\Upsilon}(\boldsymbol{p}_{k})\hspace*{-0.2em}  
\end{array} 
\hspace*{-0.2em}\right]^{\hspace*{-0.2em}\top} \hspace*{-0.2em}
\left[\hspace*{-0.2em}
\begin{array}{cc}
\hspace*{-0.2em}\boldsymbol{\Xi}_{11}\hspace*{-0.2em}  &
\hspace*{-0.2em}\boldsymbol{\Xi}_{12}\hspace*{-0.2em}   \\
\hspace*{-0.2em}\boldsymbol{\Xi}_{12}^{\top}\hspace*{-0.2em}  &
\hspace*{-0.2em}\boldsymbol{\Xi}_{22}\hspace*{-0.2em} 
\end{array} 
\hspace*{-0.2em}\right]\hspace*{-0.2em}
\left[\hspace*{-0.2em}
\begin{array}{c}
\hspace*{-0.2em}I\hspace*{-0.2em}  \\
\hspace*{-0.2em}\boldsymbol{\Upsilon}(\boldsymbol{p}_{k})\hspace*{-0.2em}  
\end{array} 
\hspace*{-0.2em}\right]
& \hspace*{-0.3em}\succeq\hspace*{-0.2em} 
\boldsymbol{0}, \label{m29:2} \\
\hspace*{-0.6em}\boldsymbol{\Xi}_{22}
& \hspace*{-0.3em}\prec\hspace*{-0.2em} 
\boldsymbol{0}, \label{m29:3} 
\end{align}
\end{subequations}
hold for all $\boldsymbol{p}_{k}\hspace*{-0.2em} \in\hspace*{-0.2em} \mathds{P}$, where 
\begin{subequations}
\label{m30}
\begin{align}
\hspace*{-0.6em}\boldsymbol{\Upsilon}(\boldsymbol{p}_{k})\hspace*{-0.2em}=
&  
\mathrm{Blkdiag}(\mathrm{Diag}(\boldsymbol{p}_{k})\hspace*{-0.2em} \otimes\hspace*{-0.2em} I_{n},
\mathrm{Diag}(\boldsymbol{p}_{k})\hspace*{-0.2em} \otimes\hspace*{-0.2em} I_{n}, \notag\\
\hspace*{-0.6em}
&  
\mathrm{Diag}(\boldsymbol{p}_{k})\hspace*{-0.2em} \otimes\hspace*{-0.2em} I_{\mathcal{T}}), \\
\hspace*{-0.6em}\boldsymbol{M}_{11}\hspace*{-0.2em}=&  
\boldsymbol{0}_{\ell(2n+\mathcal{T})\times \ell(2n+\mathcal{T})}, \label{m30:2}\\
\hspace*{-0.6em}\boldsymbol{M}_{12}\hspace*{-0.2em}=
&\mathrm{Blkdiag}(\boldsymbol{1}_{\ell}\hspace*{-0.3em}\otimes\hspace*{-0.2em}I_{n},
\hspace*{-0.1em}\boldsymbol{1}_{\ell}\hspace*{-0.3em}\otimes\hspace*{-0.2em}I_{n},
\hspace*{-0.1em}\boldsymbol{1}_{\ell}\hspace*{-0.3em}\otimes\hspace*{-0.2em}I_{\mathcal{T}}\hspace*{-0.1em}), \label{m30:3}\\
\hspace*{-0.6em}\boldsymbol{M}_{21}\hspace*{-0.2em}=&
\mathrm{Blkdiag}\left( \left[\hspace*{-0.2em}
\begin{array}{c}
\hspace*{-0.2em}\boldsymbol{0}_{n\times\ell n}\hspace*{-0.2em}  \\
\hspace*{-0.2em}I_{\ell n}\hspace*{-0.2em}  
\end{array} 
\hspace*{-0.2em}\right],\left[\hspace*{-0.2em}
\begin{array}{c}
\hspace*{-0.2em}\boldsymbol{0}_{n\times\ell n}\hspace*{-0.2em}  \\
\hspace*{-0.2em}I_{\ell n}\hspace*{-0.2em}  
\end{array} 
\hspace*{-0.2em}\right],\left[\hspace*{-0.2em}
\begin{array}{c}
\hspace*{-0.2em}\boldsymbol{0}_{\mathcal{T}\times\ell \mathcal{T}}\hspace*{-0.2em}  \\
\hspace*{-0.2em}I_{\ell \mathcal{T}}\hspace*{-0.2em}  
\end{array} 
\hspace*{-0.2em}\right] \right), \label{m30:4}\\
\hspace*{-0.6em}\boldsymbol{M}_{22}\hspace*{-0.2em}=&
\mathrm{Blkdiag}\left( \left[\hspace*{-0.2em}
\begin{array}{c}
\hspace*{-0.2em}I_{n}\hspace*{-0.2em}  \\
\hspace*{-0.2em}\boldsymbol{0}_{\ell n\times n}\hspace*{-0.2em}  
\end{array} 
\hspace*{-0.2em}\right],\left[\hspace*{-0.2em}
\begin{array}{c}
\hspace*{-0.2em}I_{n}\hspace*{-0.2em}  \\
\hspace*{-0.2em}\boldsymbol{0}_{\ell n\times n}\hspace*{-0.2em}  
\end{array} 
\hspace*{-0.2em}\right],\left[\hspace*{-0.2em}
\begin{array}{c}
\hspace*{-0.2em}I_{\mathcal{T}}\hspace*{-0.2em}  \\
\hspace*{-0.2em}\boldsymbol{0}_{\ell \mathcal{T}\times \mathcal{T}}\hspace*{-0.2em}  
\end{array} 
\hspace*{-0.2em}\right] \right), \label{m30:5} \\
\hspace*{-0.6em}\boldsymbol{\Phi}\hspace*{-0.2em}=&  
-\left[\hspace*{-0.2em}
\begin{array}{ccc}
\hspace*{-0.2em}\boldsymbol{\mathcal{Y}}  \hspace*{-0.2em} & \hspace*{-0.2em}\mu(\boldsymbol{\mathcal{X}}_{+}F_{Q})^{\top} 
\hspace*{-0.2em}&\hspace*{-0.2em} F_{Q}^{\top} \\
\hspace*{-0.2em}\mu(\boldsymbol{\mathcal{X}}_{+}F_{Q})  \hspace*{-0.2em} & \hspace*{-0.2em}\bar{\boldsymbol{\mathcal{Y}}} 
\hspace*{-0.2em}&\hspace*{-0.2em} \boldsymbol{0} \\ 
\hspace*{-0.2em}F_{Q}  \hspace*{-0.2em} & \hspace*{-0.2em}\boldsymbol{0}
\hspace*{-0.2em}&\hspace*{-0.2em} \boldsymbol{\mathcal{E}}
\end{array}
\hspace*{-0.2em}\right], \label{m30:f}
\end{align}
\end{subequations}
with $\boldsymbol{1}_{n}\hspace*{-0.2em}=\hspace*{-0.2em}[1,\cdots,1]\hspace*{-0.2em}\in\hspace*{-0.2em}\mathds{R}^{n}$ and 
\begin{subequations} 
\begin{align}
\hspace*{-0.6em}\boldsymbol{\mathcal{Y}}\hspace*{-0.2em}=&  
\mathrm{Blkdiag}(\Omega_{11},\boldsymbol{0}_{\ell n \times \ell n}),   \notag \\ 
\hspace*{-0.6em}\bar{\boldsymbol{\mathcal{Y}}}\hspace*{-0.2em}=&  
\mathrm{Blkdiag}(\Omega_{22},\boldsymbol{0}_{\ell n \times \ell n}), \notag \\
\hspace*{-0.6em}\boldsymbol{\mathcal{E}}\hspace*{-0.2em}=&  
\mathrm{Blkdiag}(\Omega_{33},\boldsymbol{0}_{\ell\mathcal{T}\times \ell\mathcal{T}}), \notag \\
\hspace*{-0.6em}\boldsymbol{\mathcal{X}}_{+}\hspace*{-0.2em}=&  
\mathrm{Blkdiag}(\boldsymbol{X}_{+},I_{\ell}\otimes\boldsymbol{X}_{+}). \notag
\end{align}
\end{subequations}
Here, $\Omega_{\imath\imath}$ with $\imath\hspace*{-0.2em}\in\hspace*{-0.2em}\mathds{N}_{[1,3]}$ is defined in \eqref{m22}. 
Then, the feasible solution of \eqref{m29} returns the data-based event-triggered scheme in conjunction with the 
feedback controller \eqref{e10}, such that the 
event-triggered LPV system \eqref{dd1} is practically exponentially ISS.
\end{theorem}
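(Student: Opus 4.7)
The plan is to mimic the reduction already carried out in the transition from Theorem \ref{theo1} to Theorem \ref{theo2}, but applied to the event-triggered stability certificate \eqref{m21} of Theorem \ref{theo3}. The starting point is that \eqref{m21} is a matrix inequality that depends on the scheduling variable $\boldsymbol{p}_{k}$ only through $F(\boldsymbol{p}_{k})$ and the product $\boldsymbol{X}_{+}F(\boldsymbol{p}_{k})$, both of which are quadratic in $\boldsymbol{p}_{k}$. I would first absorb this quadratic dependence into an outer linear fractional factor. Concretely, using the identity \eqref{m8}, rewrite $F(\boldsymbol{p}_{k})$ and apply the Kronecker manipulation of \eqref{m12} to $\boldsymbol{X}_{+}F(\boldsymbol{p}_{k})$, so that each occurrence of the scheduling signal is collected inside a single block-diagonal multiplier $\boldsymbol{\Upsilon}(\boldsymbol{p}_{k})$ as defined in \eqref{m30}.

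Next, I would substitute these rewritten expressions into \eqref{m21} and identify the resulting quadratic form with the structure $M^{\top}(\boldsymbol{p}_{k})\boldsymbol{\Phi}M(\boldsymbol{p}_{k})\prec\boldsymbol{0}$, where $\boldsymbol{\Phi}$ is the constant matrix in \eqref{m30:f} that encodes all the data-dependent and decision-variable terms appearing in $\Omega_{11},\Omega_{22},\Omega_{33}$, and $M(\boldsymbol{p}_{k})$ admits the linear fractional representation
\begin{eqnarray*}
M(\boldsymbol{p}_{k})=\boldsymbol{M}_{22}+\boldsymbol{M}_{21}\boldsymbol{\Upsilon}(\boldsymbol{p}_{k})(I-\boldsymbol{M}_{11}\boldsymbol{\Upsilon}(\boldsymbol{p}_{k}))^{-1}\boldsymbol{M}_{12},
\end{eqnarray*}
with $\boldsymbol{M}_{ij}$ given by \eqref{m30:2}--\eqref{m30:5}. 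At this point the matrix inequality is separated into a constant part and a structured parameter-dependent part, exactly in the form required by the full-block S-procedure.

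I would then invoke the full-block S-procedure of \cite[Theorem 8]{Scherer2001LPV} using the multiplier $\boldsymbol{\Xi}$: condition \eqref{m29:1} plays the role of the inner inequality, \eqref{m29:2} is the multiplier constraint parameterized by $\boldsymbol{\Upsilon}(\boldsymbol{p}_{k})$, and the sign restriction \eqref{m29:3} on $\boldsymbol{\Xi}_{22}$ removes the multi-convexity in $\boldsymbol{p}_{k}$. This last property is the crucial one: since $\boldsymbol{\Upsilon}(\boldsymbol{p}_{k})$ is affine in $\boldsymbol{p}_{k}$ and $\boldsymbol{\Xi}_{22}\prec\boldsymbol{0}$, the left-hand side of \eqref{m29:2} becomes concave in $\boldsymbol{p}_{k}$, so it suffices to enforce it at the finitely many vertices of the polytope $\mathds{P}$, exactly as in the derivation of Theorem \ref{theo2}. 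Combining this with \eqref{m29:1} recovers \eqref{m21} pointwise in $\boldsymbol{p}_{k}$, and then Theorem \ref{theo3} directly yields practical exponential ISS of the event-triggered closed loop \eqref{dd1} with the controller \eqref{e10} already fixed by Theorem \ref{theo2}.

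The main obstacle I anticipate is bookkeeping: matching the block dimensions of $\boldsymbol{\Xi}\in\mathbb{S}^{2\ell(2n+\mathcal{T})}$ and of the augmented data matrices $\boldsymbol{\mathcal{X}}_{+}$, $\boldsymbol{\mathcal{Y}}$, $\bar{\boldsymbol{\mathcal{Y}}}$, $\boldsymbol{\mathcal{E}}$ to the blocks of \eqref{m21} after the Kronecker rewriting, so that the identification $M^{\top}(\boldsymbol{p}_{k})\boldsymbol{\Phi}M(\boldsymbol{p}_{k})\prec\boldsymbol{0}\Leftrightarrow\eqref{m21}$ is exact rather than merely sufficient. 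A secondary subtlety is that, unlike in Theorem \ref{theo2}, the parameter $\mu$ multiplies several blocks of $\Omega$ inhomogeneously; I would therefore verify that $\mu$ can be absorbed into $\boldsymbol{\Phi}$ as a fixed coefficient (treating $\mu$ as a scalar decision variable alongside $\varepsilon_{2}$ and the matrix unknowns) without destroying the LFT factorization, after which the rest of the argument proceeds verbatim.
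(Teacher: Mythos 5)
Your proposal follows essentially the same route as the paper: rewrite the parameter-dependent certificate \eqref{m21} via the Kronecker identities \eqref{m8}, \eqref{m12} as an LFT-structured quadratic inequality $\boldsymbol{M}^{\top}(\boldsymbol{p}_{k})\boldsymbol{\Phi}\boldsymbol{M}(\boldsymbol{p}_{k})$ with $\boldsymbol{M}(\boldsymbol{p}_{k})$ as in \eqref{m32}, apply the full-block S-procedure with multiplier $\boldsymbol{\Xi}$ (with $\boldsymbol{\Xi}_{22}\prec\boldsymbol{0}$ giving the vertex reduction over $\mathds{P}$), and conclude practical exponential ISS through Theorem \ref{theo3} with the controller from Theorem \ref{theo2}. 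The only discrepancy is a harmless sign convention (the paper states $\boldsymbol{M}^{\top}\boldsymbol{\Phi}\boldsymbol{M}\succ\boldsymbol{0}$ since $\boldsymbol{\Phi}$ in \eqref{m30:f} already carries a minus sign), which resolves itself in the bookkeeping you already flag.
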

\begin{proof}
The proof is along the same line as the counterpart of Theorem \ref{theo2}, based on which, the variables $P,F_{Q}$ and 
$\beta_{2}\hspace*{-0.2em}=\beta_{1}/2$ can be obtained as the prior knowledge. Similar to \eqref{m14}, we can rewrite 
the matrix inequality \eqref{m21} with the form 
\begin{eqnarray}
\label{m31}
\hspace*{-0.6em}\boldsymbol{M}^{\top}(\boldsymbol{p}_{k})\boldsymbol{\Phi} 
\boldsymbol{M}(\boldsymbol{p}_{k})\hspace*{-0.2em}\succ\hspace*{-0.2em} \boldsymbol{0},
\end{eqnarray}
where $\boldsymbol{\Phi}$ is defined in \eqref{m30:f} and $\boldsymbol{M}(\boldsymbol{p}_{k})$ satisfies
\begin{eqnarray}
\label{m32}
\hspace*{-0.6em}\boldsymbol{M}(\boldsymbol{p}_{k})\hspace*{-0.2em}=\hspace*{-0.2em} 
\boldsymbol{M}_{22}\hspace*{-0.2em}+\hspace*{-0.2em}
\boldsymbol{M}_{21}\hspace*{-0.2em}\boldsymbol{\Upsilon}(\boldsymbol{p}_{k})(I\hspace*{-0.2em}-\hspace*{-0.2em}
\boldsymbol{M}_{11}
\hspace*{-0.2em}\boldsymbol{\Upsilon}(\boldsymbol{p}_{k}))^{-1}\hspace*{-0.2em}\boldsymbol{M}_{12},
\end{eqnarray}
with well-defined $\boldsymbol{M}_{\imath\jmath}, \imath,\jmath\in\mathds{N}_{[1,2]}$ in \eqref{m30:2}-\eqref{m30:5}. Based on the full-block S-procedure, we can alternatively determine the event-triggered parameters by solving the convex programm-ing \eqref{m29} while ensuring the practical exponential convergence of 
closed-loop LPV system, which completes the proof of this theorem.   
\end{proof}

Note that if $\boldsymbol{\Xi}_{22}\hspace*{-0.2em}\prec\hspace*{-0.2em}\boldsymbol{0}$, the inequality \eqref{m29:2} is no longer multi-convexity about $\boldsymbol{p}_{k}$, such that a finite set of
matrix inequalities can be specifically satisfied at the vertices
of $\mathds{P}$, which acts as a convex polytope. 
Before ending this subsection, we aim to conclude the 
event-triggered LPV control from data for 
robust stabilization by the following procedure.  
\begin{procedure}
\label{proc1}
The design procedure of data-driven event-triggered LPV control for 
robust stabilization. \newline
\noindent 1) Construct the data-based description of 
event-triggered LPV system \eqref{dd1} and determine the 
length $\mathcal{T}$ of collected data in terms of 
$\theta$-persistence of excitation criterion in  Lemma \ref{lemm1}. \newline
\noindent 2) Seek for the feasible solution of Theorem \ref{theo2} and 
obtain the available decision variables $P,F_{Q}$ and constant $\beta_{1}$. \newline
\noindent 3) Calculate the feasible solution of Theorem \ref{theo4} 
and verify the stability of event-triggered LPV systems  
\eqref{dd1}. \newline
\noindent 4) Deploy the data-based event-triggered LPV controller 
\eqref{et111}. 
\end{procedure}

\subsection{Application on Reference Tracking}

In this subsection, we aim to generalize the previous theoretical derivations to the scenario of output reference tracking, that is, 
the output $\boldsymbol{y}_{k}$ converges to the desired $\boldsymbol{r}_{k}$. To this end, we define an auxiliary state $\boldsymbol{\chi}_{k}\hspace*{-0.2em}\in\hspace*{-0.2em}\mathds{R}^{r}$ with the form 
\begin{eqnarray}
\label{n01}
    \boldsymbol{\chi}_{k+1}\hspace*{-0.2em}=\hspace*{-0.2em}\boldsymbol{\chi}_{k}\hspace*{-0.2em}+\hspace*{-0.2em}
    (\boldsymbol{y}_{k}\hspace*{-0.2em}-\hspace*{-0.2em}\boldsymbol{r}_{k}),
\end{eqnarray}
where $\boldsymbol{r}_{k}\hspace*{-0.2em}\in\hspace*{-0.2em}\mathds{R}^{r}$ represents the reference signal and $\boldsymbol{y}_{k}$ is defined as in \eqref{e1}. We denote the augmented state by $\boldsymbol{\psi}_{k}\hspace*{-0.2em}=\hspace*{-0.2em}\mathrm{Col}(\boldsymbol{x}_{k},\boldsymbol{\chi}_{k})$ $\in\hspace*{-0.2em}\mathds{R}^{\bar{n}}$ with $\bar{n}\hspace*{-0.2em}=\hspace*{-0.2em}n
\hspace*{-0.2em}+\hspace*{-0.2em}r$. Based on \eqref{e1}, 
we obtain the augmented LPV system expressed by 
\begin{eqnarray}
\label{n02}
    \boldsymbol{\psi}_{k+1}\hspace*{-0.2em}=\hspace*{-0.2em}\widehat{A}_{d}(\boldsymbol{p}_{k})\boldsymbol{\psi}_{k}
    \hspace*{-0.2em}+\hspace*{-0.2em}\widehat{B}_{d}(\boldsymbol{p}_{k})\boldsymbol{u}_{k}
    \hspace*{-0.2em}+\hspace*{-0.2em}\widehat{E}_{d}
    (\boldsymbol{p}_{k})\boldsymbol{\varpi}_{k}, 
\end{eqnarray}
where $\boldsymbol{\varpi}_{k}\hspace*{-0.2em}=\hspace*{-0.2em}\mathrm{Col}(\boldsymbol{\omega}_{k},\boldsymbol{r}_{k})$, 
$\widehat{E}_{d}(\boldsymbol{p}_{k})=\mathrm{Blkdiag}(I,-I)$ and 
\begin{eqnarray}
\label{n03}
\hspace*{-0.6em}\widehat{A}_{d}(\boldsymbol{p}_{k})&\hspace*{-0.6em}=\hspace*{-0.6em}&\hspace*{-0.2em} \left[\hspace*{-0.2em}
\begin{array}{cc}
\hspace*{-0.2em}A_{d}(\boldsymbol{p}_{k})\hspace*{-0.2em}  & \hspace*{-0.2em}\boldsymbol{0} \\
\hspace*{-0.2em}C_{d}(\boldsymbol{p}_{k}) \hspace*{-0.2em}&\hspace*{-0.2em} I
\end{array} 
\hspace*{-0.2em}\right], B_{d}(\boldsymbol{p}_{k})\hspace*{-0.2em}=\hspace*{-0.2em}
\left[\hspace*{-0.2em}
\begin{array}{c}
\hspace*{-0.2em}B_{d}(\boldsymbol{p}_{k})\\
\hspace*{-0.2em}D_{d}(\boldsymbol{p}_{k}) 
\end{array} 
\hspace*{-0.2em}\right]. 
\end{eqnarray}
Note that $\widehat{A}_{d}(\boldsymbol{p}_{k})$ and $\widehat{B}_{d}(\boldsymbol{p}_{k})$ have affine dependence
on $\boldsymbol{p}_{k}$. In view of \eqref{e4}, we take $\widehat{A}_{d}(\boldsymbol{p}_{k})$ as an example, leading to 
\begin{eqnarray}
\label{n04}
\hspace*{-0.6em}\widehat{A}_{d}(\boldsymbol{p}_{k})&\hspace*{-0.6em}=\hspace*{-0.6em}&\hspace*{-0.2em} 
\widehat{A}_{d0}\hspace*{-0.2em}+\hspace*{-0.3em}\sum\limits_{\imath=1}^{\ell}\boldsymbol{p}_{k}^{[\imath]}\widehat{A}_{d\imath},
\end{eqnarray}
where $\widehat{A}_{d0}\hspace*{-0.2em}=\hspace*{-0.2em}
\left[\hspace*{-0.2em}
\begin{array}{cc}
\hspace*{-0.2em}A_{d0}\hspace*{-0.2em}\hspace*{-0.2em} &\hspace*{-0.2em} \boldsymbol{0}\hspace*{-0.2em}\hspace*{-0.2em}\\
\hspace*{-0.2em}C_{d0} \hspace*{-0.2em}&\hspace*{-0.2em} I\hspace*{-0.2em}
\end{array} 
\hspace*{-0.2em}\right]$ and 
$\widehat{A}_{d\imath}\hspace*{-0.2em}=\hspace*{-0.2em}
\left[\hspace*{-0.2em}
\begin{array}{cc}
\hspace*{-0.2em}A_{d\imath}\hspace*{-0.2em}\hspace*{-0.2em} &\hspace*{-0.2em} \boldsymbol{0}\hspace*{-0.2em}\hspace*{-0.2em}\\
\hspace*{-0.2em}C_{d\imath} \hspace*{-0.2em}&\hspace*{-0.2em} \boldsymbol{0}\hspace*{-0.2em}
\end{array} 
\hspace*{-0.2em}\right]$ with $\imath\hspace*{-0.2em}\in\hspace*{-0.2em}\mathds{N}_{[1,\ell]}$ are the matrices with appropriate 
dimensions. Similarly, $\widehat{B}_{d}(\boldsymbol{p}_{k})$ can be expressed with the same structure as \eqref{n04}. Accordingly, we rewrite 
the augmented LPV system by 
\begin{eqnarray}
\label{n05}
\boldsymbol{\psi}_{k+1}\hspace*{-0.2em}=\hspace*{-0.2em}\widehat{\mathcal{A}}
\left[\hspace*{-0.2em}
\begin{array}{c}
\hspace*{-0.2em}\boldsymbol{\psi}_{k}\hspace*{-0.2em}\\
\hspace*{-0.2em}\boldsymbol{p}_{k}\hspace*{-0.2em}\otimes\hspace*{-0.2em}\boldsymbol{\psi}_{k} \hspace*{-0.2em}
\end{array} 
\hspace*{-0.2em}\right]\hspace*{-0.2em}+\hspace*{-0.2em}
\widehat{\mathcal{B}}
\left[\hspace*{-0.2em}
\begin{array}{c}
\hspace*{-0.2em}\boldsymbol{u}_{k}\hspace*{-0.2em}\\
\hspace*{-0.2em}\boldsymbol{p}_{k}\hspace*{-0.2em}\otimes\hspace*{-0.2em}\boldsymbol{u}_{k} \hspace*{-0.2em}
\end{array} 
\hspace*{-0.2em}\right]\hspace*{-0.2em}+\hspace*{-0.2em}\hat{\boldsymbol{\varpi}}_{k}, 
\end{eqnarray}
where $\hspace*{-0.1em}\widehat{\mathcal{A}}\hspace*{-0.2em}=\hspace*{-0.2em}[\widehat{A}_{d0},\widehat{A}_{d1},
\cdots\hspace*{-0.1em},\widehat{A}_{d\ell}]\hspace*{-0.1em}$ and 
$\hspace*{-0.1em}\widehat{\mathcal{B}}\hspace*{-0.2em}=\hspace*{-0.2em}[\widehat{B}_{d0},\widehat{B}_{d1},
\cdots\hspace*{-0.1em},\widehat{B}_{d\ell}]$. In addition, we can also capture the data-based representation of \eqref{n05} by the following equation
\begin{eqnarray}
\label{n06}
\widehat{\boldsymbol{X}}_{+}\hspace*{-0.2em}=\hspace*{-0.2em}\widehat{\mathcal{A}}
\left[\hspace*{-0.2em}
\begin{array}{c}
\hspace*{-0.2em}\widehat{\boldsymbol{X}}\hspace*{-0.2em}\\
\hspace*{-0.2em}\widehat{\boldsymbol{X}}_{P} \hspace*{-0.2em}
\end{array} 
\hspace*{-0.2em}\right]\hspace*{-0.2em}+\hspace*{-0.2em}
\widehat{\mathcal{B}}
\left[\hspace*{-0.2em}
\begin{array}{c}
\hspace*{-0.2em}\boldsymbol{U}\hspace*{-0.2em}\\
\hspace*{-0.2em}\boldsymbol{U}_{P} \hspace*{-0.2em}
\end{array} 
\hspace*{-0.2em}\right]\hspace*{-0.2em}+\hspace*{-0.2em}\widehat{\boldsymbol{W}}, 
\end{eqnarray}
with the collected available data matrices 
\begin{subequations}
\label{n07}
\begin{align}
\hspace*{-0.6em}\widehat{\boldsymbol{X}}& \hspace*{-0.3em}=\hspace*{-0.3em} 
[\boldsymbol{\psi}_{0},\cdots,\boldsymbol{\psi}_{\widehat{\mathcal{T}}-1}]\hspace*{-0.2em}\in\hspace*{-0.2em}\mathds{R}^{\bar{n}\times\widehat{\mathcal{T}}}\hspace*{-0.2em}, \\
\hspace*{-0.6em}\widehat{\boldsymbol{X}}_{P}& \hspace*{-0.3em}=\hspace*{-0.3em} 
[\boldsymbol{p}_{0}\otimes\boldsymbol{\psi}_{0},\cdots,\boldsymbol{p}_{\widehat{\mathcal{T}}-1}
\otimes\boldsymbol{\psi}_{\widehat{\mathcal{T}}-1}]
\hspace*{-0.2em}\in\hspace*{-0.2em}\mathds{R}^{\ell\bar{n}\times\widehat{\mathcal{T}}}\hspace*{-0.2em}, \\
\hspace*{-0.6em}\widehat{\boldsymbol{W}}& \hspace*{-0.3em}=\hspace*{-0.3em} 
[\hat{\boldsymbol{\varpi}}_{0},\cdots,\hat{\boldsymbol{\varpi}}_{\widehat{\mathcal{T}}-1}]\hspace*{-0.2em}\in\hspace*{-0.2em}\mathds{R}^{\bar{n}\times\widehat{\mathcal{T}}}\hspace*{-0.2em}, \\
\hspace*{-0.6em}\widehat{\boldsymbol{X}}_{+}& \hspace*{-0.3em}=\hspace*{-0.3em} 
[\boldsymbol{\psi}_{1},\cdots,\boldsymbol{\psi}_{\widehat{\mathcal{T}}}]\hspace*{-0.2em}\in\hspace*{-0.2em}\mathds{R}^{\bar{n}\times\widehat{\mathcal{T}}}\hspace*{-0.2em}.
\end{align}
\end{subequations}
The signal sequences $\{\boldsymbol{\psi}_{k}\}_{k=0}^{\widehat{\mathcal{T}}}$ and $\{\hat{\boldsymbol{\varpi}}_{k}\}_{k=0}^{\widehat{\mathcal{T}}}$ lead to 
the data set $\widehat{\mathbb{D}}_{s}\hspace*{-0.3em}=\hspace*{-0.3em}\{\boldsymbol{u}_{k},\boldsymbol{p}_{k},\boldsymbol{\psi}_{k},
\hat{\boldsymbol{\varpi}}_{k}\}_{k=0}^{\widehat{\mathcal{T}}}$. The perturbation $\widehat{\boldsymbol{W}}$ normally suffers from the bounded energy 
restriction, which is similar to the specification in Assumption \ref{ass1}. 
\begin{assumption}
\label{ass2}
For the external perturbation $\hspace*{-0.1em}\hat{\boldsymbol{\varpi}}_{k}\hspace*{-0.1em}$ with $k\hspace*{-0.2em}\in\hspace*{-0.2em}\mathds{N}$, \hspace*{-0.1em}it follows that
$\hspace*{-0.1em}\hat{\boldsymbol{\varpi}}_{\hspace*{-0.1em}k}\hspace*{-0.3em}\in\hspace*{-0.3em}\widehat{\mathds{B}}_{\delta}\hspace*{-0.3em}=\hspace*{-0.3em}\{\hspace*{-0.1em}\hat{\boldsymbol{\varpi}}|\|\hspace*{-0.1em}\hat{\boldsymbol{\varpi}}\hspace*{-0.1em}\|
\hspace*{-0.3em}\leq\hspace*{-0.3em}\hat{\delta}\}\hspace*{-0.1em}$ holding for $\hspace*{-0.1em}\hat{\delta}\hspace*{-0.2em}\in\hspace*{-0.2em}\mathds{R}_{+}$, which also 
implies that $\widehat{\boldsymbol{W}}\hspace*{-0.1em}\widehat{\boldsymbol{W}}^{\top}\hspace*{-0.4em}\preceq\hspace*{-0.2em}
\widehat{\Delta}\widehat{\Delta}^{\hspace*{-0.1em}\top}\hspace*{-0.1em}$ is satisfied  with
$\widehat{\Delta}\hspace*{-0.3em}=\hspace*{-0.3em}\sqrt{\widehat{\mathcal{T}}}\hat{\delta} I_{\bar{n}}$.
\end{assumption}

For Assumption \ref{ass2}, we note that the augmented perturbation $\hat{\boldsymbol{\varpi}}_{k}$ includes the reference signal 
$\boldsymbol{r}_{k}$, which acts as the prior knowledge toward developing robust control strategy. Hence, the existence of bounded 
norm of $\|\hat{\boldsymbol{\varpi}}_{k}\|$ implies an intrinsical extension of Assumption \ref{ass1}.
\begin{lemma}
\label{lemm2}
Supposing that the LPV system \eqref{n02} is controllable, 
and the perturbation $\hspace*{-0.1em}\widehat{\boldsymbol{W}}\hspace*{-0.1em}$ satisfies Assumption \ref{ass2} 
the control input sequence $\{\mathcal{U}_{k}\}_{k=0}^{\widehat{\mathcal{T}}-1}$ is 
$\hat{\theta}$-persistently exciting of order $(1\hspace*{-0.2em}+\hspace*{-0.2em}\ell)\bar{n}\hspace*{-0.2em}+\hspace*{-0.2em}1$, that is, $\lambda_{\min}(\boldsymbol{\mathcal{H}}_{(1+\ell)\bar{n}+1}(\mathcal{U}_{[0,\widehat{\mathcal{T}}-1]}))\hspace*{-0.2em}\geq\hat{\theta}$ is satisfied with 
$\widehat{\mathcal{T}}\hspace*{-0.2em}\geq\hspace*{-0.2em}\bar{n}
(1\hspace*{-0.2em}+\hspace*{-0.2em}\ell)(1\hspace*{-0.2em}+\hspace*{-0.2em}
m(1\hspace*{-0.2em}+\hspace*{-0.2em}\ell))\hspace*{-0.2em}-\hspace*{-0.2em}1$, which leads to 
\begin{eqnarray}
\hspace*{-0.6em} \mathrm{Rank}(\widehat{\Theta})\hspace*{-0.3em}& \hspace*{-0.6em}=\hspace*{-0.6em} &\hspace*{-0.3em}
(1\hspace*{-0.2em}+\hspace*{-0.2em}\ell)(\bar{n}\hspace*{-0.2em}+\hspace*{-0.2em}m), \label{n08}
\end{eqnarray}
where $\widehat{\Theta}\hspace*{-0.2em}=\hspace*{-0.2em}\mathrm{Col}(\boldsymbol{U},\boldsymbol{U}_{\hspace*{-0.2em}P},\widehat{\boldsymbol{X}},\widehat{\boldsymbol{X}}_{\hspace*{-0.2em}P})$ 
and $\mathcal{U}_{k}=\mathrm{Col}(\boldsymbol{u}_{k},\boldsymbol{p}_{k}\hspace*{-0.2em}\otimes\hspace*{-0.2em}\boldsymbol{u}_{k})$.
\end{lemma}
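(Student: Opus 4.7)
The plan is to mirror the argument used in the proof of Lemma \ref{lemm1}, simply transporting it from the original state dimension $n$ to the augmented dimension $\bar{n}=n+r$ and from the perturbation $\boldsymbol{\omega}_{k}$ to the augmented perturbation $\hat{\boldsymbol{\varpi}}_{k}$. The essence is a splitting of the data stack $\widehat{\Theta}$ into a perturbation-free piece whose minimum singular value we can bound from below via $\hat{\theta}$-persistence of excitation, plus a piece capturing the cumulative propagation of $\hat{\boldsymbol{\varpi}}_{k}$ through the augmented dynamics \eqref{n05}, whose spectral norm we bound from above by a multiple of $\hat{\delta}$.

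First, I would introduce the nominal augmented trajectory $\bar{\boldsymbol{\psi}}_{k+1}=\widehat{\mathcal{A}}\,\mathrm{Col}(\bar{\boldsymbol{\psi}}_{k},\boldsymbol{p}_{k}\otimes\bar{\boldsymbol{\psi}}_{k})+\widehat{\mathcal{B}}\,\mathrm{Col}(\boldsymbol{u}_{k},\boldsymbol{p}_{k}\otimes\boldsymbol{u}_{k})$ with $\bar{\boldsymbol{\psi}}_{0}=\boldsymbol{\psi}_{0}$, and assemble the nominal data matrices $\overline{\widehat{\boldsymbol{X}}}$, $\overline{\widehat{\boldsymbol{X}}}_{P}$ and the corresponding perturbation-accumulation matrices $\overline{\widehat{\boldsymbol{W}}}$, $\overline{\widehat{\boldsymbol{W}}}_{P}$ exactly as in Lemma \ref{lemm1}. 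This yields the decomposition $\widehat{\Theta}=\widehat{\Theta}_{\bar{\psi}}+\widehat{\Theta}_{\bar{\varpi}}$ with $\widehat{\Theta}_{\bar{\psi}}=\mathrm{Col}(\boldsymbol{U},\boldsymbol{U}_{P},\overline{\widehat{\boldsymbol{X}}},\overline{\widehat{\boldsymbol{X}}}_{P})$ and $\widehat{\Theta}_{\bar{\varpi}}=\mathrm{Col}(\boldsymbol{0},\boldsymbol{0},\overline{\widehat{\boldsymbol{W}}},\overline{\widehat{\boldsymbol{W}}}_{P})$.

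Second, I would unfold the recursion $\boldsymbol{\psi}_{k+1}-\bar{\boldsymbol{\psi}}_{k+1}$ to obtain the telescoping expression $\bar{\hat{\boldsymbol{\varpi}}}_{k+1}=\sum_{\imath=1}^{k}\bigl(\prod_{\jmath=\imath}^{k}\overline{\widehat{\mathcal{A}}}_{\jmath}\bigr)\hat{\boldsymbol{\varpi}}_{\imath-1}+\hat{\boldsymbol{\varpi}}_{k}$, where $\overline{\widehat{\mathcal{A}}}_{\jmath}=\widehat{A}_{d0}+\sum_{\imath=1}^{\ell}\boldsymbol{p}_{\jmath}^{[\imath]}\widehat{A}_{d\imath}$. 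Combining this identity with the Kronecker-norm inequality $\|\boldsymbol{p}_{\imath}\otimes\bar{\hat{\boldsymbol{\varpi}}}_{\imath}\|=\|\boldsymbol{p}_{\imath}\|\|\bar{\hat{\boldsymbol{\varpi}}}_{\imath}\|$ and invoking Assumption \ref{ass2}, I would majorise $\|\widehat{\Theta}_{\bar{\varpi}}\|\leq\hat{c}_{\varpi}\hat{\delta}$ for an explicit constant $\hat{c}_{\varpi}$ depending on the scheduling polytope and the spectral growth of the nominal dynamics.

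Third, invoking \cite[Theorem 3.1]{Clouson2022Robust} with the augmented state dimension yields $\lambda_{\min}(\widehat{\Theta}_{\bar{\psi}})\geq\hat{\theta}\hat{\rho}/\sqrt{(1+\ell)\bar{n}+1}$ for an internal parameter $\hat{\rho}$ of the pair $(\widehat{\mathcal{A}},\widehat{\mathcal{B}})$, provided $\widehat{\mathcal{T}}\geq\bar{n}(1+\ell)(1+m(1+\ell))-1$, which is exactly the length hypothesis in the statement. Choosing $\hat{\theta}>\sqrt{(1+\ell)\bar{n}+1}\,\hat{c}_{\varpi}\hat{\delta}/\hat{\rho}$ ensures $\lambda_{\min}(\widehat{\Theta}_{\bar{\psi}})>\|\widehat{\Theta}_{\bar{\varpi}}\|$. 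The Weyl-type inequality $\lambda_{\min}(M+N)\geq\lambda_{\min}(M)-\|N\|$ then gives $\lambda_{\min}(\widehat{\Theta})>0$, so $\widehat{\Theta}\in\mathds{R}^{(1+\ell)(\bar{n}+m)\times\widehat{\mathcal{T}}}$ is full row rank, establishing \eqref{n08}.

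The main obstacle, which is actually more notational than technical, will be making sure the augmented output-injection block inside $\widehat{A}_{d}(\boldsymbol{p}_{k})$ does not corrupt the controllability hypothesis used when importing the result from \cite{Clouson2022Robust}; since $(\widehat{A}_{d},\widehat{B}_{d})$ is postulated controllable in the statement, this reduces to transcribing the constant $\hat{\rho}$ consistently. Everything else is a verbatim transliteration of the steps in Lemma \ref{lemm1} with $n\!\to\!\bar{n}$, $\mathcal{T}\!\to\!\widehat{\mathcal{T}}$, $\delta\!\to\!\hat{\delta}$, $\theta\!\to\!\hat{\theta}$, and $\boldsymbol{\omega}_{k}\!\to\!\hat{\boldsymbol{\varpi}}_{k}$.
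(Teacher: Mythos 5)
Your proposal is correct and coincides with the paper's intent: the paper's own proof of this lemma consists solely of the remark that it "is along the same line as the counterpart of Lemma \ref{lemm1}," and your transliteration of that argument --- the splitting $\widehat{\Theta}=\widehat{\Theta}_{\bar{\psi}}+\widehat{\Theta}_{\bar{\varpi}}$, the cumulative-perturbation bound $\|\widehat{\Theta}_{\bar{\varpi}}\|\leq \hat{c}_{\varpi}\hat{\delta}$, the lower bound on $\lambda_{\min}(\widehat{\Theta}_{\bar{\psi}})$ via the robust fundamental lemma, and the Weyl-type inequality yielding full row rank --- is exactly the omitted argument with $n\to\bar{n}$, $\mathcal{T}\to\widehat{\mathcal{T}}$, $\delta\to\hat{\delta}$, $\theta\to\hat{\theta}$, and $\boldsymbol{\omega}_{k}\to\hat{\boldsymbol{\varpi}}_{k}$.
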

\begin{proof}
The proof is along the same line as the counterpart of Lemma \ref{lemm1}, and we omit the details due to limited space. 
\end{proof}

Then, we establish the control strategy $\boldsymbol{u}_{k}\hspace*{-0.2em}=\hspace*{-0.2em}\widehat{K}(\boldsymbol{p}_{k})\boldsymbol{\psi}_{k}$ with the similar structure as \eqref{f1}, that implies that  
$\widehat{K}(\boldsymbol{p}_{k})$ has the affine dependence on $\boldsymbol{p}_{k}$, and $\boldsymbol{u}_{k}$ can also be described 
in view of $\boldsymbol{p}_{k}\hspace*{-0.2em}\otimes\hspace*{-0.2em}\boldsymbol{\psi}_{k}$ as in Subsection \ref{sub303}. We substitute $\boldsymbol{u}_{k}$ into \eqref{n02} and represent the unknown system matrices $\widehat{\mathcal{A}}$ and $\widehat{\mathcal{B}}$ by the relationship $[\widehat{\mathcal{A}},\widehat{\mathcal{B}}]\hspace*{-0.2em}=\hspace*{-0.2em}\hspace*{-0.2em}(\widehat{\boldsymbol{X}}_{+}\hspace*{-0.2em}-\hspace*{-0.2em}\widehat{\boldsymbol{W}})\widehat{\boldsymbol{G}}^{\dagger}$ with
$\widehat{\boldsymbol{G}}\hspace*{-0.2em}=\hspace*{-0.2em}[\widehat{\boldsymbol{X}},\widehat{\boldsymbol{X}}_{\hspace*{-0.1em}P},\widehat{\boldsymbol{U}},\widehat{\boldsymbol{U}}_{\hspace*{-0.1em}P}]$. 
Accordingly, we can capture the data-based 
representation of LPV reference-tracking system with the form
\begin{eqnarray}
\hspace*{-1.0em} \boldsymbol{\psi}_{k+1}\hspace*{-0.3em}& \hspace*{-0.8em}=\hspace*{-0.8em} &\hspace*{-0.3em}
(\widehat{\boldsymbol{X}}_{\hspace*{-0.1em}+}\hspace*{-0.2em}-\hspace*{-0.2em}\widehat{\boldsymbol{W}})\widehat{\mathcal{V}}\mathrm{Col}(\boldsymbol{\psi}_{k},\boldsymbol{p}_{k}
\hspace*{-0.2em}\otimes\hspace*{-0.2em}\boldsymbol{\psi}_{k},
\boldsymbol{p}_{k}\hspace*{-0.2em}\otimes\hspace*{-0.2em}\boldsymbol{p}_{k}
\hspace*{-0.2em}\otimes\hspace*{-0.2em}\boldsymbol{\psi}_{k})\hspace*{-0.2em}+\hspace*{-0.2em}\hat{\boldsymbol{\varpi}}_{k},
\label{n09}
\end{eqnarray}
where any $\widehat{\mathcal{V}}\hspace*{-0.2em}\in\hspace*{-0.2em}\mathds{R}^{\widehat{\mathcal{T}}\times \bar{n}(1+\ell+\ell^{2})}$ satisfies 
\begin{eqnarray}
\widehat{\mathcal{N}}_{cl}\hspace*{-0.2em}=\hspace*{-0.2em}\left[\hspace*{-0.2em}
\begin{array}{ccc}
  I_{\bar{n}}  \hspace*{-0.2em} & \hspace*{-0.2em}\boldsymbol{0} \hspace*{-0.2em}&\hspace*{-0.2em}\boldsymbol{0} \\
    \boldsymbol{0} \hspace*{-0.2em}&\hspace*{-0.2em} I_{\ell}\otimes I_{\bar{n}}  \hspace*{-0.2em}&\hspace*{-0.2em} \boldsymbol{0} \\
    \widehat{K}_{d0}\hspace*{-0.2em}&\hspace*{-0.2em} \widehat{\bar{K}}_{d}\hspace*{-0.2em}&\hspace*{-0.2em} \boldsymbol{0} \\
    \boldsymbol{0} \hspace*{-0.2em}&\hspace*{-0.2em} I_{\ell}\otimes \widehat{K}_{d0} \hspace*{-0.2em}&\hspace*{-0.2em} I_{\ell}\otimes \widehat{\bar{K}}_{d}
\end{array}
\hspace*{-0.2em}\right]\hspace*{-0.4em}=\hspace*{-0.2em}\widehat{\boldsymbol{G}}\widehat{\mathcal{V}} \label{n10}
\end{eqnarray}
and accordingly $\widehat{\mathcal{N}}\hspace*{-0.2em}=\hspace*{-0.2em}[\widehat{\mathcal{A}}_{d},\widehat{\mathcal{B}}_{d}]\widehat{\mathcal{N}}_{cl}$. 
This result leaves the room for designing a data-driven controller that stabilizes 
all pairs of LPV system \eqref{d3} for 
any $\boldsymbol{\varpi}_{k}\hspace*{-0.3em}\in\hspace*{-0.3em}\widehat{\mathds{B}}_{\delta}\hspace*{-0.3em}=\hspace*{-0.3em}\{\boldsymbol{\varpi}|\|\boldsymbol{\varpi}\|
\hspace*{-0.3em}\leq\hspace*{-0.3em}\delta\}$ toward reference tracking. 
\begin{theorem}
\label{theo5}
\hspace*{-0.3em}For the data matrices generated from $\hspace*{-0.1em}\widehat{\mathbb{D}}_{s}$ in \eqref{n07}, we suppose that there exist the matrix 
$P\hspace*{-0.2em}\in\hspace*{-0.2em}\mathbb{S}^{\bar{n}}_{++}$ and the scalars 
$\varepsilon_{3}\hspace*{-0.2em}\in\hspace*{-0.2em}\mathds{R}_{>0}$, $\hat{\sigma}\hspace*{-0.2em}\in\hspace*{-0.2em}\mathds{R}_{>1}$ and 
$\beta_{3}\hspace*{-0.2em}\in\hspace*{-0.2em}\mathds{R}_{(0,1)}$, such that the condition 
\begin{eqnarray}
\left[\hspace*{-0.2em}
\begin{array}{ccccc}
\hspace*{-0.2em}P  \hspace*{-0.2em} & \hspace*{-0.2em}\boldsymbol{0} \hspace*{-0.2em}&\hspace*{-0.2em}\widehat{F}^{\top}(\boldsymbol{p}_{k})\widehat{\boldsymbol{X}}^{\top}_{+} \hspace*{-0.2em}&\hspace*{-0.2em} P \hspace*{-0.2em}&\hspace*{-0.2em}\widehat{F}^{\top}(\boldsymbol{p}_{k})\hspace*{-0.2em}\\
\hspace*{-0.2em}\boldsymbol{0} \hspace*{-0.2em}&\hspace*{-0.2em} \hat{\sigma} P  \hspace*{-0.2em}&\hspace*{-0.2em} P \hspace*{-0.2em}&\hspace*{-0.2em} \boldsymbol{0} \hspace*{-0.2em}&\hspace*{-0.2em}\boldsymbol{0}\hspace*{-0.2em}\\
\hspace*{-0.2em}\widehat{\boldsymbol{X}}_{+}\widehat{F}(\boldsymbol{p}_{k})\hspace*{-0.2em}&\hspace*{-0.2em} P\hspace*{-0.2em}&\hspace*{-0.2em} P-\varepsilon_{3}\widehat{\Delta} \widehat{\Delta}^{\top} \hspace*{-0.2em}&\hspace*{-0.2em} \boldsymbol{0}\hspace*{-0.2em}&\hspace*{-0.2em}\boldsymbol{0}\hspace*{-0.2em}\\
\hspace*{-0.2em}P \hspace*{-0.2em}&\hspace*{-0.2em} \boldsymbol{0} \hspace*{-0.2em}&\hspace*{-0.2em} \boldsymbol{0} \hspace*{-0.2em}&\hspace*{-0.2em} \beta_{3}^{-1}P\hspace*{-0.2em}&\hspace*{-0.2em}\boldsymbol{0} \hspace*{-0.2em}\\
 \hspace*{-0.2em}\widehat{F}(\boldsymbol{p}_{k}) \hspace*{-0.2em}&\hspace*{-0.2em} \boldsymbol{0} \hspace*{-0.2em}&\hspace*{-0.2em} \boldsymbol{0} \hspace*{-0.2em}&\hspace*{-0.2em} \boldsymbol{0}\hspace*{-0.2em}&\hspace*{-0.2em} \varepsilon_{1}I_{\widehat{\mathcal{T}}}\hspace*{-0.2em}
\end{array}
\hspace*{-0.2em}\right]\hspace*{-0.4em}\succ\hspace*{-0.2em}\boldsymbol{0}, \label{n11}
\end{eqnarray}
where $\widehat{F}(\boldsymbol{p}_{k})\hspace*{-0.2em}=\hspace*{-0.2em}\widehat{\mathcal{V}}\mathrm{Col}(I_{\bar{n}},
\boldsymbol{p}_{k}\hspace*{-0.2em}\otimes\hspace*{-0.2em} I_{\bar{n}},
\boldsymbol{p}_{k}\hspace*{-0.2em}\otimes\hspace*{-0.2em} \boldsymbol{p}_{k}\hspace*{-0.2em}\otimes\hspace*{-0.2em} I_{\bar{n}})P$, 
holds for the known external perturbation bound $\widehat{\Delta}$. 
Then, the closed-loop LPV system \eqref{n09}
is exponentially ISS. 
\end{theorem}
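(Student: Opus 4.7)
The plan is to mirror the argument of Theorem \ref{theo1} verbatim, with the augmented state $\boldsymbol{\psi}_{k}$ playing the role of $\boldsymbol{x}_{k}$, the augmented data matrices from \eqref{n07} replacing their counterparts in \eqref{ee4}, and $(\widehat{F}(\boldsymbol{p}_{k}),\widehat{\Delta},\beta_{3},\hat{\sigma},\varepsilon_{3})$ substituted for $(F(\boldsymbol{p}_{k}),\Delta,\beta_{1},\sigma,\varepsilon_{1})$. First I would apply the Schur complement to \eqref{n11} with respect to the last diagonal block $\varepsilon_{3}I_{\widehat{\mathcal{T}}}$, thereby eliminating the rows/columns associated with $\widehat{F}(\boldsymbol{p}_{k})$ and producing a $4{\times}4$ block inequality of the same shape as the matrix $\amalg_{1}$ that arises in the proof of Theorem \ref{theo1}, with a quadratic penalty $-\varepsilon_{3}\widehat{\Delta}\widehat{\Delta}^{\top}$ on the third diagonal block.

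Next, I would invoke Petersen's lemma \cite{Bisoffi2022Data} together with the energy bound $\widehat{\boldsymbol{W}}\widehat{\boldsymbol{W}}^{\top}\preceq \widehat{\Delta}\widehat{\Delta}^{\top}$ from Assumption \ref{ass2} to absorb the perturbation-induced term and replace $\widehat{\boldsymbol{X}}_{+}\widehat{F}(\boldsymbol{p}_{k})$ by $(\widehat{\boldsymbol{X}}_{+}-\widehat{\boldsymbol{W}})\widehat{F}(\boldsymbol{p}_{k})$, mirroring the transition from the Petersen step to \eqref{m2}. A second Schur complement then yields a block inequality that is congruent, via pre- and post-multiplication by $\mathrm{Blkdiag}(P^{-1},P^{-1})$, to the Lyapunov decrease condition for the data-based closed-loop representation \eqref{n09}, with the data-based factorization guaranteed by the identity $[\widehat{\mathcal{A}},\widehat{\mathcal{B}}]\widehat{\mathcal{N}}_{cl}=\widehat{\boldsymbol{G}}\widehat{\mathcal{V}}$ in \eqref{n10}.

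I would then select the Lyapunov function $\widehat{V}(\boldsymbol{\psi}_{k})=\boldsymbol{\psi}_{k}^{\top}P^{-1}\boldsymbol{\psi}_{k}$ and verify, exactly as in the derivation of \eqref{m4}, that its forward difference along trajectories of \eqref{n09} satisfies $\widehat{V}(\boldsymbol{\psi}_{k+1})-\widehat{V}(\boldsymbol{\psi}_{k})\leq -\beta_{3}\widehat{V}(\boldsymbol{\psi}_{k})+\hat{\sigma}\,\hat{\boldsymbol{\varpi}}_{k}^{\top}P^{-1}\hat{\boldsymbol{\varpi}}_{k}$. Bounding the latter term by the $\mathcal{K}$-function $\hat{\sigma}\lambda_{\max}(P^{-1})\|\hat{\boldsymbol{\varpi}}_{k}\|^{2}$ and iterating the resulting one-step contraction inequality then produces, in direct parallel with \eqref{m7}, the exponential ISS estimate $\|\boldsymbol{\psi}_{k}\|\leq e^{-\beta_{3}k/2}\widehat{\mathcal{R}}\|\boldsymbol{\psi}_{0}\|+\hat{\gamma}(\|\hat{\boldsymbol{\varpi}}_{s}\|_{[0,k-1]})$ with overshoot $\widehat{\mathcal{R}}=\lambda_{\max}^{1/2}(P^{-1})/\lambda_{\min}^{1/2}(P^{-1})$.

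The only genuine subtlety, rather than a serious obstacle, is the correct bookkeeping for the composite disturbance $\hat{\boldsymbol{\varpi}}_{k}=\mathrm{Col}(\boldsymbol{\omega}_{k},\boldsymbol{r}_{k})$: because $\hat{\boldsymbol{\varpi}}_{k}$ enters \eqref{n05} through $\widehat{E}_{d}(\boldsymbol{p}_{k})=\mathrm{Blkdiag}(I,-I)$ rather than through an identity map as in \eqref{e6}, one must confirm that the bound $\widehat{\boldsymbol{W}}\widehat{\boldsymbol{W}}^{\top}\preceq\widehat{\Delta}\widehat{\Delta}^{\top}$ provided by Assumption \ref{ass2} still dominates the resulting signal in the Petersen step; since $\widehat{E}_{d}(\boldsymbol{p}_{k})$ is an isometry, this is immediate. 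Once this check is performed, every subsequent inequality has a one-to-one analogue in the proof of Theorem \ref{theo1}, and exponential ISS of $\boldsymbol{\psi}_{k}$ entails robust tracking of $\boldsymbol{r}_{k}$ by $\boldsymbol{y}_{k}$ up to a residual governed by $\hat{\delta}$.
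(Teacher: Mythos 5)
Your proposal matches the paper's own proof, which is simply the argument of Theorem \ref{theo1} repeated with $\widehat{\boldsymbol{X}}_{+}$, $\widehat{F}(\boldsymbol{p}_{k})$, $\widehat{\Delta}$, $\beta_{3}$, $\hat{\sigma}$, $\varepsilon_{3}$ in place of their unaugmented counterparts: Schur complement, Petersen's lemma with the bound from Assumption \ref{ass2}, a second Schur complement and the congruence by $\mathrm{Blkdiag}(P^{-1},P^{-1})$, and the Lyapunov recursion yielding exponential ISS. Your additional check that the composite disturbance $\hat{\boldsymbol{\varpi}}_{k}$ is already bounded as it enters \eqref{n09} is correct and consistent with the paper's treatment, so the proposal is essentially the same proof.
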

\begin{proof}
The proof is along the same line as the counterpart of Theorem \ref{theo1}, the drastic difference lies in replacing 
$\boldsymbol{X}_{\hspace*{-0.1em}+}$ and $F(\boldsymbol{p}_{k})$ with 
$\widehat{\boldsymbol{X}}_{\hspace*{-0.1em}+}$ and $\widehat{F}(\boldsymbol{p}_{k})$, due to the change of dimensions induced by 
the augmented state $\boldsymbol{\psi}$. We omit the details due to limited space. 
\end{proof}

\begin{theorem}
\label{theo6}
\hspace*{-0.3em}For the data matrices generated from $\hspace*{-0.1em}\widehat{\mathbb{D}}_{s}$ in \eqref{n07}, 
we suppose that there exist the matrices 
$P\hspace*{-0.2em}\in\hspace*{-0.2em}\mathbb{S}^{\bar{n}}_{++}$, 
$Z_{0}\hspace*{-0.2em}\in\hspace*{-0.2em}\mathds{R}^{m\times \bar{n}}$, 
$\bar{Z}\hspace*{-0.2em}\in\hspace*{-0.2em}\mathds{R}^{m\times \ell \bar{n}}$, 
$\Xi\hspace*{-0.2em}\in\hspace*{-0.2em}\mathbb{S}^{2\ell(4\bar{n}+\widehat{\mathcal{T}})}$, 
$\mathcal{F}\hspace*{-0.2em}\in\hspace*{-0.2em}\mathds{R}^{\widehat{\mathcal{T}}\times \bar{n}(1+\ell+\ell^{2})}$, as well as $F_{Q}$ 
$\in\hspace*{-0.2em}\mathds{R}^{\widehat{\mathcal{T}}(1+\ell)\times\bar{n}(1+\ell)}$, such that the conditions having the similar representation as \eqref{m10} hold for all $\boldsymbol{p}_{k}\in\mathds{P}$, where the matrices $\Upsilon(\boldsymbol{p}_{k})$ and 
$M_{\imath,\jmath},\imath,\jmath\in\mathds{N}_{[1,2]}$ are defined as in \eqref{m11} in view of changing the dimension $n$ with $\bar{n}$, 
and 
\begin{eqnarray}
\label{n12}
\hspace*{-0.6em}\Phi\hspace*{-0.2em}&\hspace*{-0.6em}=\hspace*{-0.6em}&  
\left[\hspace*{-0.2em}
\begin{array}{ccccc}
\hspace*{-0.2em}\mathcal{Y}  \hspace*{-0.2em} & \hspace*{-0.2em}\boldsymbol{0} \hspace*{-0.2em}&\hspace*{-0.2em}(\widehat{\mathcal{X}}_{+}F_{Q})^{\top} \hspace*{-0.2em}&\hspace*{-0.2em} \mathcal{Y} \hspace*{-0.2em}&\hspace*{-0.2em}F^{\top}_{Q}\hspace*{-0.2em} \\
\hspace*{-0.2em}\boldsymbol{0}  \hspace*{-0.2em} & \hspace*{-0.2em}\hat{\sigma}\mathcal{Y} \hspace*{-0.2em}&\hspace*{-0.2em}\mathcal{Y} \hspace*{-0.2em}&\hspace*{-0.2em} \boldsymbol{0} \hspace*{-0.2em}&\hspace*{-0.2em}\boldsymbol{0}\hspace*{-0.2em}\\
\hspace*{-0.2em}\widehat{\mathcal{X}}_{+}F_{Q}  \hspace*{-0.2em} & \hspace*{-0.2em}\mathcal{Y} \hspace*{-0.2em}&\hspace*{-0.2em}\bar{\mathcal{Y}} \hspace*{-0.2em}&\hspace*{-0.2em} \boldsymbol{0} \hspace*{-0.2em}&\hspace*{-0.2em}\boldsymbol{0}\hspace*{-0.2em}\\
\hspace*{-0.2em}\mathcal{Y}  \hspace*{-0.2em} & \hspace*{-0.2em}\boldsymbol{0} \hspace*{-0.2em}&\hspace*{-0.2em}\boldsymbol{0} \hspace*{-0.2em}&\hspace*{-0.2em} \beta_{3}^{-1}\mathcal{Y} \hspace*{-0.2em}&\hspace*{-0.2em}\boldsymbol{0}\hspace*{-0.2em}\\
\hspace*{-0.2em}F_{Q}  \hspace*{-0.2em} & \hspace*{-0.2em}\boldsymbol{0} \hspace*{-0.2em}&\hspace*{-0.2em}\boldsymbol{0} \hspace*{-0.2em}&\hspace*{-0.2em} \boldsymbol{0} \hspace*{-0.2em}&\hspace*{-0.2em}\mathcal{E}\hspace*{-0.2em}
\end{array}
\hspace*{-0.2em}\right], 
\end{eqnarray}
where 
\begin{subequations} 
\begin{align}
\hspace*{-0.6em}\mathcal{Y}\hspace*{-0.2em}=&  
\mathrm{Blkdiag}(P,\boldsymbol{0}_{\ell \bar{n} \times \ell \bar{n}}), \notag  \\ 
\hspace*{-0.6em}\bar{\mathcal{Y}}\hspace*{-0.2em}=&  
\mathrm{Blkdiag}(P\hspace*{-0.2em}-\hspace*{-0.2em}\varepsilon_{3}\widehat{\Delta}\widehat{\Delta}^{\top},\boldsymbol{0}_{\ell \bar{n} \times \ell \bar{n}}), \notag \\
\hspace*{-0.6em}\mathcal{E}\hspace*{-0.2em}=&  
\mathrm{Blkdiag}(\varepsilon_{3}I_{\widehat{\mathcal{T}}},\boldsymbol{0}_{\ell\widehat{\mathcal{T}}\times \ell\widehat{\mathcal{T}}}), \notag  \\
\hspace*{-0.6em}\widehat{\mathcal{X}}_{+}\hspace*{-0.2em}=&  
\mathrm{Blkdiag}(\widehat{\boldsymbol{X}}_{+},I_{\ell}\otimes\widehat{\boldsymbol{X}}_{+}). \notag
\end{align}
\end{subequations}
Then, the state-feedback control policy $\boldsymbol{u}_{k}\hspace*{-0.3em}=\hspace*{-0.3em}\widehat{K}_{\hspace*{-0.1em}d}(\boldsymbol{p}_{k})\boldsymbol{\psi}_{k}$ can be constructed by 
\begin{eqnarray}
\label{n15}
    \widehat{K}_{d0}\hspace*{-0.2em}=\hspace*{-0.2em}Z_{0}P^{-1}, \widehat{\bar{K}}_{d}\hspace*{-0.2em}=\hspace*{-0.2em}\bar{Z}(I_{\ell}\otimes P)^{-1}, 
\end{eqnarray}
which ensures the exponential ISS of closed-loop system \eqref{n02}.
\end{theorem}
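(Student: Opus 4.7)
The plan is to follow the proof of Theorem~\ref{theo2} almost verbatim, adapting every dimension from $n$ to $\bar{n}=n+r$ and from $\mathcal{T}$ to $\widehat{\mathcal{T}}$. Theorem~\ref{theo5} already provides the stability certificate \eqref{n11} in terms of $\widehat{F}(\boldsymbol{p}_k)$ and a Lyapunov matrix $P\in\mathbb{S}^{\bar{n}}_{++}$, so what remains is to reduce this parameter-dependent inequality to a finite, vertex-based convex program that simultaneously produces the controller gains.

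First, I would apply the Kronecker identity used to derive \eqref{m12} twice on the term $\widehat{\boldsymbol{X}}_+\widehat{F}(\boldsymbol{p}_k)$, introducing the block-diagonal data matrix $\widehat{\mathcal{X}}_+=\mathrm{Blkdiag}(\widehat{\boldsymbol{X}}_+,I_\ell\otimes\widehat{\boldsymbol{X}}_+)$ and the auxiliary factorization variable $F_Q\in\mathds{R}^{\widehat{\mathcal{T}}(1+\ell)\times\bar{n}(1+\ell)}$. Substituting the resulting expression into \eqref{n11} yields a bilinear form $M^{\top}(\boldsymbol{p}_k)\Phi M(\boldsymbol{p}_k)\succ\boldsymbol{0}$, where $\Phi$ matches \eqref{n12} and $M(\boldsymbol{p}_k)$ is the linear fractional transformation of $\boldsymbol{p}_k$ with the structure \eqref{m14} instantiated at the augmented dimension.

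Second, I would invoke the full-block $\mathrm{S}$-procedure with multiplier $\Xi$ subject to $\Xi_{22}\prec\boldsymbol{0}$, in order to replace the continuous $\boldsymbol{p}_k$ dependence with a finite set of matrix inequalities at the vertices of the polytope $\mathds{P}$, exactly paralleling \eqref{m10:1}--\eqref{m10:3}. Finally, to recover the controller I would exploit \eqref{n10} and impose the algebraic equality
\begin{eqnarray*}
\widehat{\mathcal{N}}_{cl}\,\mathrm{Blkdiag}(P,\,I_\ell\otimes P,\,I_\ell\otimes P\otimes P)\;=\;\widehat{\boldsymbol{G}}\mathcal{F},
\end{eqnarray*}
and then set $Z_0=\widehat{K}_{d0}P$ and $\bar{Z}=\widehat{\bar{K}}_d(I_\ell\otimes P)$, which produces the direct analogue of \eqref{m10:4}; solving for the blocks in the third and fourth rows returns \eqref{n15}.

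The main obstacle is dimensional bookkeeping rather than any new analytical idea: every block depending on $n$ must be resized to $\bar{n}$, every block depending on $\mathcal{T}$ must be resized to $\widehat{\mathcal{T}}$, and the Petersen's-lemma step must absorb the augmented perturbation $\widehat{\boldsymbol{W}}$ via the bound $\widehat{\Delta}$ from Assumption~\ref{ass2} using the slack scalar $\varepsilon_3$ in place of $\varepsilon_1$. Once these substitutions are carried out consistently, the convexity and feasibility arguments transfer without modification, and the exponential ISS guarantee for the closed-loop augmented system \eqref{n02}---equivalently the robust tracking of $\boldsymbol{r}_k$ by $\boldsymbol{y}_k$ via the integral compensator \eqref{n01}---is inherited from Theorem~\ref{theo5}.
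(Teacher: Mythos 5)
Your proposal is correct and follows exactly the route the paper intends: the paper's own proof of this theorem is simply a reference to the argument of Theorem~\ref{theo2} with the dimensional substitutions $n\to\bar{n}$, $\mathcal{T}\to\widehat{\mathcal{T}}$, $\varepsilon_{1}\to\varepsilon_{3}$, $\Delta\to\widehat{\Delta}$, which is precisely what you carry out (Kronecker factorization of $\widehat{\boldsymbol{X}}_{+}\widehat{F}(\boldsymbol{p}_{k})$ via $F_{Q}$, full-block S-procedure at the vertices of $\mathds{P}$, and recovery of the gains through the equality constraint with $Z_{0}=\widehat{K}_{d0}P$, $\bar{Z}=\widehat{\bar{K}}_{d}(I_{\ell}\otimes P)$). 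No gaps beyond the bookkeeping you already acknowledge.
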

\begin{proof}
    The proof is along the same line as the counterpart of Theorem \ref{theo2}, we omit the details due to limited space. 
\end{proof}

Motivated by Subsection \ref{subsec303}, in what follows, we aim to deploy an event-triggered communication transmission scheme while ensuring the closed-loop convergence of \eqref{n02}. With the same specification of transmission sequence $\{\hat{k}_{i}\},i\hspace*{-0.2em}\in\hspace*{-0.2em}\mathds{N}$, based on ZOH, the event-triggered control input can be depicted by 
\begin{eqnarray}
\label{n116}
\boldsymbol{u}_{k}\hspace*{-0.2em}=\hspace*{-0.2em}\widehat{K}_{d}(\boldsymbol{p}_{\hat{k}_{i}})\boldsymbol{\psi}_{\hat{k}_{i}}, 
k\hspace*{-0.2em}\in\hspace*{-0.2em}\mathds{N}_{[\hat{k}_{i},\hat{k}_{i+1})},
\end{eqnarray}
then the closed-loop LPV system is with the form  
\begin{eqnarray}
\hspace*{-0.6em} \boldsymbol{\psi}_{k+1}\hspace*{-0.3em}& \hspace*{-0.6em}=\hspace*{-0.6em} &\hspace*{-0.3em}
\widehat{A}_{d}(\boldsymbol{p}_{k})\boldsymbol{\psi}_{k}\hspace*{-0.2em}+\hspace*{-0.2em}\widehat{B}_{d}(\boldsymbol{p}_{k})\boldsymbol{u}_{\hat{k}_{i}}
\hspace*{-0.2em}+\hspace*{-0.2em}\hat{\boldsymbol{\varpi}}_{k}, \notag \\
\hspace*{-0.6em} \hspace*{-0.3em}& \hspace*{-0.6em}=\hspace*{-0.6em} &\hspace*{-0.3em}
(\widehat{A}_{d}(\boldsymbol{p}_{k})+\widehat{B}_{d}\widehat{K}_{d}(\boldsymbol{p}_{k}))\boldsymbol{\psi}_{k}\hspace*{-0.2em}+\hspace*{-0.2em}\hat{\boldsymbol{\nu}}_{k}
\hspace*{-0.2em}+\hspace*{-0.2em}\hat{\boldsymbol{\varpi}}_{k}, \label{n117}
\end{eqnarray}
where $\hat{\boldsymbol{\nu}}_{k}$ is with the form 
\begin{eqnarray}
\hspace*{-0.6em} \hat{\boldsymbol{\nu}}_{k}\hspace*{-0.3em}& \hspace*{-0.6em}=\hspace*{-0.6em} &\hspace*{-0.3em}
\widehat{B}_{d}(\boldsymbol{p}_{k})\widehat{K}_{d}(\boldsymbol{p}_{k})\hat{\boldsymbol{e}}(k)+ \notag \\
\hspace*{-0.6em} \hspace*{-0.3em}& \hspace*{-0.6em}\hspace*{-0.6em} &\hspace*{-0.3em}
\widehat{B}_{d}(\boldsymbol{p}_{k})(\widehat{K}_{d}(\boldsymbol{p}_{\hat{k}_{i}})-\widehat{K}_{d}(\boldsymbol{p}_{k}))
\boldsymbol{\psi}_{\hat{k}_{i}}, \label{n118}
\end{eqnarray}
and $\hat{\boldsymbol{e}}_{k}\hspace*{-0.2em}=\hspace*{-0.2em}\boldsymbol{\psi}_{\hat{k}_{i}}\hspace*{-0.2em}-\hspace*{-0.2em}\boldsymbol{\psi}_{k}$ 
depicts the sampling-induced error capturing the divergence between the recently transmitted and the current sampled system states. 
Hence, we rewrite 
the event-triggered LPV system \eqref{n117} with the data-based form   
\begin{eqnarray}
\hspace*{-0.6em} \boldsymbol{\psi}_{k+1}\hspace*{-0.3em}& \hspace*{-0.6em}=\hspace*{-0.6em} &\hspace*{-0.3em}
(\widehat{\boldsymbol{X}}_{\hspace*{-0.1em}+}\hspace*{-0.2em}-\hspace*{-0.2em}\widehat{\boldsymbol{W}})\widehat{\mathcal{V}}
\left[\hspace*{-0.2em}
\begin{array}{c}
\hspace*{-0.2em}\boldsymbol{\psi}_{k}\hspace*{-0.2em}\\
\hspace*{-0.2em}\boldsymbol{p}_{k}\hspace*{-0.2em}\otimes\hspace*{-0.2em}\boldsymbol{\psi}_{k} \hspace*{-0.2em} \\
\hspace*{-0.2em}\boldsymbol{p}_{k}\hspace*{-0.2em}\otimes\hspace*{-0.2em}\boldsymbol{p}_{k}
\hspace*{-0.2em}\otimes\hspace*{-0.2em}\boldsymbol{\psi}_{k} \hspace*{-0.2em}
\end{array} 
\hspace*{-0.2em}\right]
\hspace*{-0.2em}+\hspace*{-0.2em}\hat{\boldsymbol{\nu}}_{k}
\hspace*{-0.2em}+\hspace*{-0.2em}\hat{\boldsymbol{\varpi}}_{k}.
\label{n119}
\end{eqnarray}
Note that $\hat{\boldsymbol{\nu}}_{k}$ embedded in the data-based representation \eqref{n119} indirectly
depends on the data matrices. According to Remark \ref{remark3}, we can also potentially represent 
$\widehat{\mathcal{B}}_{d}$ via the data matrices and utilize the achieved controller gains \eqref{n15} to determine the 
event-triggered logic expressed by 
\begin{eqnarray}
\hspace*{-0.6em} \hat{k}_{i+1}\hspace*{-0.3em}& \hspace*{-0.6em}=\hspace*{-0.6em} &\hspace*{-0.3em}
\inf\left\{ k>\hat{k}_{i}| \hat{\boldsymbol{\nu}}^{\top}_{k}\widehat{\Psi}_{1}\hat{\boldsymbol{\nu}}_{k}\geq 
\boldsymbol{\psi}_{k}^{\top}\widehat{\Psi}_{2}\boldsymbol{\psi}_{k}+
\hat{v} \right\},  \label{n120}
\end{eqnarray}
where $\hspace*{-0.1em}\hat{v}\hspace*{-0.2em}\in\hspace*{-0.2em}\mathds{R}_{>0}\hspace*{-0.1em}$ 
implies an arbitrary constant, and $\hspace*{-0.1em}\widehat{\Psi}_{1}\hspace*{-0.1em}$ and $\hspace*{-0.1em}
\widehat{\Psi}_{2}$ are positively 
definite matrices. 
\begin{theorem}
\label{theo7}
For the event-triggered LPV system \eqref{n117} together with the data set $\widehat{\mathbb{D}}_{s}$, 
we suppose that Lemma \ref{lemm2} and Theorem \ref{theo6} hold, yielding the feasible solution \eqref{n15}, thus \eqref{n117} 
is practically exponentially ISS, if there exist $\hat{\mu},\varepsilon_{4}\in\mathds{R}_{+}$
and $\widehat{\Psi}_{1}, \Psi_{2}\hspace*{-0.2em}\in\hspace*{-0.2em}\mathbb{S}_{++}^{\bar{n}}$, such that  
\begin{eqnarray}
\hspace*{-1.4em}\left[\hspace*{-0.2em}
\begin{array}{ccc}
\hspace*{-0.3em}\widehat{\Omega}_{11}  
\hspace*{-0.2em} & \hspace*{-0.2em} 
\widehat{\Omega}_{12} \hspace*{-0.3em}&\hspace*{-0.3em}
\widehat{\Omega}_{13} \\
\hspace*{-0.3em}
\widehat{\Omega}_{12}^{\top}  \hspace*{-0.2em} & \hspace*{-0.2em} 
\widehat{\Omega}_{22} \hspace*{-0.3em}&\hspace*{-0.3em} 
\widehat{\Omega}_{23} \\
\hspace*{-0.3em}
\widehat{\Omega}_{13}^{\top}  \hspace*{-0.2em} & \hspace*{-0.2em} 
\widehat{\Omega}_{23}^{\top} 
\hspace*{-0.3em}&\hspace*{-0.3em}
\widehat{\Omega}_{33}
\end{array} 
\hspace*{-0.2em}\right]\hspace*{-0.3em}&\hspace*{-0.8em}\prec\hspace*{-0.8em}&\boldsymbol{0}  \label{n121}
\end{eqnarray}
holds for  
\begin{eqnarray}
\hspace*{-0.6em}\widehat{\Omega}_{11}&\hspace*{-0.8em}=\hspace*{-0.8em}& 
-\hat{\mu}\beta_{4}P\hspace*{-0.2em}+\hspace*{-0.2em}P\widehat{\Psi}_{2}P, 
\widehat{\Omega}_{12}\hspace*{-0.2em}=\hspace*{-0.2em}
\hat{\mu} \hat{F}^{\hspace*{-0.1em}\top}
\hspace*{-0.2em}(\boldsymbol{p}_{k})\widehat{\boldsymbol{X}}_{+}, 
\widehat{\Omega}_{13}\hspace*{-0.2em}=\hspace*{-0.2em}\hat{F}^{\hspace*{-0.1em}\top}\hspace*{-0.2em}(\boldsymbol{p}_{k}), \notag \\
\hspace*{-0.6em}
\widehat{\Omega}_{22}&\hspace*{-0.8em}=\hspace*{-0.8em}& 
-P\widehat{\Psi}_{1}P+
\varepsilon_{4}\hat{\mu}^{2}\widehat{\Delta}\widehat{\Delta}^{\top}, 
\Omega_{23}\hspace*{-0.2em}=\hspace*{-0.2em}\boldsymbol{0}_{\bar{n}\times \widehat{\mathcal{T}}}, 
\Omega_{33}\hspace*{-0.2em}=\hspace*{-0.2em}-\varepsilon_{4}I_{\widehat{\mathcal{T}}}, \notag \label{7788}
\end{eqnarray}
where $\beta_{4}=\beta_{3}/2$ and $P$ are determined through Theorem \ref{theo6}.
\end{theorem}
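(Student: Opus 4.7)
The plan is to adapt the proof of Theorem \ref{theo3} to the augmented tracking setting, exploiting the structural parallels between \eqref{dd1} and \eqref{n119}. Throughout, the role of $\boldsymbol{x}_k$, $\boldsymbol{X}_+$, $F(\boldsymbol{p}_k)$, $\Delta$, $\beta_2$, and $\Psi_j$ will be played by $\boldsymbol{\psi}_k$, $\widehat{\boldsymbol{X}}_+$, $\hat{F}(\boldsymbol{p}_k)$, $\widehat{\Delta}$, $\beta_4 = \beta_3/2$, and $\widehat{\Psi}_j$, respectively.

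First, I would apply a Schur complement to the block inequality \eqref{n121} so as to isolate the quadratic term in $\widehat{\boldsymbol{X}}_+$ from the $\varepsilon_4$-scaled perturbation bound $\widehat{\Delta}\widehat{\Delta}^{\top}$, producing a $2\times 2$ block inequality in the unknowns $(\boldsymbol{\psi}_k, \hat{\boldsymbol{\nu}}_k)$. Next, I would invoke Petersen's lemma (as in the step from \eqref{m23} to \eqref{m24}) to replace $\widehat{\boldsymbol{X}}_+$ by $\widehat{\boldsymbol{X}}_+ - \widehat{\boldsymbol{W}}$, absorbing the bounded perturbation $\widehat{\boldsymbol{W}}\widehat{\boldsymbol{W}}^{\top}\preceq \widehat{\Delta}\widehat{\Delta}^{\top}$ from Assumption \ref{ass2}. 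A congruence transformation with $\mathrm{Blkdiag}(P^{-1}, P^{-1})$ on both sides then yields the counterpart of \eqref{m25}, which after defining $\hat{\boldsymbol{\xi}}_k = \mathrm{Col}(\boldsymbol{\psi}_k, \hat{\boldsymbol{\nu}}_k)$ produces a bound $\hat{\boldsymbol{\xi}}_k^{\top}\widehat{N}(\widehat{\boldsymbol{W}})\hat{\boldsymbol{\xi}}_k \leq \hat{v}/\hat{\mu}$ on any inter-triggering interval $\mathds{N}_{[\hat{k}_i, \hat{k}_{i+1})}$, where the triggering condition \eqref{n120} is violated.

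With the Lyapunov candidate $V(\boldsymbol{\psi}_k) = \boldsymbol{\psi}_k^{\top}P^{-1}\boldsymbol{\psi}_k$, I would then compute the forward difference along \eqref{n119}. By invoking Theorem \ref{theo6} to handle the nominal drift (this supplies the $-\beta_3 V(\boldsymbol{\psi}_k) + \hat{\sigma}\hat{\boldsymbol{\varpi}}_k^{\top}P^{-1}\hat{\boldsymbol{\varpi}}_k$ contribution) and the Petersen-derived bound on $\hat{\boldsymbol{\xi}}_k^{\top}\widehat{N}(\widehat{\boldsymbol{W}})\hat{\boldsymbol{\xi}}_k$ for the extra cross-terms induced by $\hat{\boldsymbol{\nu}}_k$ (mirroring the computation \eqref{m27}), I would combine the terms $\hat{\boldsymbol{\nu}}_k$ and $\hat{\boldsymbol{\varpi}}_k$ into a composite disturbance $\hat{\boldsymbol{\zeta}}_k = \hat{\boldsymbol{\nu}}_k + \hat{\boldsymbol{\varpi}}_k$. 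This yields $V(\boldsymbol{\psi}_{k+1}) - V(\boldsymbol{\psi}_k) \leq -\beta_4 V(\boldsymbol{\psi}_k) + \hat{h}(\|\hat{\boldsymbol{\zeta}}_k\|) + \hat{v}/\hat{\mu}$ with $\hat{h}$ a $\mathcal{K}$-function. Iterating this comparison inequality and passing to norms (as in the derivation of \eqref{m28}) gives the desired bound $\|\boldsymbol{\psi}_k\| \leq e^{-\beta_4 k/2}\widehat{\mathcal{R}}\|\boldsymbol{\psi}_0\| + \hat{\gamma}(\|\hat{\boldsymbol{\zeta}}_s\|_{[0,k-1]}) + \hat{c}_0\hat{v}$, certifying practical exponential ISS.

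The main obstacle I anticipate is bookkeeping rather than conceptual: the augmented state $\boldsymbol{\psi}_k$ carries both the physical state and the integral compensator \eqref{n01}, so the perturbation $\hat{\boldsymbol{\varpi}}_k$ now bundles the reference $\boldsymbol{r}_k$ together with $\boldsymbol{\omega}_k$, and the dimension change from $n$ to $\bar{n} = n + r$ must be propagated consistently through every Kronecker block, in particular through the definition of $\hat{F}(\boldsymbol{p}_k)$ and the triggering-error decomposition \eqref{n118}. As long as the data matrices in \eqref{n07} satisfy the $\hat{\theta}$-persistence of excitation assumption of Lemma \ref{lemm2}, so that $\widehat{\boldsymbol{G}}$ is full-row rank and $\widehat{\mathcal{V}}$ in \eqref{n10} is well-defined, the remainder of the argument transfers verbatim from Theorem \ref{theo3}, and the claim follows.
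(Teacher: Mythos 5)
Your proposal is correct and follows essentially the same route as the paper: the paper's own proof of Theorem \ref{theo7} simply states that the argument of Theorem \ref{theo3} carries over verbatim after replacing $\boldsymbol{X}_{+}$, $F(\boldsymbol{p}_{k})$, $\Delta$, $\beta_{2}$, $\Psi_{j}$ by their hatted/augmented counterparts, which is exactly the Schur-complement/Petersen's-lemma/congruence/Lyapunov chain you spell out. Your filled-in details (the bound $\hat{\boldsymbol{\xi}}_{k}^{\top}\widehat{N}(\widehat{\boldsymbol{W}})\hat{\boldsymbol{\xi}}_{k}\leq\hat{v}/\hat{\mu}$ on inter-event intervals, the composite disturbance $\hat{\boldsymbol{\zeta}}_{k}=\hat{\boldsymbol{\nu}}_{k}+\hat{\boldsymbol{\varpi}}_{k}$, and the iterated comparison inequality) match the omitted steps of the paper's argument.
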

\begin{proof}
The proof is along the same line as the counterpart of Theorem \ref{theo3}, the drastic difference lies in replacing 
$\boldsymbol{X}_{\hspace*{-0.1em}+}$ and $F(\boldsymbol{p}_{k})$ with 
$\widehat{\boldsymbol{X}}_{\hspace*{-0.1em}+}$ and $\widehat{F}(\boldsymbol{p}_{k})$, due to the change of dimensions induced by 
the augmented state $\boldsymbol{\psi}$. We omit the details due to limited space. 
\end{proof}

Due to the existence of parameter-varying term $\hat{F}(\boldsymbol{p}_{k})$, 
that implies the difficulty in solving \eqref{n121} via convex optimization toolkits directly,  
we need to reduce \eqref{n121} to a finite number of constraints as in Subsection \ref{3b}, which leads to the following theorem. 
\begin{theorem}
\label{theo8}
For the data matrices generated from $\widehat{\mathbb{D}}_{s}$ in \eqref{n07}, we suppose that there are the matrices 
$\hspace*{-0.1em}\widehat{\Psi}_{1},\widehat{\Psi}_{2}\hspace*{-0.3em}\in\hspace*{-0.3em}\mathbb{S}^{\bar{n}}_{++}$, 
$\hspace*{-0.1em}\boldsymbol{\Xi}\hspace*{-0.2em}\in\hspace*{-0.2em}\mathbb{S}^{2\ell(2n+\mathcal{T})}$, and the scalars 
$\hat{\mu},\varepsilon_{4}\hspace*{-0.2em}\in\hspace*{-0.2em}\mathds{R}_{+}$, such that the conditions 
similar to \eqref{m29} hold for 
all $\boldsymbol{p}_{k}\hspace*{-0.2em}\in\hspace*{-0.2em}\mathds{P}$, 
in which the matrices $\boldsymbol{\Upsilon}(\boldsymbol{p}_{k})$ 
and $\boldsymbol{M}_{\imath,\jmath},\imath,\jmath\in\mathds{N}_{[1,2]}$ 
are defined as \eqref{m30} in view of changing the dimension $n$ with $\bar{n}$, and 
\begin{eqnarray}
\hspace*{-0.6em}\boldsymbol{\Phi}\hspace*{-0.2em}=&  
-\left[\hspace*{-0.2em}
\begin{array}{ccc}
\hspace*{-0.2em}\boldsymbol{\mathcal{Y}}  
\hspace*{-0.2em} & \hspace*{-0.2em}\hat{\mu}(\widehat{\boldsymbol{\mathcal{X}}}_{+}F_{Q})^{\top} 
\hspace*{-0.2em}&\hspace*{-0.2em} F_{Q}^{\top} \\
\hspace*{-0.2em}\hat{\mu}(\widehat{\boldsymbol{\mathcal{X}}}_{+}F_{Q})  
\hspace*{-0.2em} & \hspace*{-0.2em}\bar{\boldsymbol{\mathcal{Y}}} 
\hspace*{-0.2em}&\hspace*{-0.2em} \boldsymbol{0} \\ 
\hspace*{-0.2em}F_{Q}  \hspace*{-0.2em} & \hspace*{-0.2em}\boldsymbol{0}
\hspace*{-0.2em}&\hspace*{-0.2em} \boldsymbol{\mathcal{E}}
\end{array}
\hspace*{-0.2em}\right], \label{aaa}
\end{eqnarray}
with $\boldsymbol{1}_{\bar{n}}\hspace*{-0.2em}=\hspace*{-0.2em}[1,\cdots,1]
\hspace*{-0.2em}\in\hspace*{-0.2em}
\mathds{R}^{\bar{n}}$ and 
\begin{subequations} 
\begin{align}
\hspace*{-0.6em}\boldsymbol{\mathcal{Y}}\hspace*{-0.2em}=&  
\mathrm{Blkdiag}(\widehat{\Omega}_{11},\boldsymbol{0}_{\ell \bar{n} \times \ell \bar{n}}),  \notag \\ 
\hspace*{-0.6em}\bar{\boldsymbol{\mathcal{Y}}}\hspace*{-0.2em}=&  
\mathrm{Blkdiag}(\widehat{\Omega}_{22},\boldsymbol{0}_{\ell \bar{n} \times \ell \bar{n}}), \notag \\
\hspace*{-0.6em}\boldsymbol{\mathcal{E}}\hspace*{-0.2em}=&  
\mathrm{Blkdiag}(\widehat{\Omega}_{33},\boldsymbol{0}_{\ell\widehat{\mathcal{T}}\times \ell
\widehat{\mathcal{T}}}), \notag \\
\hspace*{-0.6em}\widehat{\boldsymbol{\mathcal{X}}}_{+}\hspace*{-0.2em}=&  
\mathrm{Blkdiag}(\widehat{\boldsymbol{X}}_{+},I_{\ell}\otimes\widehat{\boldsymbol{X}}_{+}). \notag
\end{align}
\end{subequations}
Here, $\widehat{\Omega}_{\imath\imath}$ with $\imath\hspace*{-0.2em}\in\hspace*{-0.2em}\mathds{N}_{[1,3]}$ 
is defined in \eqref{7788}. Hence, the feasible solution of this theorem returns the data-based event-triggered 
scheme and the feedback controller \eqref{n116}, 
such that the event-triggered 
LPV system \eqref{n117} is practically exponentially ISS.
\end{theorem}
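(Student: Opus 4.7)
The plan is to mirror, one-for-one, the argument developed for the stabilization case in Theorem \ref{theo4}, only now working with the augmented state $\boldsymbol{\psi}_{k}$, the augmented data matrices in \eqref{n07}, and the tracking Lyapunov matrix $P\hspace*{-0.2em}\in\hspace*{-0.2em}\mathbb{S}^{\bar{n}}_{++}$ returned by Theorem \ref{theo6}. Thus $P$, $F_{Q}$, and $\beta_{4}\hspace*{-0.2em}=\hspace*{-0.2em}\beta_{3}/2$ are treated as prior data inherited from the preceding step of Procedure~\ref{proc1} (adapted to tracking), and the remaining unknowns are $\widehat{\Psi}_{1},\widehat{\Psi}_{2},\hat{\mu},\varepsilon_{4}$ and the multiplier $\boldsymbol{\Xi}$.

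First, I would factor the parameter dependence of $\hat{F}(\boldsymbol{p}_{k})$ exactly as in \eqref{m8}, introducing an auxiliary $F_{Q}$ so that both $\hat{F}(\boldsymbol{p}_{k})$ and $\widehat{\boldsymbol{X}}_{+}\hat{F}(\boldsymbol{p}_{k})$ admit a bi-affine representation in $(I_{\bar{n}},\boldsymbol{p}_{k}\hspace*{-0.2em}\otimes\hspace*{-0.2em}I_{\bar{n}})$ and $(I_{\widehat{\mathcal{T}}},\boldsymbol{p}_{k}\hspace*{-0.2em}\otimes\hspace*{-0.2em}I_{\widehat{\mathcal{T}}})$. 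Using the Kronecker-product identity twice (as in \eqref{m12}) on $\widehat{\boldsymbol{X}}_{+}\hat{F}(\boldsymbol{p}_{k})$, the block matrix in \eqref{n121} is rewritten as a quadratic form
\begin{eqnarray*}
\boldsymbol{M}^{\top}(\boldsymbol{p}_{k})\boldsymbol{\Phi}\boldsymbol{M}(\boldsymbol{p}_{k})\succ\boldsymbol{0},
\end{eqnarray*}
with $\boldsymbol{\Phi}$ given by \eqref{aaa} and $\boldsymbol{M}(\boldsymbol{p}_{k})$ assembled exactly as in \eqref{m32} from the $\boldsymbol{p}_{k}$-independent blocks $\boldsymbol{M}_{\imath\jmath}$ (obtained from \eqref{m30:2}–\eqref{m30:5} by swapping $n\hspace*{-0.2em}\leftrightarrow\hspace*{-0.2em}\bar{n}$ and $\mathcal{T}\hspace*{-0.2em}\leftrightarrow\hspace*{-0.2em}\widehat{\mathcal{T}}$) and the scheduling block $\boldsymbol{\Upsilon}(\boldsymbol{p}_{k})$.

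Next, I would invoke the full-block S-procedure \cite[Theorem 8]{Scherer2001LPV}: introducing the multiplier $\boldsymbol{\Xi}$, the frequency-independent inequality \eqref{m29:1} together with the quadratic constraint \eqref{m29:2} on the scheduling channel and the sign condition $\boldsymbol{\Xi}_{22}\hspace*{-0.2em}\prec\hspace*{-0.2em}\boldsymbol{0}$ jointly imply $\boldsymbol{M}^{\top}(\boldsymbol{p}_{k})\boldsymbol{\Phi}\boldsymbol{M}(\boldsymbol{p}_{k})\succ\boldsymbol{0}$ for every $\boldsymbol{p}_{k}\hspace*{-0.2em}\in\hspace*{-0.2em}\mathds{P}$. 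Because of the sign condition on $\boldsymbol{\Xi}_{22}$, the quadratic-form constraint \eqref{m29:2} becomes concave in $\boldsymbol{p}_{k}$, so it suffices to enforce it at the finitely many vertices of the polytope $\mathds{P}$, yielding a computationally tractable SDP. Feasibility of this SDP therefore certifies \eqref{n121}, which by Theorem \ref{theo7} certifies practical exponential ISS of the event-triggered closed loop \eqref{n117} with the gains \eqref{n15} and the triggering rule \eqref{n120}.

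The bulk of the work is bookkeeping — re-dimensioning every block from $n$ to $\bar{n}$ and from $\mathcal{T}$ to $\widehat{\mathcal{T}}$, and verifying that the tracking perturbation bound $\widehat{\Delta}\widehat{\Delta}^{\top}$ from Assumption \ref{ass2} appears in $\widehat{\Omega}_{22}$ in precisely the position where $\Delta\Delta^{\top}$ appeared in $\Omega_{22}$. The only non-routine point, and the likely obstacle, is making sure that the factorization underlying \eqref{aaa} remains valid when the augmented state mixes the original plant state with the integral compensator: specifically, one must check that the $\widehat{B}_{d}(\boldsymbol{p}_{k})$ extracted from the data (via the argument in Remark \ref{remark3} applied to $\widehat{\boldsymbol{G}}$ and $\widehat{\boldsymbol{X}}_{+}$) is consistent with the affine dependence in \eqref{n04}, so that $\hat{\boldsymbol{\nu}}_{k}$ in \eqref{n118} admits the same data-based representation used to derive the quadratic form. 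Once that consistency is confirmed, the remaining steps transcribe the proof of Theorem \ref{theo4} verbatim, completing the argument.
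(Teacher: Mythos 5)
Your proposal matches the paper's own treatment: the paper proves Theorem \ref{theo8} simply by stating that the argument is the same as that of Theorem \ref{theo4} (factorize $\hat{F}(\boldsymbol{p}_{k})$ via $F_{Q}$, rewrite \eqref{n121} as $\boldsymbol{M}^{\top}(\boldsymbol{p}_{k})\boldsymbol{\Phi}\boldsymbol{M}(\boldsymbol{p}_{k})$ with $\boldsymbol{\Phi}$ as in \eqref{aaa}, and apply the full-block S-procedure with $\boldsymbol{\Xi}_{22}\prec\boldsymbol{0}$ to reduce to the vertices of $\mathds{P}$), with every block re-dimensioned from $n,\mathcal{T}$ to $\bar{n},\widehat{\mathcal{T}}$ and $\boldsymbol{X}_{+},F(\boldsymbol{p}_{k})$ replaced by $\widehat{\boldsymbol{X}}_{+},\hat{F}(\boldsymbol{p}_{k})$, exactly as you describe. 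Your added check on the data-based consistency of $\widehat{B}_{d}(\boldsymbol{p}_{k})$ is careful but does not change the route; the proposal is correct and essentially identical to the paper's.
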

\begin{proof}
    The proof is along the same line as the counterpart of Theorem \ref{theo4}, we omit the details due to limited space. 
\end{proof}

Before ending this subsection, we aim to conclude the  
event-triggered LPV control form data 
for robust reference tracking by the following procedure. 
\begin{procedure}
\label{proc2}
The design procedure of data-driven event-triggered LPV control for 
robust reference tracking. \newline
\noindent 1) Construct an auxiliary integral compensator \eqref{n01}, 
and obtain the augmented LPV systems \eqref{n02}. We determine 
the length $\widehat{\mathcal{T}}$ of collected data in terms of 
$\hat{\theta}$-persistence of excitation criterion in Lemma  
\ref{lemm2}. \newline
\noindent 2) Seek for the feasible solution of Theorem \ref{theo6} and 
obtain the available decision variables $P,F_{Q}$ and 
constant $\beta_{3}$. \newline
\noindent 3) Calculate the feasible solution of Theorem \ref{theo8} 
and verify the stability of augmented event-triggered 
LPV systems
\eqref{n02}. \newline
\noindent 4) Deploy the data-based event-triggered LPV controller 
for reference tracking \eqref{n116}.  
\end{procedure}

\section{Simulation Examples}
\label{section4}

This section aims at verifying the effectiveness of proposed 
theoretical derivations in Section \ref{section3} by three 
numerical examples, that are implemented via \texttt{Matlab 2023a} 
platform in conjunction with \texttt{YALMIP} toolkit 
\cite{Lofberg2004Yalmip} and \texttt{Mosek} solver \cite{Aps2019Mosek}. 
The discretization step is prescribed with $k_{\mathrm{step}}=0.01s$. Thus, 
for the time horizon 
$t\hspace*{-0.2em}\in\hspace*{-0.2em}\mathds{R}_{[0,T_{I}]}$, 
we have the sampling instants satisfy  
$k\hspace*{-0.2em}\in\hspace*{-0.2em}
\mathds{N}_{[0,T_{I}/k_{\mathrm{step}}]}$.
\begin{example}
We consider the LPV systems described by \eqref{e1} with 
the correspond dimensions $n\hspace*{-0.2em}=\hspace*{-0.2em}2$, 
$m\hspace*{-0.2em}=\hspace*{-0.2em}1$, and 
$\ell\hspace*{-0.2em}=\hspace*{-0.2em}2$. Then, the system parameters 
embedded 
in \eqref{e4} are prescribed by 
\begin{eqnarray*}
\hspace*{-0.6em}A_{d0}&\hspace*{-0.8em}=\hspace*{-0.8em}& \hspace*{-0.2em} 
\left[\hspace*{-0.2em}
\begin{array}{cc}
\hspace*{-0.2em}0.2485 
\hspace*{-0.2em} & \hspace*{-0.2em} -1.0355  \\
\hspace*{-0.2em}0.8910   
\hspace*{-0.2em} & \hspace*{-0.2em}0.4065
\end{array}
\hspace*{-0.2em}\right], 
A_{d1}\hspace*{-0.2em}=\hspace*{-0.3em}  
\left[\hspace*{-0.2em}
\begin{array}{cc}
\hspace*{-0.2em}-0.0063
\hspace*{-0.2em} & \hspace*{-0.2em} -0.0938  \\
\hspace*{-0.2em}0.0000  
\hspace*{-0.2em} & \hspace*{-0.2em}0.0188
\end{array}
\hspace*{-0.2em}\right], \notag \\
\hspace*{-0.6em}A_{d2}&\hspace*{-0.8em}=\hspace*{-0.8em}& \hspace*{-0.2em} 
\left[\hspace*{-0.2em}
\begin{array}{cc}
\hspace*{-0.2em}-0.0063 
\hspace*{-0.2em} & \hspace*{-0.2em} -0.0938  \\
\hspace*{-0.2em}0.0000   
\hspace*{-0.2em} & \hspace*{-0.2em}0.0188
\end{array}
\hspace*{-0.2em}\right], 
B_{d0}^{\top}\hspace*{-0.2em}=\hspace*{-0.3em}  
\left[\hspace*{-0.2em}
\begin{array}{cc}
\hspace*{-0.2em}0.3190
\hspace*{-0.2em} & \hspace*{-0.2em} -1.3080 
\end{array}
\hspace*{-0.2em}\right], \notag \\
\hspace*{-0.6em}B_{d1}^{\top}&\hspace*{-0.8em}=\hspace*{-0.8em}& 
\hspace*{-0.2em} 
\left[\hspace*{-0.2em}
\begin{array}{cc}
\hspace*{-0.2em}0.3000 
\hspace*{-0.2em} & \hspace*{-0.2em} 1.4000 
\end{array}
\hspace*{-0.2em}\right], 
B_{d2}^{\top}\hspace*{-0.2em}=\hspace*{-0.3em}  
\left[\hspace*{-0.2em}
\begin{array}{cc}
\hspace*{-0.2em}0.0000
\hspace*{-0.2em} & \hspace*{-0.2em} 0.0000
\end{array}
\hspace*{-0.2em}\right].
\label{ss1}
\end{eqnarray*}
Besides, we represent the scheduling signal set with the form 
$\mathds{P}\hspace*{-0.2em}=\hspace*{-0.2em}[-1,1]\hspace*{-0.2em}
\times\hspace*{-0.2em}[-1,1]$. Along the execution steps of 
Procedure \ref{proc1}, we can calculate the length $\mathcal{T}$ of 
data collection satisfying $\mathcal{T}\hspace*{-0.2em}\geq
\hspace*{-0.2em}23$, and set 
$\mathcal{T}\hspace*{-0.2em}=\hspace*{-0.2em}23$ in practice, 
that is, 
$23$ data samples are generated with random uniform 
distribution $\mathrm{D}_{\mathrm{ru}}(-1,1)$, such that $\mathrm{Rank}(\Theta)=9$ 
satisfies Lemma \ref{lemm1}. We prescribe 
$\varepsilon_{1}\hspace*{-0.2em}=\hspace*{-0.2em}0.01$, 
$\sigma\hspace*{-0.2em}=\hspace*{-0.2em}4$ and 
$\beta_{1}\hspace*{-0.2em}=\hspace*{-0.2em}0.2$ in Theorem \ref{theo1}, 
then solving the semidefinite programming in Theorem \ref{theo2} leads 
to the feasible solution $P=\mathrm{Col}([1.4082,0.1336], 
[0.1336,0.7021])$. Note that, to eliminate  
the numerical conditioning problem in terms of the reversion of
$P$, we supplement the condition $0.1\hspace*{-0.2em}\leq\hspace*{-0.2em}
\mathrm{Trace}(P)\hspace*{-0.2em}\leq\hspace*{-0.2em}10$ 
when executing Procedure \ref{proc1}. To proceed the event-triggered 
control synthesis, we define $\mu=40$, $\varepsilon_{2}=0.001$, 
$\beta_{2}=\beta_{1}/2=0.1$ in Theorem \ref{theo3} and 
$v\hspace*{-0.2em}=\hspace*{-0.2em}0.01$ in 
the triggering logic \eqref{e53}. Based on the feasible $P$ and 
$F_{Q}$, we can obtain the triggering gains 
\begin{eqnarray*}
    \hspace*{-0.6em}\Psi_{1}&\hspace*{-0.8em}=\hspace*{-0.8em}& 
    \hspace*{-0.2em} 
    \left[\hspace*{-0.2em}
    \begin{array}{cc}
    \hspace*{-0.2em}166.7528 
    \hspace*{-0.2em} & \hspace*{-0.2em} -14.2105  \\
    \hspace*{-0.2em}-14.2105   
    \hspace*{-0.2em} & \hspace*{-0.2em}36.4492
    \end{array}
    \hspace*{-0.2em}\right]\hspace*{-0.2em}, 
    \Psi_{2}\hspace*{-0.2em}=\hspace*{-0.3em}  
    \left[\hspace*{-0.2em}
    \begin{array}{cc}
    \hspace*{-0.2em}0.0710
    \hspace*{-0.2em} & \hspace*{-0.2em} 0.0266  \\
    \hspace*{-0.2em}0.0266  
    \hspace*{-0.2em} & \hspace*{-0.2em}0.0174
    \end{array}
    \hspace*{-0.2em}\right]\hspace*{-0.3em}.  
    \label{ss2}
\end{eqnarray*}
Hence, setting the time horizon bound $T_{I}=2s$ and the 
initial state $x_{0}=\mathrm{Col}(2,-2)$, and generating 
the perturbation signal $\boldsymbol{\omega_{k}}$ in an random way satisfying 
$\|\boldsymbol{\omega_{k}}\|\leq\delta=0.1$, can lead to the state 
responses of closed-loop event-triggered LPV system and the 
inter-event intervals of event-triggered LPV controller, which are 
illustrated in Fig. \ref{Fig1}. The effectiveness of  
Procedure \ref{proc1}, corresponding to event-triggered robust 
stabilization of LPV systems from data,    
is accordingly validated by this example.  
\begin{figure}[htbp]
\centerline{
\includegraphics[width=8.5cm]{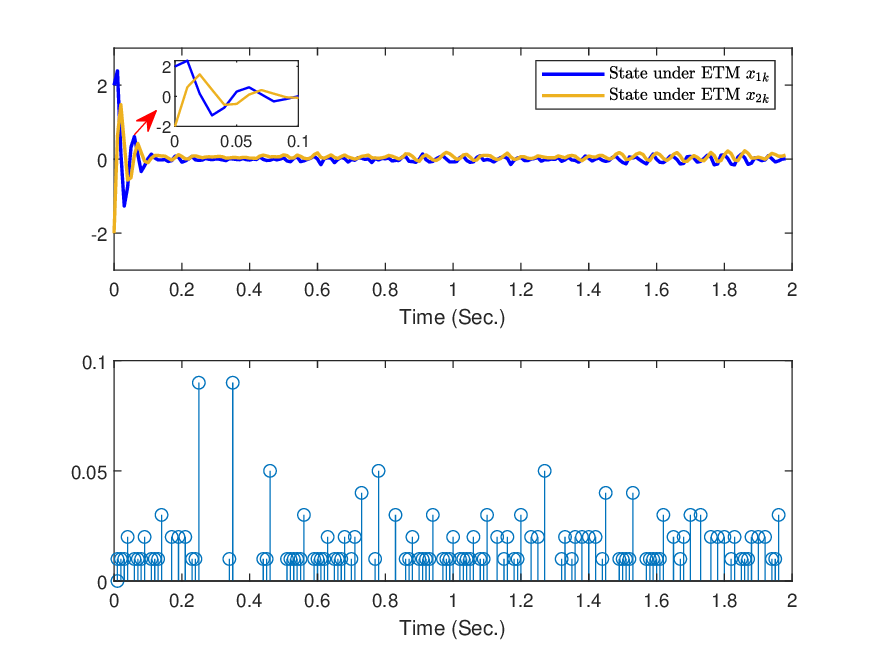} \vspace{-1ex}}
\caption{The state responses and inter-event intervals of 
event-triggered LPV systems \eqref{dd1} actuated by data-driven 
controller \eqref{et111}.}
\label{Fig1}
\end{figure}
\end{example}

In what follows, we aim to perform two examples to validate the 
effectiveness of Procedure 2. Example \ref{example2} corresponds to 
the scenarios of one-dimension output tracking, while Example 
\ref{example3} corresponds to the plane output tracking, which are 
potentially utilized for robotic motion control. 

\begin{example}
    \label{example2}
    We consider the augmented LPV systems \eqref{n02} 
    with 
    the correspond dimensions $\bar{n}\hspace*{-0.2em}=\hspace*{-0.2em}2$, 
    $m\hspace*{-0.2em}=\hspace*{-0.2em}1$, and 
    $\ell\hspace*{-0.2em}=\hspace*{-0.2em}1$. Then, the system parameters 
    embedded 
    in \eqref{n03} are prescribed by 
    \begin{eqnarray*}
    \hspace*{-0.6em}\widehat{A}_{d0}&\hspace*{-0.8em}=\hspace*{-0.8em}& \hspace*{-0.2em} 
    \left[\hspace*{-0.2em}
    \begin{array}{cc}
    \hspace*{-0.2em}0.3023 
    \hspace*{-0.2em} & \hspace*{-0.2em} 0.0000  \\
    \hspace*{-0.2em}0.1885  
    \hspace*{-0.2em} & \hspace*{-0.2em}1.0000
    \end{array}
    \hspace*{-0.2em}\right], 
    \widehat{A}_{d1}\hspace*{-0.2em}=\hspace*{-0.3em}  
    \left[\hspace*{-0.2em}
    \begin{array}{cc}
    \hspace*{-0.2em}0.5469
    \hspace*{-0.2em} & \hspace*{-0.2em} 0.0000  \\
    \hspace*{-0.2em}0.0997  
    \hspace*{-0.2em} & \hspace*{-0.2em}0.0000
    \end{array}
    \hspace*{-0.2em}\right],  \notag \\
    \hspace*{-0.6em}\widehat{B}_{d0}^{\top}&\hspace*{-0.8em}=\hspace*{-0.8em}& 
    \hspace*{-0.2em} 
    \left[\hspace*{-0.2em}
    \begin{array}{cc}
    \hspace*{-0.2em}0.9902 
    \hspace*{-0.2em} & \hspace*{-0.2em} 0.9672 
    \end{array}
    \hspace*{-0.2em}\right], 
    \widehat{B}_{d1}^{\top}\hspace*{-0.2em}=\hspace*{-0.3em}  
    \left[\hspace*{-0.2em}
    \begin{array}{cc}
    \hspace*{-0.2em}0.6914
    \hspace*{-0.2em} & \hspace*{-0.2em} 0.0470
    \end{array}
    \hspace*{-0.2em}\right].
    \label{ss3}
\end{eqnarray*}
Besides, we represent the scheduling signal set with the form 
$\mathds{P}\hspace*{-0.2em}=\hspace*{-0.2em}[-1,1]$. \hspace*{-0.1em}Along the 
execution steps of 
Procedure \ref{proc2}, we have the length $\widehat{\mathcal{T}}$ of 
data collection satisfying $\widehat{\mathcal{T}}\hspace*{-0.2em}\geq
\hspace*{-0.2em}11$, and set 
$\widehat{\mathcal{T}}\hspace*{-0.2em}=
\hspace*{-0.2em}17$ in practice, such that  
$\mathrm{Rank}(\widehat{\Theta})=6$ 
satisfies Lemma \ref{lemm2}. We choose 
$\varepsilon_{3}\hspace*{-0.2em}=\hspace*{-0.2em}0.01$, 
$\hat{\sigma}\hspace*{-0.2em}=\hspace*{-0.2em}4$ and 
$\beta_{3}\hspace*{-0.2em}=\hspace*{-0.2em}0.2$ in Theorem \ref{theo5},
then solving the semidefinite programming in Theorem \ref{theo6}
leads to $P=\mathrm{Col}([0.0692,-0.0210], 
[-0.0210,0.0559])$, acting as a feasible solution for control gain 
scheduling. To eliminate  
the numerical conditioning problem in terms of the reversion of
$P$, we supplement the condition $0.1\hspace*{-0.2em}\leq\hspace*{-0.2em}
\mathrm{Trace}(P)\hspace*{-0.2em}\leq\hspace*{-0.2em}10$ 
when executing Procedure \ref{proc2}. We define 
$\hat{\mu}=9$, $\varepsilon_{4}=0.001$, 
$\beta_{4}=\beta_{3}/2=0.1$ in Theorem \ref{theo5} and 
$\hat{v}\hspace*{-0.2em}=\hspace*{-0.2em}20$ in 
the triggering logic \eqref{n120}. Based on the feasible $P$ and 
$F_{Q}$, we can obtain the triggering gains
\begin{eqnarray*}
    \hspace*{-0.6em}\widehat{\Psi}_{1}&\hspace*{-0.8em}=\hspace*{-0.8em}& 
    \hspace*{-0.2em} 
    \left[\hspace*{-0.2em}
    \begin{array}{cc}
    \hspace*{-0.2em}550.3198 
    \hspace*{-0.2em} & \hspace*{-0.2em} 260.2914  \\
    \hspace*{-0.2em}260.2914   
    \hspace*{-0.2em} & \hspace*{-0.2em}991.6524
    \end{array}
    \hspace*{-0.2em}\right]\hspace*{-0.2em}, 
    \widehat{\Psi}_{2}\hspace*{-0.2em}=\hspace*{-0.3em}  
    \left[\hspace*{-0.2em}
    \begin{array}{cc}
    \hspace*{-0.2em}0.2990
    \hspace*{-0.2em} & \hspace*{-0.2em} 0.0534  \\
    \hspace*{-0.2em}0.0534  
    \hspace*{-0.2em} & \hspace*{-0.2em}0.2027
    \end{array}
    \hspace*{-0.2em}\right]\hspace*{-0.3em}.  
    \label{ss4}
\end{eqnarray*}
Hence, we set the time horizon bound $T_{I}=6s$ and the 
initial state $\boldsymbol{\psi}_{0}=\mathrm{Col}(1,1)$, and 
generate 
the perturbation signal $\boldsymbol{\omega_{k}}$ in an random way 
satisfying 
$\|\boldsymbol{\boldsymbol{\varpi}_{k}}\|\hspace*{-0.2em}\leq
\hspace*{-0.2em}\hat{\delta}=1.01$. Within this example, we 
consider two representative one-dimension signal to conduct the 
robust trajectory tracking simulations. First, we consider 
a sinusoidal signal $\hspace*{-0.1em}\boldsymbol{r}_{k}\hspace*{-0.2em}=
\hspace*{-0.2em}\sin(4hk), 
h\hspace*{-0.2em}=\hspace*{-0.2em}2\pi T_{I}/k_{\mathrm{step}}$ and 
obtain the tracking responses of closed-loop LPV system and 
the inter-event intervals of event-triggered LPV controller, that are 
illustrated by Fig. \ref{Fig2}. Besides, we consider the square-wave 
signal $\hspace*{-0.1em}\boldsymbol{r}_{k}$, whose minimum and maximum 
values are $-1$ and $1$, respectively, and 
obtain the tracking responses of closed-loop LPV system and
the inter-event intervals of event-triggered LPV controller, that are 
illustrated by Fig. \ref{Fig3}. Accordingly, the effectiveness 
of Procedure \ref{proc2}, corresponding to event-triggered robust
tracking of scalar signals for LPV systems from data, can be 
validate by this example.  
\begin{figure}[htbp]
    \centerline{
    \includegraphics[width=8.5cm]{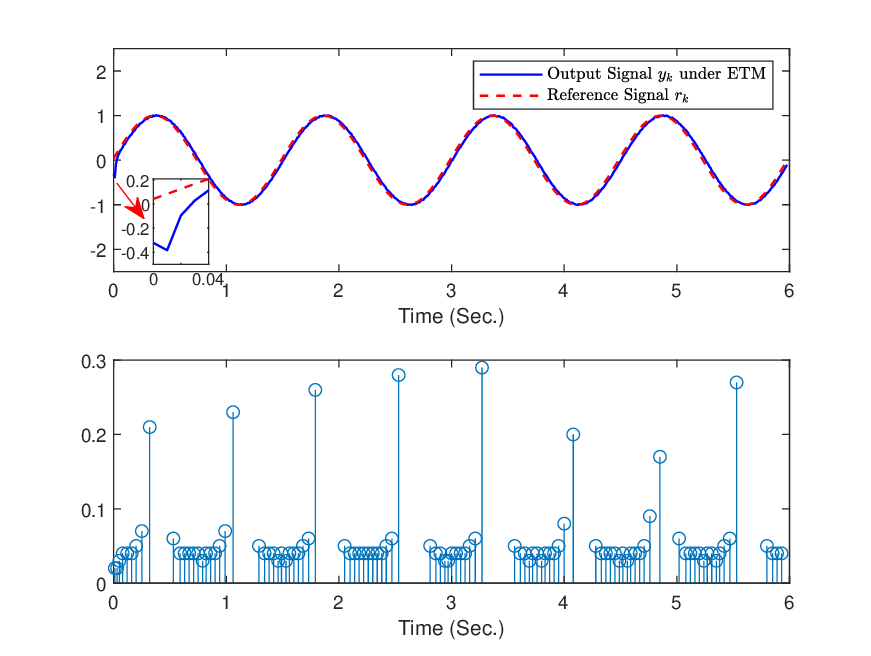} \vspace{-1ex}}
    \caption{The robust tracking responses and inter-event intervals of 
    event-triggered LPV systems \eqref{n02} actuated by data-driven 
    controller \eqref{n116}-Case 1: sinusoidal reference signal.}
    \label{Fig2}
\end{figure}
\begin{figure}[htbp]
\centerline{
\includegraphics[width=8.5cm]{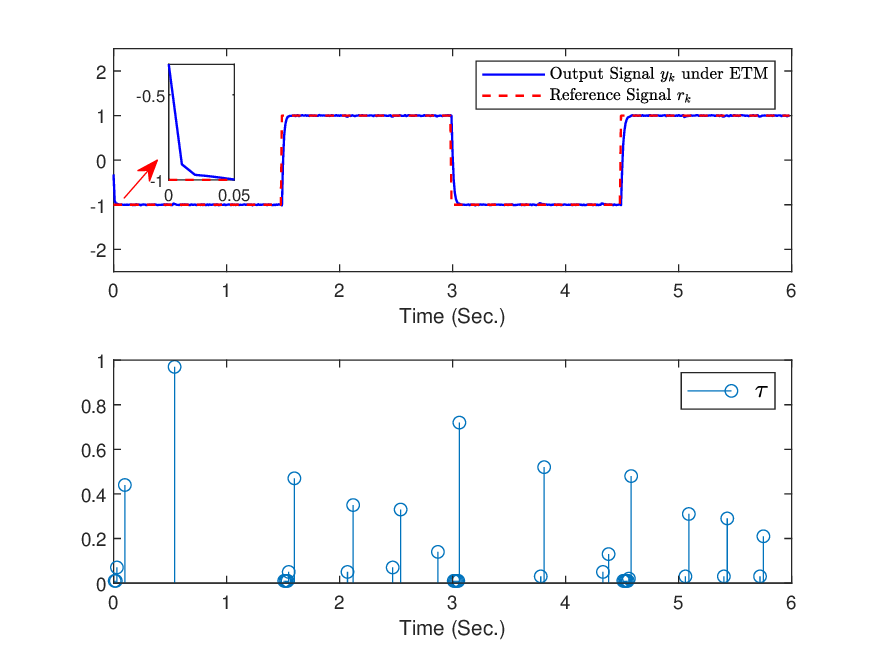} \vspace{-1ex}}
\caption{The robust tracking responses and inter-event intervals of 
event-triggered LPV systems \eqref{n02} actuated by data-driven 
controller \eqref{n116}-Case 2: square-wave signal reference signal.
}
\label{Fig3}
\end{figure}
\end{example}

\begin{example}
    \label{example3}

We consider the augmented LPV systems \eqref{n02} with 
the correspond dimensions $\bar{n}\hspace*{-0.2em}=\hspace*{-0.2em}3$, 
$m\hspace*{-0.2em}=\hspace*{-0.2em}2$, and 
$\ell\hspace*{-0.2em}=\hspace*{-0.2em}1$. Then, the system parameters 
embedded 
in \eqref{n03} are prescribed by 
\begin{eqnarray*}
\hspace*{-0.6em}\widehat{A}_{d0}&\hspace*{-0.9em}=\hspace*{-0.9em}& \hspace*{-0.2em} 
\left[\hspace*{-0.2em}
\begin{array}{ccc}
\hspace*{-0.2em}0.5387 
\hspace*{-0.2em} & \hspace*{-0.2em} 0.0000  
\hspace*{-0.2em} & \hspace*{-0.2em} 0.0000\\
\hspace*{-0.2em}0.2466  
\hspace*{-0.2em} & \hspace*{-0.2em}1.0000 
\hspace*{-0.2em} & \hspace*{-0.2em} 0.0000 \\
\hspace*{-0.2em}0.3765  
\hspace*{-0.2em} & \hspace*{-0.2em}0.0000 
\hspace*{-0.2em} & \hspace*{-0.2em} 1.0000
\end{array}
\hspace*{-0.2em}\right]\hspace*{-0.3em}, 
\widehat{B}_{d0}\hspace*{-0.2em}=\hspace*{-0.3em}  
\left[\hspace*{-0.2em}
\begin{array}{cc}
\hspace*{-0.2em}0.5450
\hspace*{-0.2em} & \hspace*{-0.2em} 0.2260  \\
\hspace*{-0.2em}0.6290  
\hspace*{-0.2em} & \hspace*{-0.2em}0.0160 \\
\hspace*{-0.2em}0.9022  
\hspace*{-0.2em} & \hspace*{-0.2em}0.9636 
\end{array}
\hspace*{-0.2em}\right]\hspace*{-0.3em},  \notag \\
\hspace*{-0.6em}\widehat{A}_{d0}&\hspace*{-0.9em}=\hspace*{-0.9em}& \hspace*{-0.2em} 
\left[\hspace*{-0.2em}
\begin{array}{ccc}
\hspace*{-0.2em}0.8871 
\hspace*{-0.2em} & \hspace*{-0.2em} 0.0000  
\hspace*{-0.2em} & \hspace*{-0.2em} 0.0000\\
\hspace*{-0.2em}0.8401  
\hspace*{-0.2em} & \hspace*{-0.2em}0.0000 
\hspace*{-0.2em} & \hspace*{-0.2em} 0.0000 \\
\hspace*{-0.2em}0.8190  
\hspace*{-0.2em} & \hspace*{-0.2em}0.0000 
\hspace*{-0.2em} & \hspace*{-0.2em} 0.0000
\end{array}
\hspace*{-0.2em}\right]\hspace*{-0.3em}, 
\widehat{B}_{d0}\hspace*{-0.2em}=\hspace*{-0.3em}  
\left[\hspace*{-0.2em}
\begin{array}{cc}
\hspace*{-0.2em}0.5289
\hspace*{-0.2em} & \hspace*{-0.2em} 0.2227  \\
\hspace*{-0.2em}0.2676  
\hspace*{-0.2em} & \hspace*{-0.2em}0.8512 \\
\hspace*{-0.2em}0.7303  
\hspace*{-0.2em} & \hspace*{-0.2em}0.4969 
\end{array}
\hspace*{-0.2em}\right]\hspace*{-0.3em}.
\label{ss5}
\end{eqnarray*}
Besides, we represent the scheduling signal set with the form 
$\mathds{P}\hspace*{-0.2em}=\hspace*{-0.2em}[-1,1]$. 
\hspace*{-0.1em}Along the 
execution steps of 
Procedure \ref{proc2}, we have the length $\widehat{\mathcal{T}}$ of 
data collection satisfying $\widehat{\mathcal{T}}\hspace*{-0.2em}\geq
\hspace*{-0.2em}29$, and set 
$\widehat{\mathcal{T}}\hspace*{-0.2em}=
\hspace*{-0.2em}29$ in practice, such that  
$\mathrm{Rank}(\widehat{\Theta})=10$ 
satisfies Lemma \ref{lemm2}. We choose 
$\varepsilon_{3}\hspace*{-0.2em}=\hspace*{-0.2em}0.0001$, 
$\hat{\sigma}\hspace*{-0.2em}=\hspace*{-0.2em}4$ and 
$\beta_{3}\hspace*{-0.2em}=\hspace*{-0.2em}0.5$ in Theorem \ref{theo5},
then solving the semidefinite programming in Theorem \ref{theo6}
leads to a feasible solution 
\begin{eqnarray*}
\hspace*{-0.6em}P&\hspace*{-0.9em}=\hspace*{-0.9em}& \hspace*{-0.2em} 
\left[\hspace*{-0.2em}
\begin{array}{ccc}
\hspace*{-0.2em}0.0486 
\hspace*{-0.2em} & \hspace*{-0.2em} -0.0022  
\hspace*{-0.2em} & \hspace*{-0.2em} -0.0065\\
\hspace*{-0.2em}-0.0022  
\hspace*{-0.2em} & \hspace*{-0.2em}0.0248 
\hspace*{-0.2em} & \hspace*{-0.2em} 0.0000 \\
\hspace*{-0.2em}-0.0065  
\hspace*{-0.2em} & \hspace*{-0.2em}0.0000 
\hspace*{-0.2em} & \hspace*{-0.2em} 0.0266
\end{array}
\hspace*{-0.2em}\right]\hspace*{-0.3em}.
\end{eqnarray*}
To eliminate  
the numerical conditioning problem in terms of the reversion of
$P$, we supplement $0.1\hspace*{-0.2em}\leq\hspace*{-0.2em}
\mathrm{Trace}(P)\hspace*{-0.2em}\leq\hspace*{-0.2em}10$ while 
executing Procedure \ref{proc2}. We define 
$\hat{\mu}=15$, $\varepsilon_{4}=0.001$, 
$\beta_{4}=\beta_{3}/2=0.1$ in Theorem \ref{theo5} and 
$\hat{v}\hspace*{-0.2em}=\hspace*{-0.2em}500$ in 
the triggering logic \eqref{n120}. Based on the feasible $P$ and 
$F_{Q}$, we can obtain the triggering gains
\begin{eqnarray*}
    \hspace*{-0.6em}\widehat{\Psi}_{1}&\hspace*{-0.8em}=\hspace*{-0.8em}& 
    \hspace*{-0.2em} 
    \left[\hspace*{-0.2em}
    \begin{array}{ccc}
    \hspace*{-0.2em}24081
    \hspace*{-0.2em} & \hspace*{-0.2em} 4135.3  
    \hspace*{-0.2em} & \hspace*{-0.2em} 12237\\
    \hspace*{-0.2em}4153.3   
    \hspace*{-0.2em} & \hspace*{-0.2em}6953.3
    \hspace*{-0.2em} & \hspace*{-0.2em} -2002.4\\
    \hspace*{-0.2em}12237  
    \hspace*{-0.2em} & \hspace*{-0.2em}-2002.4
    \hspace*{-0.2em} & \hspace*{-0.2em} 6577.3 
    \end{array}
    \hspace*{-0.2em}\right]\hspace*{-0.2em}, \notag \\
    \hspace*{-0.6em}\widehat{\Psi}_{2}
    &\hspace*{-0.8em}=\hspace*{-0.8em}& 
    \hspace*{-0.2em}   
    \left[\hspace*{-0.2em}
    \begin{array}{ccc}
    \hspace*{-0.2em}41.4812 
    \hspace*{-0.2em} & \hspace*{-0.2em} -1.9599  
    \hspace*{-0.2em} & \hspace*{-0.2em} -5.5875\\
    \hspace*{-0.2em}-1.9599   
    \hspace*{-0.2em} & \hspace*{-0.2em}20.5169 
    \hspace*{-0.2em} & \hspace*{-0.2em} -1.0563\\
    \hspace*{-0.2em}-5.5875   
    \hspace*{-0.2em} & \hspace*{-0.2em}-1.0563
    \hspace*{-0.2em} & \hspace*{-0.2em} 19.9919
    \end{array}
    \hspace*{-0.2em}\right]\hspace*{-0.2em}. 
    \label{ss6}
\end{eqnarray*}
Hence, setting the time horizon bound $T_{I}=30s$ and producing 
the perturbation signal $\boldsymbol{\omega_{k}}$ in an random way, 
which leads to the augmented perturbation $\boldsymbol{\varpi}_{k}$ 
satisfying $\|\boldsymbol{\varpi_{k}}\|
\hspace*{-0.2em}\leq\hspace*{-0.2em}\hat{\delta}\hspace*{-0.2em}=2.51$ 
for two different cases 
$\boldsymbol{r}_{k}\hspace*{-0.2em}=
\hspace*{-0.2em}\mathrm{Col}(\boldsymbol{x}_{r,k},
\boldsymbol{y}_{r,k})$. First, we consider a 
circle reference trajectory $\boldsymbol{x}_{r,k}^{2}
\hspace*{-0.2em}+\hspace*{-0.2em}\boldsymbol{y}_{r,k}^{2}
\hspace*{-0.2em}=\hspace*{-0.2em}2.5^{2}$. The initial value of 
augmented state in \eqref{n02} is specified by 
$\boldsymbol{\psi}_{0}\hspace*{-0.2em}=
\hspace*{-0.2em}\mathrm{Col}(\boldsymbol{x}_{0},
\boldsymbol{\chi}_{0})\hspace*{-0.2em}=
\mathrm{Col}(3,-2,3)$. Accordingly, we can obtain the circle reference 
tracking responses 
of closed-loop LPV systems and the inter-event 
intervals of event-triggered mechanism \eqref{n120} illustrated by  
Subfig. \ref{F7: sub_figure1}-\ref{F7: sub_figure4}. 
Besides, we consider a shape-$8$ reference trajectory
$\boldsymbol{y}_{k}\hspace*{-0.2em}=\hspace*{-0.2em}2\boldsymbol{x}_{k}f_{q}(\boldsymbol{x}_{k})$ 
with $f_{q}(\boldsymbol{x}_{k})\hspace*{-0.2em}=\hspace*{-0.2em}
\sqrt{1\hspace*{-0.2em}-\hspace*{-0.2em}
(\boldsymbol{x}_{k}/2.5)^{2}}$. The
initial value of 
augmented state in \eqref{n02} is prescribed by
$\boldsymbol{\psi}_{0}\hspace*{-0.2em}=$
$\mathrm{Col}(\boldsymbol{x}_{0},
\boldsymbol{\chi}_{0})\hspace*{-0.2em}=
\hspace*{-0.2em}\mathrm{Col}(1,1,1)$. 
We obtain the tracking responses 
of closed-loop LPV systems and inter-event intervals 
of event-triggered mechanism \eqref{n120} illustrated by  
Subfig. \ref{F7: sub_figure5}-\ref{F7: sub_figure8}. With 
regard to Fig. \ref{F7}, we separate the simulation time horizon 
into three parts, that is, T1($0-10s$), T2($10-20s$), and 
T3($20-30s$), to clearly mirror the plane tracking responses. 
Therefore, the   
effectiveness 
of Procedure \ref{proc2}, corresponding to event-triggered robust
tracking of plane trajectories for LPV systems from data, can be 
validate by this example.\footnote{ 
The codes are available by 
https://github.com/Renjie-Ma/Learning-Event-Triggered-Controller-LPV-Systems-From-Data.git.  
}

\begin{figure*}[!htb]
\centering
\subfigure[The circle trajectory (T1)]{\includegraphics[width=0.24\hsize,height=0.18\hsize]{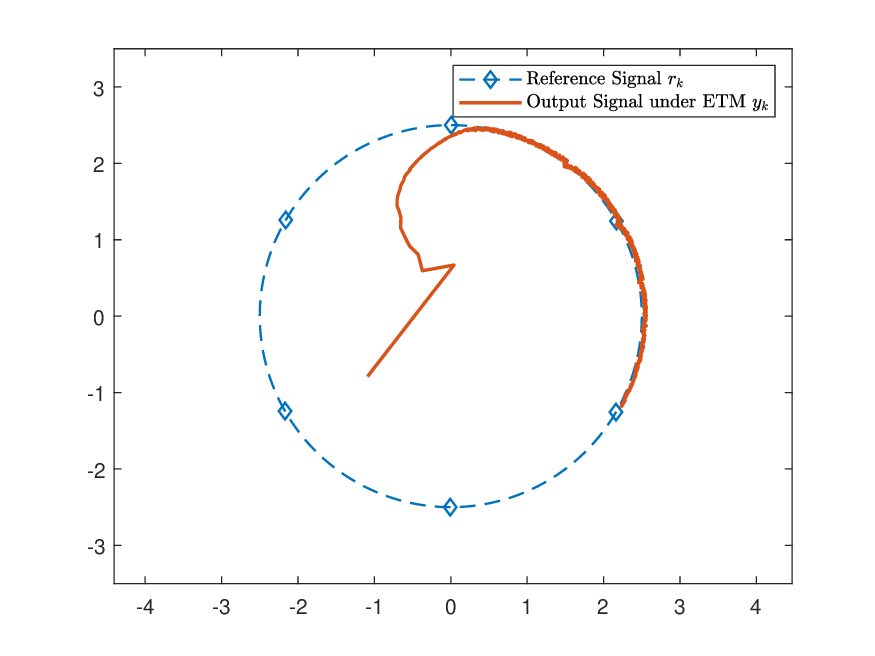}\label{F7: sub_figure1}}
\subfigure[The circle trajectory (T2)]{\includegraphics[width=0.24\hsize,height=0.18\hsize]{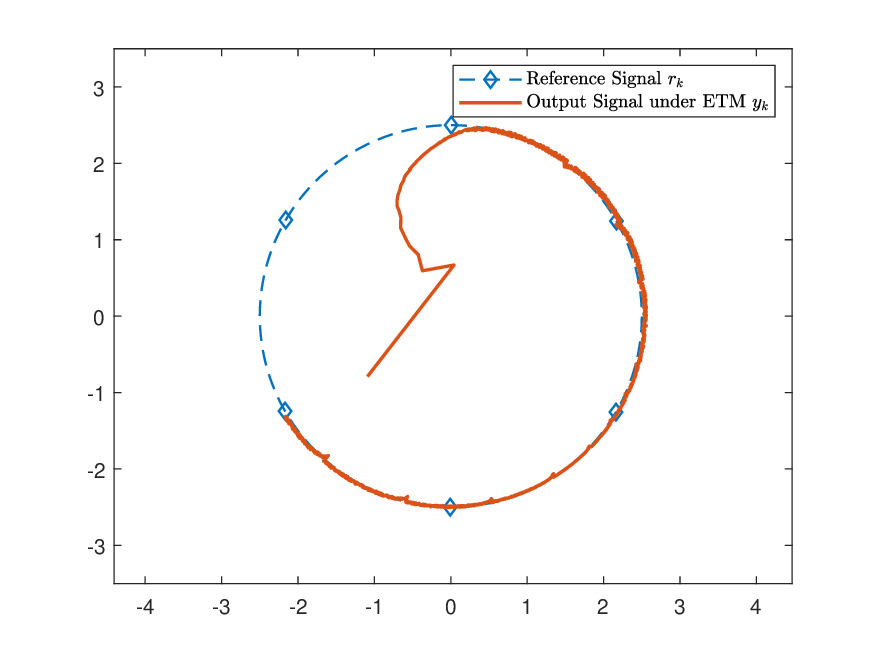}\label{F7: sub_figure2}}
\subfigure[The circle trajectory (T3)]{\includegraphics[width=0.24\hsize,height=0.18\hsize]{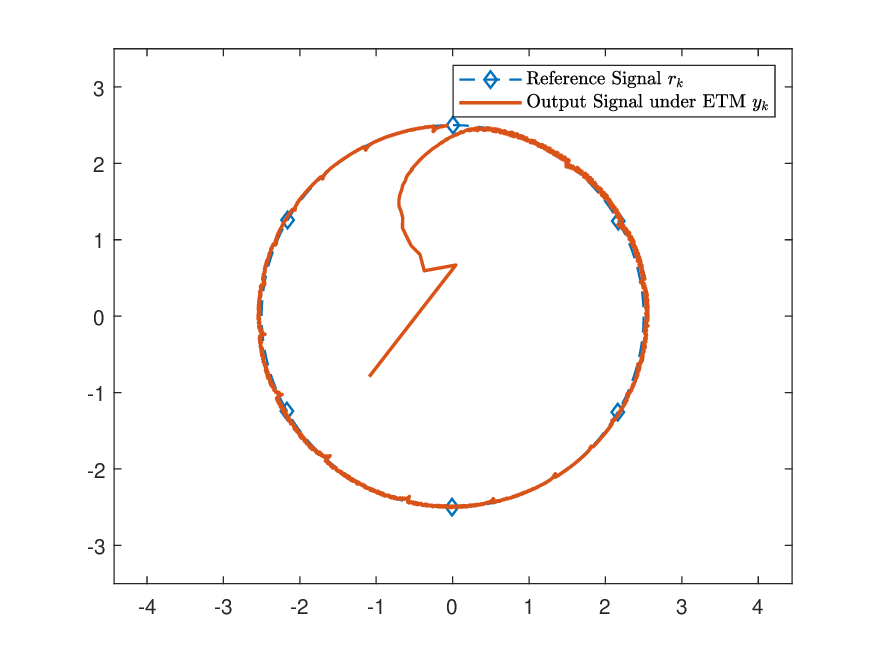}\label{F7: sub_figure3}}
\subfigure[The event-trigger instants]{\includegraphics[width=0.24\hsize,height=0.18\hsize]{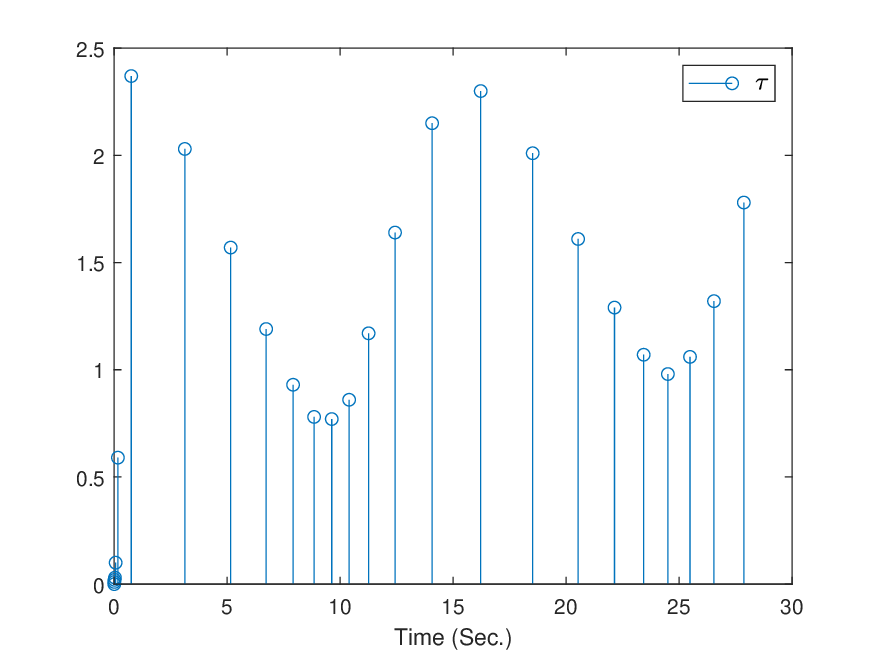}\label{F7: sub_figure4}}
    
\vspace{0.1cm}
\subfigure[The shape-8 trajectory (T1)]{\includegraphics[width=0.24\hsize,height=0.18\hsize]{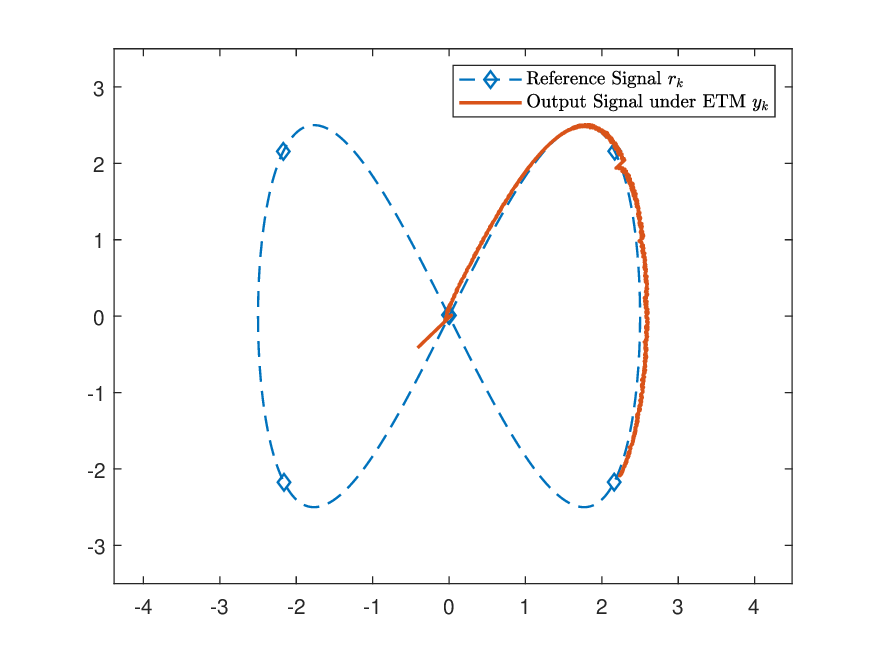}\label{F7: sub_figure5}}
\subfigure[The shape-8 trajectory (T2)]{\includegraphics[width=0.24\hsize,height=0.18\hsize]{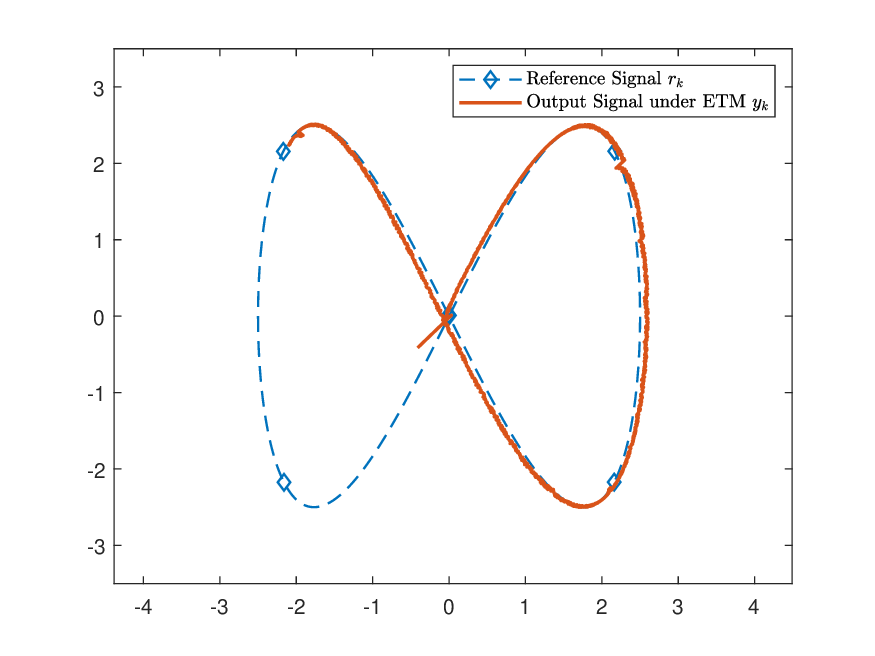}\label{F7: sub_figure6}}
\subfigure[The shape-8 trajectory (T3)]{\includegraphics[width=0.24\hsize,height=0.18\hsize]{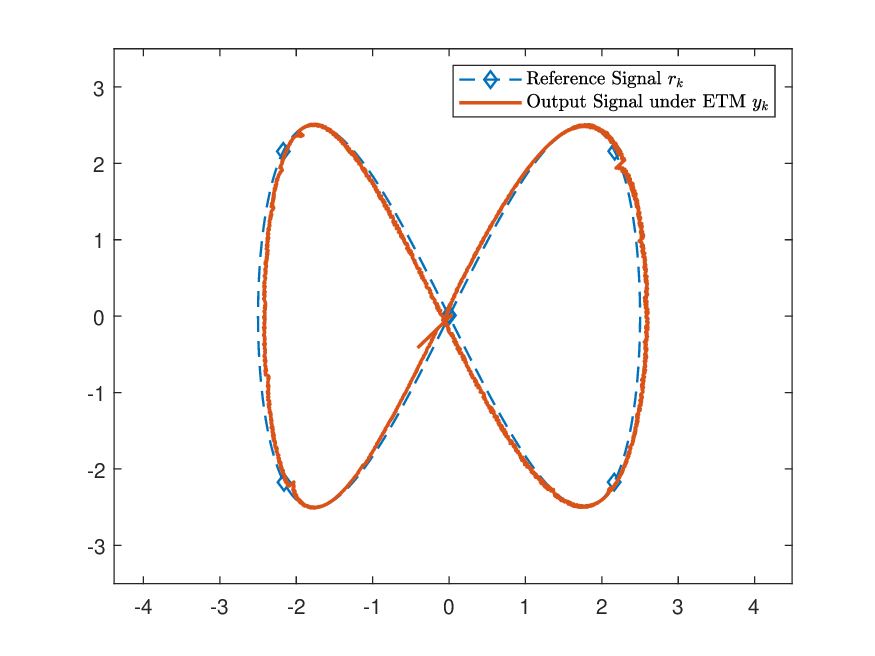}\label{F7: sub_figure7}}
\subfigure[The event-trigger instants]{\includegraphics[width=0.24\hsize,height=0.18\hsize]{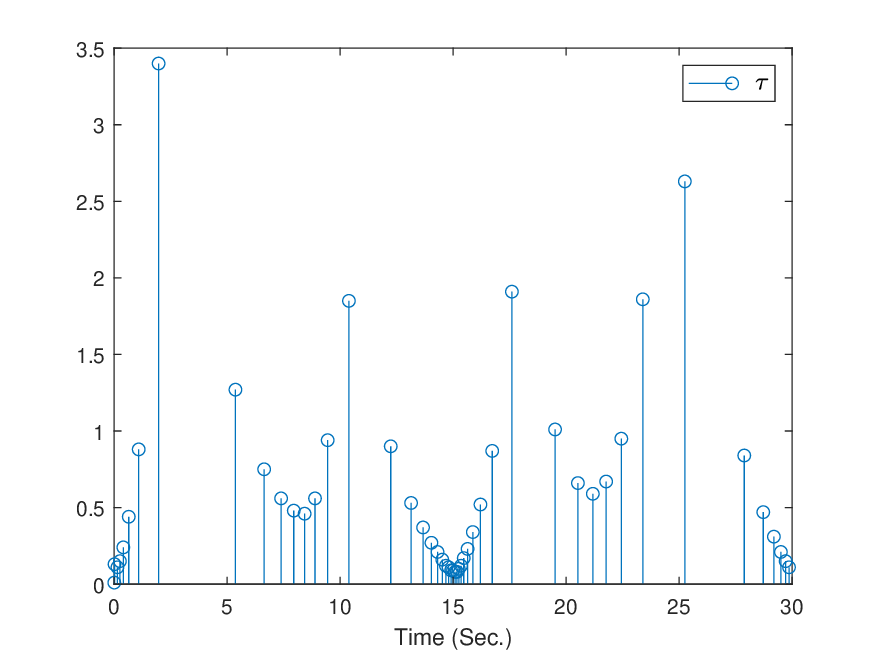}\label{F7: sub_figure8}}
\caption{The robust tracking responses and inter-event intervals of 
event-triggered LPV systems \eqref{n02} actuated by data-driven 
controller \eqref{n116}-Case 1: circle reference and Case 2: 
shape-8 reference. }
\label{F7}
\end{figure*}
\end{example}

\section{Conclusion}
\label{section5}
This paper has investigated event-triggered robust 
stabilization and reference tracking control 
strategies for LPV systems from available data. 
First, we established the condition of
$\theta$-persistence of excitation for LPV systems, 
which acts as the prerequisite of direct 
data-driven LPV control synthesis. Then, for the perturbed 
LPV systems, we explored the sufficient 
data-dependent conditions on ensuring the closed-loop 
stability and established the convex programmings 
for control synthesis in view of Petersen's lemma and 
full-block S-procedure. 
Besides, 
we developed an event-triggered mechanism and connected 
the triggering parameters with stability certificates, 
leading to the feasible data-driven event-triggered 
LPV control procedure. In addition, we extended the 
theoretical results to the scenario of robust reference 
tracking, with the aid of integral compensator, 
and synthesized the feasible data-driven event-triggered 
LPV tracking controller.  
Finally, we performed a group of numerical simulations 
to verify the effectiveness and the applicability of 
our theoretical derivations herein. 

\section*{\textsf{References}}

\end{document}